\numberwithin{equation}{section}
\newtheorem{definition}{Definition}[section]
\newtheorem{lemma}[definition]{Lemma}
\newtheorem{theorem}[definition]{Theorem}
\newtheorem{proposition}[definition]{Proposition}
\newtheorem{remarkth}[definition]{Remark}
\newtheorem{example}[definition]{Example}
\newenvironment{remark}{\begin{remarkth}\upshape}{\hfill$\diamond$\end{remarkth}}
\renewcommand{\emph}[1]{{\bfseries\itshape{#1}}}
\newcommand{\ltilde}[3][0]{\altura=0 \advance\altura by #1
           \ancho=#2 \anchom=\ancho \divide\anchom by 2
           \anchoa=\ancho \divide\anchoa by 4
           \anchob=\anchom \advance\anchob by \anchoa
           \kern-3pt \begin{array}[b]{c}
           \begin{picture}(1,1)(\anchom,-\altura)
        \qbezier(0,2)(\anchoa,5)(\anchom,2)
        \qbezier(\anchom,2)(\anchob,-1)(\ancho,4)
        \qbezier(0,2)(\anchoa,4.5)(\anchom,1.8)
        \qbezier(\anchom,1.8)(\anchob,-1.5)(\ancho,4)
       \end{picture} \\[-4pt]{#3}
                       \end{array} \kern-4pt    }
\newcommand{\lhat}[3][0]{\altura=0 \advance\altura by #1
           \ancho=#2 \anchom=\ancho \divide\anchom by 2
           \anchoa=\ancho \divide\anchoa by 4
           \anchob=\anchom \advance\anchob by \anchoa
           \kern-3pt \begin{array}[b]{c}
           \begin{picture}(1,1)(\anchom,-\altura)
        \qbezier(0,2)(\anchoa,4)(\anchom,6)
        \qbezier(\anchom,6)(\anchob,4)(\ancho,2)
        \qbezier(0,2)(\anchoa,3.8)(\anchom,5.6)
        \qbezier(\anchom,5.6)(\anchob,3.8)(\ancho,2)
       \end{picture} \\[-4pt] {#3}
                       \end{array} \kern-4pt    }
\newcommand\prol{\@ifstar{\@proldf}{\@prolpf}}  
\def\@prolpf{\@ifnextchar[{\@prolpf@wrt}{\@prolpf@}}
\def\@prolpf@wrt[#1]#2{\@ifnextchar[{\@prolpf@wrt@at{#1}{#2}}{\@prolpf@wrt@{#1}{#2}}}
\def\@prolpf@wrt@at#1#2[#3]{\prolsymbol^{#1}_{#3}#2}
\def\@prolpf@wrt@#1#2{\prolsymbol^{#1}#2}
\def\@prolpf@#1{\@ifnextchar[{\@prolpf@at{#1}}{\@prolpf@@{#1}}}
\def\@prolpf@at#1[#2]{\prolsymbol_{#2}#1}
\def\@prolpf@@#1{\prolsymbol#1}
\def\@proldf{\@ifnextchar[{\@proldf@wrt}{\@proldf@}}
\def\@proldf@wrt[#1]#2{\@ifnextchar[{\@proldf@wrt@at{#1}{#2}}{\@proldf@wrt@{#1}{#2}}}
\def\@proldf@wrt@at#1#2[#3]{\prolsymbol^{*#1}_{#3}#2}
\def\@proldf@wrt@#1#2{\prolsymbol^{*#1}#2}
\def\@proldf@#1{\@ifnextchar[{\@proldf@at{#1}}{\@proldf@@{#1}}}
\def\@proldf@at#1[#2]{\prolsymbol^*_{#2}#1}
\def\@proldf@@#1{\prolsymbol^*#1}
\def\prolsymbol{\mathcal{T}}
\newcommand{\pr}{\mathrm{pr}}
\begin{document}

\title[A new geometric formulation of Hamiltonian Classical Field theories]{A new canonical affine BRACKET formulation of Hamiltonian Classical Field theories of first order}

\author[F. Gay-Balmaz]{Fran\c{c}ois Gay-Balmaz}
\address{F. Gay-Balmaz:
Laboratoire de Meteorologie Dynamique \\ 
Ecole Normale Superieure, CNRS, F-75231 \\
Paris, France}
\email{gaybalma@lmd.ens.fr}

\author[J.\ C.\ Marrero]{Juan C.\ Marrero}
\address{J.\ C.\  Marrero:
ULL-CSIC Geometr\'{\i}a Diferencial y Mec\'anica Geom\'etrica\\
Departamento de Mate\-m\'a\-ti\-cas, Estad{\'\i}stica e IO, Secci\'on de
Ma\-te\-m\'a\-ti\-cas, Universidad de La Laguna \\
La Laguna, Tenerife, Canary Islands, Spain} 
\email{jcmarrer@ull.edu.es}

\author[N. Mart{\'\i}nez]{Nicol\'as Mart{\'\i}nez}
\address{N. Mart{\'\i}nez: 
Departamento de Matematicas, Universidad Nacional de Colombia \\ 
Bogot\'a, Colombia}
\email{nmartineza@unal.edu.co}

\thanks{J.C.M. thanks H. Bursztyn, A. Cabrera, S. Capriotti and D. Iglesias for stimulating discussions on the topics of this paper. J.C.M. has been partially supported by Ministerio de Econom\'{i}a y Competitividad (MINECO, Spain) under grant PGC2018-098265-B-C32}




\begin{abstract}
It has been a long standing question how to extend, in the finite-dimensional setting, the canonical Poisson bracket formulation from classical mechanics to classical field theories, in a completely general, intrinsic, and canonical way. In this paper, we provide an answer to this question by presenting a new completely canonical bracket formulation of Hamiltonian Classical Field Theories of first order on an arbitrary configuration bundle. It is obtained via the construction of the appropriate field-theoretic analogues of the Hamiltonian vector field and of the space of observables, via the introduction of a suitable canonical Lie algebra structure on the space of currents (the observables in field theories). This Lie algebra structure is shown to have a representation on the affine space of Hamiltonian sections, which yields an affine analogue to the Jacobi identity for our bracket. The construction is analogous to the canonical Poisson formulation of Hamiltonian systems although the nature of our formulation is linear-affine and not bilinear as the standard Poisson bracket. This is consistent with the fact that the space of currents and Hamiltonian sections are respectively, linear and affine. Our setting is illustrated with some examples including Continuum Mechanics and Yang-Mills theory.
\end{abstract}

\vspace{1cm}

\maketitle

\section{Introduction}

It is well-known that the Hamilton equations of classical mechanics are naturally formulated in terms of \textit{canonical geometric structures}, namely, canonical symplectic forms and canonical Poisson brackets. Given the configuration manifold $Q$ of the mechanical system, its  momentum phase space $T^*Q$ carries a canonical symplectic form which induces a vector bundle isomorphism $\sharp: T^*(T^*Q) \to T(T^*Q)$. Given the Hamiltonian $H \in C^\infty(T^*Q)$ of the system, the associated Hamilton equations of motion are determined by the Hamiltonian vector field $X_H$ intrinsically defined in terms of the canonical symplectic form as $X_H= \sharp dH$, i.e., we have the commutative diagram
\begin{equation}\label{diagram-1}
\xymatrix{T^*(T^*Q)\ar[rr]^{\sharp}&&T(T^*Q)\\&
T^*Q\ar[ul]^{dH}\ar[ur]_{X_H}&}
\end{equation}  
The time evolution of an observable $F\in C^\infty(T^*Q)$ along a solution $s: I \subseteq \mathbb{R} \to T^*Q$ of Hamilton's equations, is given by
\begin{equation}\label{Poisson_formulation} 
\frac{d}{dt}(F\circ s)= \{F, H\}\circ s,
\end{equation} 
where  $\{\cdot, \cdot\}: C^{\infty}(T^*Q) \times C^{\infty}(T^*Q) \to C^{\infty}(T^*Q)$ is the canonical Poisson bracket defined by 
\begin{equation}\label{Canonical_Poisson} 
\{F, G\}=\left\langle d F , X_G \right\rangle= \left\langle d F , \sharp dG \right\rangle = -\left\langle\sharp dF, dG \right\rangle.
\end{equation} 
Conversely, if a curve $s: I \subseteq \mathbb{R} \to T^*Q$ satisfies \eqref{Poisson_formulation} for any observables $F$, then $s$ is a solution of Hamilton's equations (see, for instance, \cite{AbMa,LeRo,LiMa}).
Recall that \eqref{Canonical_Poisson} defines a Lie algebra structure on $C^\infty(M)$, thereby satisfying the Jacobi identity
\begin{equation}\label{Jacobi} 
\{\{F, G\}, H\}=\{F, \{G, H\}\}- \{G, \{F, H\}\}.
\end{equation}
The canonical Poisson formulation \eqref{Poisson_formulation} of Hamilton's equation has been at the origin of many developments in the understanding of the geometry and dynamics of Hamiltonian systems and their quantization, as well as of many generalizations. In particular, the Poisson formulation is very appropriate for developing the geometric description of the Poisson reduction of a Hamiltonian system which is invariant under the action of a symmetry Lie group (see, for instance, \cite{MaRa}).

\medskip

When extending the geometric setting from classical mechanics to classical field theories, it is crucial to identify the geometric structures playing the role of these canonical structures. While the field-theoretic analogue of the canonical symplectic structure is well-known to be given by the canonical multisymplectic form on a space of finite dimension, the extended momentum phase space, it has been a long standing question how to extend the canonical Poisson bracket formulation from classical mechanics to classical field theories, in a completely general, intrinsic, and canonical way. Several contributions have been made in this direction as we will review below. In this paper will shall provide an answer to this question by constructing explicitly such a canonical bracket, giving its algebraic properties, and showing that it allows a canonical formulation of the Hamilton equations for field theories that naturally extends the formulation \eqref{Poisson_formulation} of classical mechanics. Such canonical bracket structures could be used as a starting point of a covariant canonical quantization. One key difference between the canonical bracket that we propose and the canonical Poisson bracket of classical mechanics is its linear-affine nature. This is compatible with the fact that the set of currents and Hamiltonian sections are, respectively, linear and affine spaces for field theories. This linear-affine nature already arises in the case of time-dependent Hamiltonian Mechanics.

\medskip

While in the present paper we focus on the extension to field theory of the canonical geometric setting \eqref{Poisson_formulation}--\eqref{Jacobi} governing the evolution equations, which is of finite dimensional nature, one may also focus on the geometric structures related to the space of solutions of these equations. 
For a large class of field theories, this infinite dimensional space admits a Poisson (or a presymplectic) structure. This is not a new theory and we will quote an old paper by Peierls \cite{Pe} (for a large list of contributions, see the recent papers  \cite{CiDiIbMaScZa1,CiDiIbMaScZa2} and the references therein). In this setting, the boundary conditions of the theory play an important role \cite{MaVi}. In fact, in \cite{MaVi}, Margalef-Bentabol and Villaseñor introduce the so-called ``relative bicomplex framework" and develop a geometric formulation of the covariant phase space methods with boundaries, which is used to endow the space of solutions with (pre)symplectic structures. These ideas are used to discuss formulations of Palatini gravity, General Relativity and Holst theories in the presence of boundaries \cite{BaMaVaVi1,BaMaVaVi2,BaMaVaVi3}. On the other hand, a classic research topic has been the relationship between finite and infinite dimensional approximation to Classical Field Theories (see Section \ref{Conclusions-Future}).

\medskip

Before explaining the main difficulties that emerge in the process of finding a canonical bracket formulation that extends \eqref{Poisson_formulation}--\eqref{Jacobi} to classical field theories, we quickly review below the previous contributions to the geometric formulation of Hamiltonian Classical Field Theories.

\subsection{Previous contributions on the geometric formulation of Hamiltonian classical field theories of first order.}
\label{Previous-contributions}
The geometric description starts with the choice of the configuration bundle $\pi: E \to M$, whose sections are the fields of the theory. The case of time-dependent mechanics corresponds to the special situation $E= Q\times \mathbb{R} \rightarrow \mathbb{R}$. For classical field theories, there are two useful generalisations of the notion of momentum phase space: the restricted multi-momentum  bundle ${\mathcal M}^0\pi$ and the extended multi-momentum bundle ${\mathcal M}\pi$. Both are vector bundles on $E$ with vector bundle projections denoted $\nu^0: {\mathcal M}^0\pi \to E$ and  $\nu: {\mathcal M}\pi \to E$, and there is a canonical line bundle projection $\mu: {\mathcal M}\pi \to {\mathcal M}^0\pi$. The main property of the extended multi-momentum bundle is that it admits a canonical multisymplectic structure $\omega_{{\mathcal M}\pi}$ of degree $m+1$ ($m$ being the dimension of $M$). The restricted multi-momentum bundle ${\mathcal M}^0\pi$, however, does not admit a canonical multisymplectic structure.

An important difference with Hamiltonian Mechanics is that for Hamiltonian Classical Field Theories we don't have a Hamiltonian function, but a Hamiltonian section $h: {\mathcal M}^0\pi \to {\mathcal M}\pi$ of the canonical projection  $\mu: {\mathcal M}\pi \to {\mathcal M}^0\pi$, see \cite{CaCaIb}. The corresponding evolution equations are the Hamilton-deDonder-Weyl equations. They form a system of partial differential equations of first order on ${\mathcal M}^0\pi$ with space of parameters $M$ and whose solutions are sections of the projection $\pi \circ \nu^0: {\mathcal M}^0\pi \to M$. More precisely, if we denote by
\[
h(x^{i}, u^{\alpha}, p_{\alpha}^{i}) = (x^{i}, u^{\alpha}, -H(x, u, p), p_{\alpha}^{i}),
\]
the local expression of the Hamiltonian section $h$, then the Hamilton-deDonder-Weyl equations are locally 
\begin{equation}\label{HdDW_intro} 
\begin{aligned} 
&\frac{\partial u^{\alpha}}{\partial x^{i}} = \frac{\partial H}{\partial p_{\alpha}^{i}}, \; \; \mbox{ for all } i \mbox{ and } \alpha,\\
&\sum_{i}\frac{\partial p_{\alpha}^{i}}{\partial x^{i}} = -\frac{\partial H}{\partial u^{\alpha}}, \; \; \mbox{ for all } \alpha.
\end{aligned} 
\end{equation} 
These equations go back, at least, to work of Volterra \cite{Vo1890a,Vo1890b}.
In the literature, we can find the following geometric descriptions of the Hamilton-deDonder-Weyl equations \eqref{HdDW_intro} associated to $h$:
\begin{itemize}
\item
From $h$ and the canonical multisymplectic structure one can produce a {\bf non-canonical} multisymplectic structure $\omega_h = h^*(\omega_{{\mathcal M}\pi})$ on ${\mathcal M}^0\pi$. Then, using $\omega_h$, a special type of Ehresmann connections can be introduced on the fibration $\pi\circ \nu^0: {\mathcal M}^0\pi \to M$ (which, in the present paper, will be called Hamiltonian connections for $h$) and the solutions of the evolution equations are the  integral sections of these connections (see \cite{DeMaMa2,DeMaMa1}; see also \cite{EcMuRo1,EcMuRo2}). The multisymplectic structure $\omega_h$ can also be used to directly characterize the sections of the projection $\pi \circ \nu^0: {\mathcal M}^0\pi \to M$ which are solutions of the Hamilton-deDonder-Weyl equations for $h$ (see \cite{CaCaIb,DeMaMa2,DeMaMa1,EcMuRo1,EcMuRo2}). From $\omega_h$, one can also define the "multisymplectic pseudo-brackets" and "multisymplectic brackets" of $(m-1)$-Hamiltonian forms which may be considered the field version of the Poisson bracket for functions in Classical Mechanics (see \cite{FoSa}). 

\item
Using an auxiliary Ehresmann connection $\nabla$ on the configuration bundle $\pi: E \to M$ and the Hamiltonian section $h$, one may produce a Hamiltonian energy $H^\nabla: {\mathcal M}^0\pi \to \mathbb{R}$ associated to $h$ and $\nabla$ and a {\bf non-canonical} multisymplectic structure on ${\mathcal M}^0\pi$ which allow us to describe the solutions of the evolution equations in a geometric form (see \cite{CaCaIb}; see also \cite{EcMuRo1,EcMuRo2}). In addition, it is possible to consider a suitable space of currents (a vector subspace of $m-1$-forms on ${\mathcal M}^0\pi$ which are horizontal with respect to the projection $\pi\circ \nu^0: {\mathcal M}^0\pi \to M$) and one may introduce a ``Poisson bracket" of a current and an Hamiltonian energy associated to $\nabla$. This bracket allows the description of the Hamilton-deDonder-Weyl equations as in Hamiltonian Mechanics (see \cite{CaMa}).
\item
The Hamiltonian section $h$ induces a canonical extended Hamiltonian density ${\mathcal F}_h$, which is a smooth $\pi^*(\Lambda^mT^*M)$-valued function defined on ${\mathcal M}\pi$ see \cite{CaGuMa,Gr}; see also \cite{EcLeMuRo} for the particular case when a volume form on $M$ is fixed. 
Then, using ${\mathcal F}_h$ and the canonical multisymplectic structure $\omega_{{\mathcal M}\pi}$ one write intrinsically a system of partial differential equations on ${\mathcal M}\pi$ whose solutions are sections of the fibration $\pi \circ \nu: {\mathcal M}\pi \to M$. The projection, via $\mu$, of these sections are the solutions of the Hamilton-deDonder-Weyl equations for $h$ (see \cite{EcLeMuRo}).

\item
From the configuration bundle $\pi: E \to M$ one can construct the phase bundle $\mathbb{P}(\pi)$, an affine bundle over ${\mathcal M}^0\pi$, and the differential $dh: {\mathcal M}^0\pi \to \mathbb{P}(\pi)$ of the Hamiltonian section $h$, as a section of $\mathbb{P}(\pi)$. In addition, an affine bundle epimorphism $A: J^1(\pi \circ \nu^0) \to \mathbb{P}(\pi)$ from the $1$-jet bundle of the fibration $\pi \circ \nu^0: {\mathcal M}^0\pi \to M$ onto $\mathbb{P}(\pi)$ may be also introduced. Then, the solutions of the Hamilton-deDonder-Weyl equations are the sections $s^0: M \to {\mathcal M}^0\pi$ whose first prolongation $j^1s^0: M \to J^1(\pi \circ \nu^0)$ is contained in the submanifold $A^{-1}(dh({\mathcal M}^0\pi))$ (see \cite{Gr,GrGr}; see also \cite{GrGrUr2,MaMeSa} for the particular case of time-dependent Hamiltonian Mechanics).
\end{itemize}

\subsection{The problem}\label{the-problem}
The previous comments lead naturally to the following question: 

\noindent {\it Does there exist a completely canonical geometric formulation of the Hamilton-deDonder-Weyl equations which is analogous to the standard Poisson bracket formulation of time-independent Hamiltonian Mechanics?}

A possible answer to this question could be the geometric formulation developed in \cite{CaMa} (see Section \ref{Previous-contributions}). However, this formulation is not canonical since most of the constructions in \cite{CaMa}
%
%
%
depend on the chosen auxiliary connection in the configuration bundle. In fact, in a previous paper \cite{MaSh} Marsden and Shkoller justify the use of this connection in the geometric formulation of the theory and one may find, in that paper (see \cite{MaSh}, page 554), the following cite:
   
{\it It is interesting that the structure of connection is not necessary to intrinsically define the
Lagrangian formalism (as shown in the preceding references), while for the intrinsic definition of a
covariant Hamiltonian the introduction of such a structure is essential. Of course, one can avoid
a connection if one is willing to confine ones attention to local coordinates.}   

However, in our paper, we will construct a bracket that does not use any auxiliary objects such as a connection in the configuration bundle and which is completely canonical, thereby giving an affirmative answer to the question above. 

\subsection{Answer to the problem and contributions of the paper}
In order to give an affirmative answer to the question in Section \ref{the-problem}, we will use the following previous contributions and results:
\begin{itemize}
\item
The construction of the phase space $\mathbb{P}(\pi)$ associated with the configuration bundle $\pi: E \to M$ and the differential of a Hamiltonian section $h: {\mathcal M}^0\pi \to {\mathcal M}\pi$ as a section of $\mathbb{P}(\pi)$ (see \cite{Gr}).

\item
The affine bundle epimorphism $A: J^1(\pi \circ \nu^0) \to \mathbb{P}(\pi)$ which was also introduced in \cite{Gr}.

\item
The notion of a Hamiltonian connection associated with a Hamiltonian section $h$. This type of objects were already considered in \cite{DeMaMa2,DeMaMa1} in order to characterize the solutions of the Hamilton-deDonder-Weyl equations for $h$ (although the authors of these papers did not use the terminology of a Hamiltonian connection).
\end{itemize}
We will combine the previous constructions as follows.

As a first step, we consider the affine bundle isomorphism
\[
\sharp^{\rm aff}: \mathbb{P}(\pi) \to J^1(\pi \circ \nu^0)/\operatorname{Ker} A
\]
where $\operatorname{Ker} A$ is the kernel of the affine bundle epimorphism $A: J^1(\pi \circ \nu^0) \to \mathbb{P}(\pi)$ and $\sharp^{\rm aff} = \hat{A}^{-1}$, with $\hat{A}: J^1(\pi \circ \nu^0)/\operatorname{Ker} A \to \mathbb{P}(\pi)$ the affine bundle isomorphism induced by $A$. Then, we introduce the section 
\[
\Gamma_h: {\mathcal M}^0\pi \to J^1(\pi \circ \nu^0)/\operatorname{Ker} A 
\]
of $J^1(\pi \circ \nu^0)/\operatorname{Ker} A$, 
given by
\[
\Gamma_h = \sharp^{\rm aff} \circ dh
\]
and we prove the following result: the section $\Gamma_h$ is canonically identified to the equivalence class of Ehresmann connections on the fibration $\pi \circ \nu^0: {\mathcal M}^0\pi \to M$ that are Hamiltonian connections for $h$, see Theorem \ref{Gamma-h-Ham-Connections-h}.

So, the section $\Gamma_h$ associated to the Hamiltonian section $h$ is the field-theoretic analogue to the Hamiltonian vector field $X_H$ associated to a Hamiltonian function $H$ in Classical Mechanics.
The following commutative diagram illustrates the situation  
\[
\xymatrix{{\mathbb{P}(\pi)}\ar[rr]^{\sharp^{\rm aff}}&&J^{1}(\pi \circ \nu^0)/
\operatorname{Ker} A\\&
{\mathcal M}^0\pi\ar[ul]^{dh}\ar[ur]_{\Gamma_h}&}
\]
The analogy with the corresponding diagram given in \ref{diagram-1} for classical mechanics is evident.

The next step is to introduce a suitable space of currents ${\mathcal O}$ (a vector subspace of $(m-1)$-forms on ${\mathcal M}^0\pi$ which are horizontal with respect to the fibration $\pi \circ \nu^0: {\mathcal M}^0\pi \to M$), in such a way that the restriction of the standard exterior differential to ${\mathcal O}$ takes values in the space of sections of the vector bundle $(J^1(\pi \circ \nu^0)/\operatorname{Ker} A)^+ = {\rm Aff}(J^1(\pi \circ \nu^0)/\operatorname{Ker} A, (\pi \circ \nu^0)^*(\Lambda^mT^*M))$, that is, we have the linear map
\[
d: {\mathcal O} \to \Gamma(J^1(\pi \circ \nu^0)/\operatorname{Ker} A)^+.
\]
The dual vector bundle $(J^1(\pi \circ \nu^0)/\operatorname{Ker} A)^+ $ is chosen so that these differentials can be canonically paired with the Hamiltonian connections $\Gamma _h$, thereby extending to the field-theoretic context the pairing $ \left\langle dF, X_H \right\rangle $ between the differential $dF$ of an observable and the Hamiltonian vector field $X_H$, see \eqref{Canonical_Poisson}.
This is our motivation for introducing the space of currents ${\mathcal O}$ and although it is different to the motivation in \cite{CaMa}, ${\mathcal O}$ just coincides with the space of currents in \cite{CaMa} (see Remark \ref{Currents-Marco-Jerry}).

Now, if $\Gamma(\mu)$ is the space of Hamiltonian sections, we can define the linear-affine canonical bracket
\begin{equation}\label{Lin_Aff_Bracket} 
\{\cdot, \cdot\}: {\mathcal O} \times \Gamma(\mu) \to \Gamma((\pi \circ \nu^0)^*(\Lambda^mT^*M)),
\end{equation} 
given by
\[
\{\alpha^0, h\} = \langle d\alpha^0, \Gamma_h \rangle = \langle d\alpha^0, \sharp^{\rm aff}(dh) \rangle, \; \; \mbox{ for } \alpha^0 \in {\mathcal O}.
\]
Then, one may prove that the evolution of any current  $ \alpha^0 \in \mathcal{O}$ along a solution $s^0: M \to {\mathcal M}^0\pi$ of the Hamilton-deDonder-Weyl equations for $h$ is given by
\begin{equation}\label{Poisson_formulation_fields} 
(s^0)^*(d\alpha^0) = \{\alpha^0, h\} \circ s^0.
\end{equation} 
Conversely, if $s^0: M \to {\mathcal M}^0\pi$ is such that \eqref{Poisson_formulation_fields} holds for all $\alpha^0 \in \mathcal{O}$, then $s^0$ is a solution of Hamilton-deDonder-Weyl equations. The canonical bracket formulation \eqref{Poisson_formulation_fields} is the field-theoretic analogue to the canonical Poisson bracket formulation \eqref{Poisson_formulation} of classical mechanics.

The previous tasks are performed in Section \ref{Currents-Ham-deDonder-Weyl} (see Theorem \ref{evolution-current}).
Here again, the analogy with the canonical Poisson formulation of classical mechanics (see (\ref{Poisson_formulation}) and (\ref{Canonical_Poisson})) is evident.

It is important to note the affine character of the canonical bracket $\{\cdot, \cdot\}$ in \eqref{Lin_Aff_Bracket}: the space $\Gamma(\mu)$ of Hamiltonian sections is an affine space modelled over the vector space $\Gamma((\pi \circ \nu^0)^*(\Lambda^mT^*M))$. Recalling that the canonical Poisson bracket on $T^*Q$ induces a Lie algebra structure on $C^{\infty}(T^*Q)$, a new question arises: 

{\it What are the algebraic properties of the bracket $\{\cdot, \cdot\}: {\mathcal O} \times \Gamma(\mu) \to \Gamma((\pi \circ \nu^0)^*(\Lambda^mT^*M))$?}

Related to this question, we will prove that ${\mathcal O}$ admits a canonical Lie algebra structure $\{\cdot, \cdot\}_{\mathcal O}$ (see Theorem \ref{Lie-algebra-O}) and that the linear map
\[
{\mathcal R}: {\mathcal O} \to {\rm Aff}\left(\Gamma(\mu), \Gamma((\pi \circ \nu^0)^*(\Lambda^mT^*M))\right)
\]
defined by
\[
{\mathcal R}(\alpha^0) = \{\alpha^0, \cdot \}
\]
is a representation of the Lie algebra $({\mathcal O}, \{\cdot, \cdot\}_{\mathcal O})$ on the affine space $\Gamma(\mu)$ (see Theorem \ref{affine-representation} and the property \eqref{affine_linear_Jacobi}).

The previous results will be applied to the following examples: time-dependent Hamiltonian systems, Continuum Mechanics (including fluid dynamics and nonlinear elasticity) and Yang-Mills theories. Some of the constructions developed in the paper are illustrated in the Diagram in Appendix \S\ref{diagram}.


\subsection{Structure of the paper}
The paper is structured as follows. In Section \ref{Ham-Class-Field-Theo}, we review the geometric formulation of the Hamilton-deDonder-Weyl equations using the multisymplectic structure on the phase space induced by the Hamiltonian section. In Section \ref{new-geom-formul-HCFT}, we introduce the canonical linear-affine bracket $\{\cdot, \cdot\}: {\mathcal O} \times \Gamma(\mu) \to \Gamma((\pi \circ \nu^0)^*(\Lambda^mT^*M))$ and we formulate the Hamilton-deDonder-Weyl equations using this bracket. In particular, we describe the evolution of a current along a solution of the Hamilton-deDonder-Weyl equations. In Section \ref{aff-repre-Lie-algebra-aff-space}, we introduce a Lie algebra structure on ${\mathcal O}$ and we prove that $\{\cdot, \cdot\}$ induces a representation of the Lie algebra ${\mathcal O}$ on the affine space $\Gamma(\mu)$ of Hamiltonian sections. In Section \ref{examples}, we apply the previous results to several examples. The paper closes with three appendices. In the first one, we review the definition of the $1$-jet bundle associated with a fibration, in the second one, we discuss the vertical  
lift of a section of a vector bundle as a vertical vector field on the total space and, in the third one, we present a Diagram which illustrates most of the relevant constructions in the paper.

\section{Hamiltonian Classical Field Theories of first order}\label{Ham-Class-Field-Theo}

In this section, we review some basic constructions and results on Hamiltonian Classical Field Theories of first order (for more details, see \cite{CaCaIb}).

\subsection{The restricted and extended multimomentum bundle associated with a fibration}\label{res-ext-multimomentum-bundle}

The \textit{configuration bundle} of a classical field theory is a fibration $\pi: E \to M$, that is, a surjective submersion from $E$ to $M$. We assume $dim \; M = m$ and $dim \; E = m+n$.

The \textit{extended multimomentum bundle} ${\mathcal M}\pi$ associated with the configuration bundle $\pi: E \to M$ is the vector bundle over $E$ whose fiber at the point $y \in E$ is
\[
{\mathcal M}_y\pi = \{ \varphi: J^1_y\pi \to \Lambda^mT^*_{\pi(y)}M \mid \varphi \mbox{ is affine }\}.
\]
Here, $J^1\pi = \cup_{y \in E}J^1_y\pi$ is the $1$-jet bundle of the fibration $\pi: E \to M$ (see Appendix \ref{1-jet-bundle}).

It is well-known that ${\mathcal M}\pi$ may be identified with the vector bundle $\Lambda_2^m(T^*E)$ over $E$, whose fiber at $y \in E$ is
\[
\Lambda^m_2(T_y^*E) = \{ \gamma \in \Lambda^m(T_y^*E) \mid i_ui_{u'} \gamma = 0, \forall u, u' \in V_y\pi\}.
\]  
In fact, if $\gamma \in \Lambda_2^m (T_y^*E)$ and $z: T_{\pi(y)}M \to T_yE \in J^1_y \pi$ then
\begin{equation}\label{duality}
\langle \gamma, z\rangle  = \Lambda^m z^*(\gamma).
\end{equation}
If $(x^{i}, u^{\alpha})$ are local coordinates on $E$ which are adapted with the fibration $\pi$, then $\gamma \in \Lambda^m_2(T_y^*E)$ reads locally
\[
\gamma = p \,d^m x + p^{i}_{\alpha} du^{\alpha} \wedge d^{m-1}x_i,
\]
where
\[ 
d^mx = dx^{1} \wedge \dots \wedge dx^m \; \; \mbox{ and } \; \; d^{m-1}x_j = i(\frac{\partial}{\partial x^j})d^mx.
\] 
So, $(x^{i}, u^{\alpha}, p, p_{\alpha}^{i})$ are local coordinates on ${\mathcal M}\pi$.

On ${\mathcal M}\pi \simeq \Lambda^m_2(T^*E)$ we can define a canonical $m$-form $\lambda_{{\mathcal M}\pi}$ as follows
\begin{equation}\label{Canonical-m-form}
\lambda_{{\mathcal M}\pi}(\gamma)(Y_1, \dots, Y_m) = \gamma((T_{\gamma}\nu)(Y_1), \dots , (T_{\gamma}\nu)(Y_m)),
\end{equation}
for $\gamma \in \Lambda_2^m(T^*E)$ and $Y_1, \dots, Y_m \in T_{\gamma}\Lambda^m_2(T^*E)$, with $\nu: {\mathcal M}\pi \to E$ the vector bundle projection.

From (\ref{Canonical-m-form}), $\lambda_{{\mathcal M}\pi}$ has the local expression
\[
\lambda_{{\mathcal M}\pi} = p d^mx + p_\alpha^{i} du^{\alpha} \wedge d^{m-1}x_i.
\]
The \textit{canonical multisymplectic structure} $\omega_{{\mathcal M}\pi}$ on ${\mathcal M}\pi$ is the $(m+1)$-form given by
\[
\omega_{{\mathcal M}\pi} = -d\lambda_{{\mathcal M}\pi}.
\]
Locally, we have
\begin{equation}\label{Local-multi-sym}
\omega_{{\mathcal M}\pi} =  -dp \wedge d^mx + du^{\alpha} \wedge dp_\alpha^{i} \wedge d^{m-1}x_i.
\end{equation}
It is clear that $\omega_{{\mathcal M}\pi}$ is closed and non-degenerate, that is, the vector bundle morphism
\[
\flat_{\omega_{{\mathcal M}\pi}}: T({\mathcal M}\pi) \to \Lambda^m(T^*({\mathcal M}\pi)), \; \; Y \in T_{\gamma}({\mathcal M}\pi) \mapsto i_Y\omega_{{\mathcal M}\pi}(\gamma) \in \Lambda^mT^*_\gamma({\mathcal M}\pi),
\]
is a linear monomorphism.

The \textit{restricted multimomentum bundle} ${\mathcal M}^0\pi$ is the vector bundle 
\[
{\mathcal M}^0\pi = (\pi^*(TM) \otimes V^*\pi) \otimes \pi^*(\Lambda^mT^*M) \simeq \pi^*(\Lambda^{m-1}T^*M) \otimes V^*\pi \simeq {\rm Lin}(\pi^*(T^*M) \otimes V\pi, \pi^*(\Lambda^mT^*M)).
\]
Local coordinates on ${\mathcal M}^0\pi$ are $(x^{i}, u^{\alpha}, p_{\alpha}^{i})$.

There is a canonical projection $\mu: {\mathcal M}\pi \to {\mathcal M}^0\pi$ given by
\[
\mu(\gamma) = \gamma^{l},
\]
for $\gamma: J^1_y\pi \to \Lambda^{m}T^*_{\pi(y)}M \in {\mathcal M}_y\pi$, where $\gamma^l: T^*_{\pi(y)}M \otimes V_y\pi \to \Lambda^{m}T^*_{\pi(y)}M$ is the linear map associated with $\gamma$.
The local expression of $\mu$ is
\[
\mu(x^{i}, u^{\alpha}, p, p_{\alpha}^{i}) = (x^{i}, u^{\alpha}, p_{\alpha}^{i}).
\]
Note that if $\gamma, \gamma' \in {\mathcal M}\pi = \Lambda_2^mT^*E$ then 
\begin{equation}\label{Fibers-mu}
\mu(\gamma) = \mu(\gamma') \Leftrightarrow \nu(\gamma) = \nu(\gamma') \mbox{ and } \exists ! \; \Omega\in \Lambda^mT^*_{\pi(\nu(\gamma))}M \mbox{ such that } \gamma' = \gamma + (\Lambda^mT^*_{\nu(\gamma)}\pi)(\Omega).
\end{equation}

\begin{remark}
Note that this last statement implies that $\mathcal{M}^0\pi\simeq {\mathcal M}\pi/\sim$. This situation recalls a particular case in the Poisson realm, the quotient of a symplectic manifold by a proper and free action of a symmetry Lie group inherits a Poisson structure. In this formalism, the quotient of the extended multimomentum bundle also has a  new version of  a multi-Poisson structure (see for example \cite{BCI}) which is defined via a Lie algebroid structure on a subbundle of $\Lambda^m T^*\mathcal{M}^0\pi$, when the base manifold $M$ is orientable. Note that, in such a case, if we fix a volume form on $M$, we have an action of the real line $\mathbb{R}$ on $\mathcal{M}^0\pi$, which preserves the multisymplectic structure, and $\mathcal{M}\pi$ is the space of orbits of this action. For the definition and details on the construction of the multi-Poisson structure,  we refer to \cite{BMR}.
\end{remark}

\subsection{Hamilton-deDonder-Weyl equations}

Given a configuration bundle $\pi: E \to M$, a \textit{Hamiltonian section} $h: {\mathcal M}^0\pi \to {\mathcal M}\pi$ is a smooth section of the canonical projection 
\[
\mu: {\mathcal M}\pi \to {\mathcal M}^0\pi.
\]
The local expression of $h$ is
\[
h(x^{i}, u^{\alpha}, p_{\alpha}^{i}) = (x^{i}, u^{\alpha}, -H(x, u, p), p_{\alpha}^{i}),
\]
where $H$ is a local real $C^{\infty}$-function on ${\mathcal M}^0\pi$.

Using the Hamiltonian section, we can define the $(m+1)$-form $\omega_h$ on ${\mathcal M}^0\pi$ given by
\begin{equation}\label{w-h}
\omega_h = h^*(\omega_{{\mathcal M}\pi}).
\end{equation}
The local expression of $\omega_h$ is 
\begin{equation}\label{Local-w-h}
\omega_h = du^{\alpha} \wedge dp_{\alpha}^{i} \wedge d^{m-1}x_i + dH \wedge d^mx = du^{\alpha} \wedge dp_{\alpha}^{i} \wedge d^{m-1}x_i + \frac{\partial H}{\partial u^{\alpha}} du^{\alpha} \wedge d^mx + \frac{\partial H}{\partial p_{\alpha}^{i}} dp_{\alpha}^{i} \wedge d^mx.
\end{equation} 
Note that if $m = 1$, $\omega_h$ is degenerate and the rank of its kernel is $1$. On the other hand, if $m \geq 2$, $\omega_h$ is non-degenerate which implies that it is multisymplectic.

\begin{proposition}\label{Prop_HdDW}
A (local) section $s^0: U\subseteq M \to {\mathcal M}^0\pi$ of the projection $\pi \circ \nu^0: {\mathcal M}^0\pi \to M$ is a solution of the Hamilton-deDonder-Weyl equations iff
\[
(s^0)^*(i_U\omega_h) = 0, \; \; \; \forall\; U \in \Gamma(V(\pi \circ \nu^0)),
\]
where $\Gamma(V(\pi \circ \nu^0))$ is the space of sections of the vertical bundle to $\pi \circ \nu^0: {\mathcal M}^0\pi \to M$.
\end{proposition}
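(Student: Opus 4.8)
The plan is to prove the equivalence by a direct computation in local coordinates $(x^i,u^\alpha,p^i_\alpha)$ adapted to the fibration $\pi\circ\nu^0\colon\mathcal{M}^0\pi\to M$, after first reducing the condition ``$(s^0)^*(i_U\omega_h)=0$ for all $U\in\Gamma(V(\pi\circ\nu^0))$'' to a finite check. Indeed, on a chart any vertical vector field can be written as $U=f^\alpha\,\partial/\partial u^\alpha+g^i_\alpha\,\partial/\partial p^i_\alpha$ with $f^\alpha,g^i_\alpha\in C^\infty(\mathcal{M}^0\pi)$, and since interior product is $C^\infty$-linear in the vector field and pullback commutes with multiplication by functions, $(s^0)^*(i_U\omega_h)=(f^\alpha\circ s^0)\,(s^0)^*\big(i_{\partial/\partial u^\alpha}\omega_h\big)+(g^i_\alpha\circ s^0)\,(s^0)^*\big(i_{\partial/\partial p^i_\alpha}\omega_h\big)$. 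Hence the stated condition holds for every vertical $U$ if and only if it holds for the coordinate fields $\partial/\partial u^\alpha$ and $\partial/\partial p^i_\alpha$, which locally span $V(\pi\circ\nu^0)$.

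Next I would compute the two families of contractions from the local expression \eqref{Local-w-h} of $\omega_h$, using that $\partial/\partial u^\alpha$ and $\partial/\partial p^i_\alpha$ annihilate the horizontal forms $d^mx$ and $d^{m-1}x_i$. Applying $i_X(\eta\wedge\theta)=(i_X\eta)\wedge\theta+(-1)^{\deg\eta}\eta\wedge(i_X\theta)$, one gets $i_{\partial/\partial u^\beta}\omega_h=dp^i_\beta\wedge d^{m-1}x_i+\tfrac{\partial H}{\partial u^\beta}\,d^mx$ and $i_{\partial/\partial p^j_\beta}\omega_h=-\,du^\beta\wedge d^{m-1}x_j+\tfrac{\partial H}{\partial p^j_\beta}\,d^mx$. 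Writing $s^0$ locally as $x\mapsto(x^i,u^\alpha(x),p^i_\alpha(x))$, using $(s^0)^*du^\beta=\tfrac{\partial u^\beta}{\partial x^k}dx^k$, $(s^0)^*dp^i_\beta=\tfrac{\partial p^i_\beta}{\partial x^k}dx^k$, and the identity $dx^k\wedge d^{m-1}x_i=\delta^k_i\,d^mx$, the pullbacks become $(s^0)^*\big(i_{\partial/\partial u^\beta}\omega_h\big)=\big(\sum_i\tfrac{\partial p^i_\beta}{\partial x^i}+\tfrac{\partial H}{\partial u^\beta}\big)d^mx$ and $(s^0)^*\big(i_{\partial/\partial p^j_\beta}\omega_h\big)=\big(-\tfrac{\partial u^\beta}{\partial x^j}+\tfrac{\partial H}{\partial p^j_\beta}\big)d^mx$. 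Requiring both to vanish for all $\beta$ and all $j$ is exactly the system \eqref{HdDW_intro}, and conversely; combined with the reduction of the first paragraph, this establishes both implications.

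The argument is elementary and there is no genuine conceptual obstacle: the only point demanding care is the sign and index bookkeeping in the interior products, in particular the graded Leibniz rule applied with $\eta=du^\alpha$ of degree one and the contraction identity $dx^k\wedge d^{m-1}x_i=\delta^k_i\,d^mx$. I would finish by remarking that, since everything is checked on an arbitrary adapted chart and the two implications are coordinate-free statements, the local verification suffices, and that the content of the proposition is precisely that pulling back $i_U\omega_h$ along a section $s^0$ repackages equations \eqref{HdDW_intro} with no auxiliary choice involved.
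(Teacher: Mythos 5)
Your proof is correct and follows essentially the same route as the paper, which simply asserts that the equivalence is "routine to verify" from the local expressions of $s^0$ and $\omega_h$; you have supplied the details (reduction to the coordinate frame fields by $C^\infty$-linearity, the two interior products, and the pullback identity $dx^k\wedge d^{m-1}x_i=\delta^k_i\,d^mx$), and the sign and index bookkeeping all checks out against \eqref{Local-w-h} and \eqref{HdDW_intro}.
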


\begin{proof} Using the local expression $s^0(x^{i}) = (x^{i}, u^{\alpha}(x), p_{\alpha}^{i}(x))$, 
it is routine to verify that $(s^0)^*(i_U\omega_h) = 0$ if and only if
\[
\frac{\partial u^{\alpha}}{\partial x^{i}} = \frac{\partial H}{\partial p_{\alpha}^{i}}, \; \; \mbox{ for all } i \mbox{ and } \alpha,
\]
\[
\sum_{i}\frac{\partial p_{\alpha}^{i}}{\partial x^{i}} = -\frac{\partial H}{\partial u^{\alpha}}, \; \; \mbox{ for all } \alpha.
\]
\end{proof}
\section{A new canonical bracket formulation of Hamiltonian Classical Field Theories of first order}\label{new-geom-formul-HCFT}


As we reviewed in the Introduction, the phase space of momenta for classical mechanics is the cotangent bundle $T^*Q$ of the configuration space $Q$, a smooth manifold of dimension $n$. The cotangent bundle $T^*Q$ carries a canonical symplectic structure which induces a vector bundle isomorphism $\flat: T(T^*Q) \to T^*(T^*Q)$ over the identity 
with inverse denoted $\sharp :T^*(T^*Q) \to T(T^*Q)$. The Hamiltonian is a real $C^{\infty}$-function on $T^*Q$ and the Hamiltonian vector field is given in terms of the differential $dH$ and the vector bundle isomorphism $\sharp $ as $X_H= \sharp dH$, see Diagram \eqref{diagram-1}.
For field theories, we don't have a Hamiltonian function, but a Hamiltonian section
\[
h: {\mathcal M}^0\pi \to {\mathcal M}\pi
\]
of the canonical projection $\mu: {\mathcal M}\pi \to {\mathcal M}^0\pi$.
So, the following questions arise when extending the previous construction to field theories:

{\bf Question 1}: What is the differential of $h$?

{\bf Question 2}: Where does the differential of $h$ take values? 

We will answer these questions in \S\ref{sec:phase bundle} by showing that the differential $dh$ of $h$ is a section of the phase bundle $\mathbb{P}(\pi)$ associated with the fibration $\mu: {\mathcal M}\pi \to {\mathcal M}^0\pi$. The bundle $\mathbb{P}(\pi)$ was introduced in \cite{Gr} and was used there to discuss a Tulczyjew triple for Classical Field Theories of first order.
This will allow us to define the field-theoretic analogue to the vector bundle isomorphism $\sharp$ in \S\ref{canonical-isomorphism} and the field-theoretic analogue $\Gamma_h$ to the Hamiltonian vector field $X_H$ in \S\ref{Gamma_h}. In particular, we will show that $\Gamma_h$ can be identified with the equivalence class of Hamiltonian Ehresmann connections associated to $h$.

\medskip

Going back to Classical Hamiltonian Mechanics, we recall that the set of observables is the space $C^\infty(T^*Q)$ and that the Hamilton equations can be equivalently formulated in the Poisson bracket form \eqref{Poisson_formulation} with respect to the canonical Poisson bracket giving by the formulas \eqref{Canonical_Poisson}. In view of this formulation, we need to find a suitable space of currents for field theories (the observables in field theories) such that their differentials take values in a bundle dual to the target bundle of $\Gamma_h$, this is the goal of \S\ref{Currents-Ham-deDonder-Weyl}. From this a canonical bracket can be obtained between currents and Hamiltonian sections. This construction is carried out in  \S\ref{suitable_LA_bracket}.

\subsection{The  phase bundle associated with a fibration and the differential of a Hamiltonian section}\label{sec:phase bundle}
Let $\pi: E \to M$ be the configuration bundle of the field theory and $h: {\mathcal M}^0\pi \to {\mathcal M}\pi$ be the Hamiltonian section. Then, although
\[
dim \; {\mathcal M}\pi = dim \; {\mathcal M}^0\pi + 1,
\]
$h$ cannot  be identified, in general, with a real $C^{\infty}$-function on ${\mathcal M}^0\pi$.
However, to $h$ we can associate an \textit{extended Hamiltonian density}
\[
{\mathcal F}_h: {\mathcal M}\pi \to \pi^*(\Lambda^mT^*M),
\]
defined as follows. If $\gamma \in {\mathcal M}\pi$ we have $\mu(\gamma) = \mu(h(\mu(\gamma)))$ and hence using (\ref{Fibers-mu}), we conclude that there exists a unique  $\Omega \in \Lambda^mT^*_{\pi(\nu(\gamma))}M$ such that $\gamma = h(\mu(\gamma)) + (\Lambda^mT^*_{\nu(\gamma)}\pi)(\Omega)$. We thus define
\[
{\mathcal F}_h(\gamma) = \Omega.
\]


\subsubsection{The differential of $\mathcal{F}_h$ and the extended phase bundle}

Note that ${\mathcal F}_h$ may be considered, in a natural way, as a $m$-form on ${\mathcal M}\pi$. Thus, we can take its exterior differential and we obtain a $(m+1)$-form on ${\mathcal M}\pi$ which is a section of the vector bundle 
\[
\Lambda^{m+1}_2(T^*({\mathcal M}\pi)) \to {\mathcal M}\pi.
\]
Now, it is easy to prove that the vector bundles $\Lambda^{m+1}_2(T^*({\mathcal M}\pi))$ and $V^*(\pi \circ \nu) \otimes (\pi \circ \nu)^*(\Lambda^mT^*M)$ are isomorphic. In fact, an isomorphism 
\[
\Psi: \Lambda^{m+1}_2(T^*({\mathcal M}\pi)) \to V^*(\pi \circ \nu) \otimes (\pi \circ \nu)^*(\Lambda^mT^*M)
\]
is given by
\[
\Psi(\tilde{\theta}): V_{\gamma}(\pi \circ \nu) \to \Lambda^m(T^*_{\pi(\nu(\gamma))}M), \; \; \tilde{U} \in V_{\gamma}(\pi \circ \nu) \to i_{\tilde{U}}\tilde{\theta} \in \Lambda^m(T^*_{\pi(\nu(\gamma))}M),
\]
for $\tilde{\theta} \in \Lambda^{m+1}_2(T^*_{\gamma}({\mathcal M}\pi))$ and $\gamma \in {\mathcal M}\pi$.
Note that $i_{\tilde{U}}\tilde{\theta} \in \Lambda^{m}_1(T^*_{\gamma}({\mathcal M}\pi))$, therefore, it induces an element of $\Lambda^m(T^*_{\pi(\nu(\gamma))}M)$.

We denote by $d^v{\mathcal F}_h$ the section of the vector bundle $V^*(\pi \circ \nu) \otimes (\pi \circ \nu)^*(\Lambda^mT^*M) \to {\mathcal M}\pi$ induced by the differential of ${\mathcal F}_h$. In local coordinates, if
\[
h(x^{i}, u^{\alpha}, p_{\alpha}^{i}) = (x^{i}, u^{\alpha}, -H(x, u, p), p_\alpha^{i})
\]
then
\begin{equation}\label{extended-Ham-den}
{\mathcal F}_h(x^{i}, u^{\alpha}, p, p_{\alpha}^{i}) = (p + H(x^{i}, u^{\alpha}, p_{\alpha}^{i}))d^mx
\end{equation}
and
\begin{equation}\label{d-v-F}
d^v{\mathcal F}_h(x^{i}, u^{\alpha}, p, p_{\alpha}^{i}) = \Big(dp + \frac{\partial H}{\partial u^{\alpha}} du^{\alpha} + \frac{\partial H}{\partial p_{\alpha}^{i}}dp_{\alpha}^{i}\Big)\otimes d^mx.
\end{equation}

Note that if $\mu: {\mathcal M}\pi \to {\mathcal M}^0\pi$ is the canonical projection, $\Phi$ is a $m$-form on $M$ and $\Phi^{\bf v} \in \Gamma(V\mu)$ is the vertical lift to ${\mathcal M}\pi$ (see Appendix \ref{ver-lift-sec}) then, using (\ref{d-v-F}) and (\ref{vertical-lift-dmx}), we deduce that
\begin{equation}\label{property_dv}
(d^v{\mathcal F}_h)(\Phi^{\bf v}) = (\pi \circ \nu)^*(\Phi).
\end{equation} 
This property of $d^v{\mathcal F}_h$ motivates the definition of the following affine subbundle of $V^*(\pi \circ \nu) \otimes (\pi \circ \nu)^*(\Lambda^mT^*M)$.

\begin{definition} The \textbf{extended phase bundle} of the configuration bundle $\pi: E \to M$ is the affine subbundle $\widetilde{\mathbb{P}(\pi)}$ of $V^*(\pi \circ \nu) \otimes (\pi \circ \nu)^*(\Lambda^mT^*M)$ whose fiber at the point $\gamma \in {\mathcal M}\pi$ is
\begin{equation}\label{P-Pi-tilde}
\widetilde{\mathbb{P}(\pi)}(\gamma) = \big\{ \tilde{\mathcal A}\in \operatorname{Lin}( V_{\gamma}(\pi \circ \nu), \Lambda^mT^*_{\pi(\nu(\gamma))}M)  
\mid \tilde{\mathcal A} (\Omega^{\bf v}(\gamma)) = \Omega, \forall \;\Omega \in \Lambda^mT^*_{\pi(\nu(\gamma))}M\big\}.
\end{equation}
\end{definition}
From \eqref{property_dv} we have
\[
d^v{\mathcal F}_h \in \Gamma(\widetilde{\mathbb{P}(\pi)}).
\]
Note that $\widetilde{\mathbb{P}(\pi)}$ is modelled over the vector bundle $V(\widetilde{\mathbb{P}(\pi)})$ whose fiber at the point $\gamma \in {\mathcal M}\pi$ is
\[
V(\widetilde{\mathbb{P}(\pi)})(\gamma) = \{ \tilde{\nu}\in \operatorname{Lin}(V_{\gamma}(\pi \circ \nu) , \Lambda^mT^*_{\pi(\nu(\gamma))}M) \mid \tilde{\nu}(\Omega^{\bf v}(\gamma)) = 0, \forall \;\Omega \in \Lambda^mT^*_{\pi(\nu(\gamma))}M \}.
\]
We remark that an element $\tilde{\mathcal A}$ of  $\widetilde{\mathbb{P}(\pi)}$ has the following local form
\[
\tilde{\mathcal A} = (dp + \tilde{\mathcal A}_\alpha du^{\alpha} + \tilde{\mathcal A}_{i}^{\alpha} dp_{\alpha}^{i})\otimes d^mx
\]
and a generic element $\tilde{\nu}$ of $V(\widetilde{\mathbb{P}(\pi)})$ has the local form
\[
\tilde{\nu} =  (\tilde{\mathcal A}_\alpha du^{\alpha} + \tilde{\mathcal A}_{i}^{\alpha} dp_{\alpha}^{i})\otimes d^mx.
\]
Therefore, the local coordinates on $\widetilde{\mathbb{P}(\pi)}$ and $V(\widetilde{\mathbb{P}(\pi)})$ are $(x^{i}, u^{\alpha}, p, p_{\alpha}^{i}; \tilde{\mathcal A}_\alpha, \tilde{\mathcal A}_{i}^{\alpha})$. In addition, 
\begin{equation}\label{d-v-F-h-local}
(d^v{\mathcal F}_h)(x^{i}, u^{\alpha}, p, p_{\alpha}^{i}) = \Big(x^{i}, u^{\alpha}, p, p_{\alpha}^{i}; \frac{\partial H}{\partial u^{\alpha}}, \frac{\partial H}{\partial p_\alpha^{i}}\Big).
\end{equation}


\subsubsection{The differential of a Hamiltonian section and the phase bundle}

Now, given a point $x \in M$, we can consider an action of the abelian group $\Lambda^mT^*_xM$ on the fiber $(\pi \circ \nu)^{-1}(x)$ defined as follows. If $\Phi \in \Lambda^m(T^*_{x}M)$ then we define $\Phi \; \cdot: (\pi \circ \nu)^{-1}(x) \to (\pi \circ \nu)^{-1}(x)$ by
\begin{equation}\label{Def-action-fibred}
\Phi \cdot \gamma = \gamma + (\Lambda^mT^*_{\nu(\gamma)}\pi)(\Phi), \; \; \mbox{ for } \gamma \in (\pi \circ \nu)^{-1}(x).
\end{equation}
In local coordinates, we get
\[
(\lambda d^mx)\cdot (x^{i}, u^{\alpha}, p, p_{\alpha}^{i}) = (x^{i}, u^{\alpha}, p + \lambda, p_{\alpha}^{i})
\]
and thus the quotient space ${\mathcal M}\pi / \pi^*(\Lambda^mT^*M)$ may be identified with the reduced multimomentum bundle ${\mathcal M}^0\pi$.

The tangent and cotangent lift of the previous action induces a fibred action of the vector bundle $(\pi\circ \nu)^*(\Lambda^mT^*M)$ on the vector bundles $V(\pi \circ \nu)$ and $V^*(\pi \circ \nu) \otimes (\pi\circ \nu)^*(\Lambda^mT^*M)$. In fact, if $\gamma \in \Lambda^m_2(T^*_yE)$, $\tilde{U} \in V_\gamma(\pi \circ \nu)$ and $\Phi \in \Lambda^mT^*_{\pi(y)}M$ then the tangent lift is
\begin{equation}\label{Vertical-action}
\Phi \; \cdot \tilde{U} = (T_{\gamma}(\Phi \cdot))(\tilde{U}) \in V_{\Phi \cdot \gamma}(\pi \circ \nu).
\end{equation}
Note that, using (\ref{Def-action-fibred}), (\ref{Vertical-action}) and (\ref{vertical-lift-dmx}), it follows that
\begin{equation}\label{Action-Vertical-lifts}
\Phi \cdot \Omega^{\bf v}_{\gamma} = \Omega^{\bf v}_{\Phi \cdot \gamma},
\end{equation}
for $\Omega \in \Lambda^mT^*_{\pi(y)}M$. If $(x^{i}, u^{\alpha}, p, p_{\alpha}^{i}; \dot{u}^{\alpha}, \dot{p}, \dot{p}_{\alpha}^{i})$ are local coordinates on $V(\pi \circ \nu)$, we have
\begin{equation}\label{R-action-vertical} 
(\lambda d^m x)\cdot (x^{i}, u^{\alpha}, p, p_{\alpha}^{i}; \dot{u}^{\alpha}, \dot{p}, \dot{p}_{\alpha}^{i}) = (x^{i}, u^{\alpha}, p + \lambda, p_{\alpha}^{i}; \dot{u}^{\alpha}, \dot{p}, \dot{p}_{\alpha}^{i}).
\end{equation}
In a similar way, if $\tilde{\theta} \in V^{*}_{\gamma}(\pi \circ \nu) \otimes \Lambda^mT^*_{\pi(\nu(\gamma))}M$ then the cotangent lift is
\begin{equation}\label{Vertical-Dual-action}
(\Phi \cdot \tilde{\theta}) (\tilde{U}') = \tilde{\theta}(-\Phi \cdot \tilde{U}'),
\end{equation}
for $\tilde{U}' \in V_{\Phi \cdot \gamma}(\pi \circ \nu)$.
From (\ref{P-Pi-tilde}) and (\ref{Action-Vertical-lifts}), we deduce that this action restricts to the extended phase bundle $\widetilde{\mathbb{P}(\pi)}$, and to the vector bundle $V(\widetilde{\mathbb{P}(\pi)})$. In local coordinates, we have
\[
(\lambda d^m x)\cdot (x^{i}, u^{\alpha}, p, p_{\alpha}^{i}; \tilde{\mathcal A}_{\alpha}, \tilde{\mathcal A}^{\alpha}_i) =  (x^{i}, u^{\alpha}, p + \lambda, p_{\alpha}^{i}; \tilde{\mathcal A}_{\alpha}, \tilde{\mathcal A}^{\alpha}_i).
\]
Taking the quotient with respect to the action, we can introduce the following definition.

\begin{definition}\label{Def-phase-bundle} The \textbf{phase bundle} of the configuration bundle $\pi: E \to M$ is defined by
\[
\displaystyle \mathbb{P}(\pi) = \frac{\widetilde{\mathbb{P}(\pi)}}{(\pi\circ \nu)^*(\Lambda^mT^*M)}.
\]
\end{definition}

We note that $\mathbb{P}(\pi)$ is an affine bundle over ${\mathcal M}^0\pi = {\mathcal M}\pi / \pi^*(\Lambda^mT^*M)$ modelled over the vector bundle $\displaystyle V(\mathbb{P}(\pi)) = \frac{V(\widetilde{\mathbb{P}(\pi))}}{(\pi\circ \nu)^*(\Lambda^mT^*M)}$. This bundle is isomorphic to the vector bundle 
\[
V^*(\pi \circ \nu^0) \otimes (\pi \circ \nu^0)^*(\Lambda^mT^*M) = \cup_{\gamma^0 \in {\mathcal M}^0\pi} \operatorname{Lin} (V_{\gamma^0}(\pi \circ \nu^0), \Lambda^mT^*_{\pi(\nu^0(\gamma^0))}M),
\]
an isomorphism being given by
\[
{\mathcal A} \in \operatorname{Lin}(V_{\gamma^0}(\pi \circ \nu^0), \Lambda^mT^*_{\pi(\nu^0(\gamma^0))}M) \to [\tilde{\mathcal A}] \in V(\mathbb{P}(\pi))_{\gamma^0},
\]
where $\tilde{\mathcal A}: V_{\gamma}(\pi \circ \nu) \to \Lambda^mT^*_{\pi(\nu(\gamma))}M$ is defined by
\[
\tilde{\mathcal A}(\tilde{U}) = {\mathcal A}((T_{\gamma}\mu)(\tilde{U})), \; \; \; \mbox{ for } \tilde{U} \in V_{\gamma}(\pi \circ \nu)
\]
with $\gamma \in {\mathcal M}\pi$ and $\mu(\gamma) = \gamma^0$. Local coordinates on $\mathbb{P}(\pi)$ and $V({\mathbb{P}(\pi))}$ are
\begin{equation}\label{loc_coord_phase_bundle} 
(x^{i}, u^{\alpha}, p_{\alpha}^{i}; {\mathcal A}_{\alpha}, {\mathcal A}_{i}^{\alpha}).
\end{equation}

It is clear that there exists a one-to-one correspondence between the space of sections of the affine bundle $\mathbb{P}(\pi) \to {\mathcal M}^0\pi$ and the set of sections of the extended phase bundle $\widetilde{\mathbb{P}(\pi)}$ associated with $\pi$, which are $(\pi \circ \nu)^*(\Lambda^mT^*M)$-equivariant.
So, if $h: {\mathcal M}^0\pi \to {\mathcal M}\pi$ is a Hamiltonian section then, using (\ref{d-v-F-h-local}), it is easy to see that the vertical differential $d^v{\mathcal F}_h$ is $(\pi \circ \nu)^*(\Lambda^mT^*M)$-equivariant and, therefore, it induces a section
\[
dh: {\mathcal M}^0\pi \to \mathbb{P}(\pi)
\]
of the phase bundle $\mathbb{P}(\pi)$. We can thus write the following definition.

\begin{definition}\label{differential-h}
The \textbf{differential of a Hamiltonian section} $h: {\mathcal M}^0\pi \to {\mathcal M}\pi$ is the section
\[
dh: {\mathcal M}^0\pi \to \mathbb{P}(\pi)
\]
defined by the following commutative diagram
\[
\xymatrix{{\mathcal M}\pi
\ar[dd]_{\mu} \ar[rr]^{d^v{\mathcal F}_h}&    &\widetilde{\mathbb{P}(\pi)} \ar[dd]^{\tilde{\mu}}\\
  &  & \\
{\mathcal M}^0\pi \ar[rr]^{dh}&  & \mathbb{P}(\pi) }
\]
where $\mu: {\mathcal M}\pi \to {\mathcal M}^0\pi$ and $\tilde{\mu}: \widetilde{\mathbb{P}(\pi)} \to \mathbb{P}(\pi)$ are the canonical projections. 
\end{definition}
The local expression of $dh$ is
\begin{equation}\label{d-v-H}
dh(x^{i}, u^{\alpha}, p_{\alpha}^{i}) = \Big(x^{i}, u^{\alpha}, p_{\alpha}^{i}; \frac{\partial H}{\partial u^{\alpha}}, \frac{\partial H}{\partial p_{\alpha}^{i}}\Big).
\end{equation}
So, we have given an answer to {\bf Questions 1 and 2} stated above.

From the previous definition, we get the map
\[
d: \Gamma(\mu) \to \Gamma(\mathbb{P}(\pi)),\quad h \in \Gamma(\mu) \to dh \in \Gamma(\mathbb{P}(\pi)).
\]
Note that $\Gamma(\mu)$ and $\Gamma(\mathbb{P}(\pi))$ are affine spaces modelled over the vector spaces $\Gamma((\pi \circ \nu^0)^*(\Lambda^mT^*M))$ and $\Gamma(V(\mathbb{P}(\pi)))$, respectively, and $d$ is an affine map. Later in the paper, we shall use the corresponding linear map $d^l: \Gamma((\pi \circ \nu^0)^*(\Lambda^mT^*M))\to \Gamma(V(\mathbb{P}(\pi)))$ defined as follows. If ${\mathcal F}^0 \in \Gamma((\pi \circ \nu^0)^*(\Lambda^mT^*M))$ is a $(\pi \circ \nu^0)^*(\Lambda^mT^*M)$-valued function on ${\mathcal M}^0\pi$ then it may be considered as a section of the vector bundle 
\[
\Lambda^m_1(T^*({\mathcal M}^0\pi)) \to {\mathcal M}^0\pi.
\]
So, we can take the standard differential $d{\mathcal F}^0$ and we obtain a section of the vector bundle 
\[
\Lambda^{m+1}_2(T^*({\mathcal M}^0\pi)) \to {\mathcal M}^0\pi.
\]
This vector bundle is isomorphic to $V^*(\pi \circ \nu^0) \otimes (\pi \circ \nu^0)^*(\Lambda^mT^*M) \to {\mathcal M}^0\pi$, an isomorphism 
\[
\Psi^0: \Lambda^{m+1}_2(T^*({\mathcal M}^0\pi)) \to V^*(\pi \circ \nu^0) \otimes (\pi \circ \nu^0)^*(\Lambda^mT^*M)
\]
is given by
\[
\Psi^0(\theta): V_{{\gamma}^0}(\pi \circ \nu^0) \to \Lambda^m(T^*_{\pi(\nu^0(\gamma^0))}M), \; \; U \to i_{U}\theta,
\]
for $\theta \in \Lambda^{m+1}_2(T^*_{\gamma^0}({\mathcal M}^0\pi))$ and $\gamma^0 \in {\mathcal M}^0\pi$.
Note that $i_{U}\theta \in \Lambda^{m}_1(T^*_{\gamma^0}({\mathcal M}^0\pi))$ and, therefore, it induces an element of $\Lambda^m(T^*_{\pi(\nu^0(\gamma^0))}M)$. We denote by $d^l{\mathcal F}^0$ the section of the vector bundle $V^*(\pi \circ \nu^0) \otimes (\pi \circ \nu^0)^*(\Lambda^mT^*M) \to {\mathcal M}^0\pi$ induced by the differential $d{\mathcal F}^0$ via the isomorphism $\Psi^0$. If locally
\[
{\mathcal F}^0(x^{i}, u^{\alpha}, p_{\alpha}^{i}) = F^0(x^{i}, u^{\alpha}, p_{\alpha}^{i}) \otimes d^m x,
\]
the local expression of $d^l{\mathcal F}^0$ is
\begin{equation}\label{d-l-F-0}
d^l{\mathcal F}^0(x^{i}, u^{\alpha}, p_{\alpha}^{i}) = \Big(x^{i}, u^{\alpha}, p_{\alpha}^{i}; \frac{\partial F^0}{\partial u^{\alpha}}, \frac{\partial F^0}{\partial p_{\alpha}^{i}}\Big).
\end{equation}

\subsubsection{Comments on the next steps}
The differential $dh: {\mathcal M}^0\pi \to \mathbb{P}(\pi)$ of a Hamiltonian section is the field theoretic analogue to the differential $dH: T^*Q \rightarrow T^*(T^*Q)$ of a Hamiltonian function in classical mechanics.
In the next section, we will introduce a quotient affine bundle $J^1(\pi \circ \nu^0)/ \operatorname{Ker A} \to {\mathcal M}^0\pi$ which is the field theoretic analogue to the tangent bundle $T(T^*Q)$ of the phase space in classical mechanics.
Recall that using the canonical symplectic structure of $T^*Q$, one can define a canonical vector bundle isomorphism 
\[
\sharp: T^*(T^*Q) \to T(T^*Q)
\]
and the Hamiltonian vector field $X_H$ on $T^*Q$ associated with a Hamiltonian function $H \in C^{\infty}(T^*Q)$ is given by $X_H = \sharp \circ dH$.
So, a natural question arises:

{\bf Question 3}: Does there exist an affine bundle isomorphism $\sharp^{\rm aff}: \mathbb{P}(\pi) \rightarrow J^{1}(\pi \circ \nu^0)/ \operatorname{Ker} A$ which, in the presence of a Hamiltonian section $h: {\mathcal M}^0\pi \to {\mathcal M}\pi$, allows us to introduce a distinguished section $\Gamma_h$ of the affine bundle $J^1(\pi \circ \nu^0)/\operatorname{Ker} A \to {\mathcal M}^0\pi$?

In the next Section \ref{canonical-isomorphism}, we will give an affirmative answer to {\bf Question 3} and we will discuss the relation between $\Gamma_h$ and the solutions of the Hamilton-deDonder-Weyl equations for $h$.
The section $\Gamma_h$ will play the role of $X_H$ in Hamiltonian Mechanics.


\subsection{The field-theoretic analogue to the canonical isomorphism $\sharp: T^*(T^*Q) \to T(T^*Q)$}\label{canonical-isomorphism} We will show that it is given by an affine bundle isomorphism $\sharp^{\rm aff}: \mathbb{P}(\pi) \rightarrow J^{1}(\pi \circ \nu^0)/ \operatorname{Ker} A$.

Let $J^1(\pi \circ \nu^0)$ be the $1$-jet bundle associated with the fibration $\pi \circ \nu^0: {\mathcal M}^0\pi \to M$ (see Appendix \ref{1-jet-bundle}). To define the quotient affine bundle, we shall use a construction in \cite{Gr}. In this paper, the author introduced an affine bundle epimorphism
\[
A: J^{1}(\pi \circ \nu^0) \to \displaystyle \mathbb{P}(\pi) = \frac{\widetilde{\mathbb{P}(\pi)}}{(\pi\circ \nu)^*(\Lambda^mT^*M)}
\]
over the identity of ${\mathcal M}^0\pi$. This epimorphism is constructed in several steps.

First, we consider the vector bundle monomorphism
\[
\tilde{\flat}: V(\pi \circ \nu) \to {\mathcal M}(\pi \circ \nu) \simeq \Lambda^m_2(T^*{\mathcal M}\pi) \subseteq \Lambda^m(T^*{\mathcal M}\pi)
\]
induced by the canonical multisymplectic structure $\omega_{{\mathcal M}\pi}$ as follows
\begin{equation}\label{b-tilde-multi}
\tilde{\flat}(\tilde{U}) = i_{\tilde{U}}\omega_{{\mathcal M}\pi}(\gamma), \; \; \mbox{ for } \tilde{U} \in V_\gamma(\pi \circ \nu).
\end{equation}
Note that, from (\ref{Local-multi-sym}), there exists a unique $m$-form at $\mu(\gamma)$ on ${\mathcal M}^0\pi$, which we denote by $\bar{\flat}(\tilde{U})$, such that
\begin{equation}\label{3.18'}
i_{\tilde{U}}\omega_{{\mathcal M}\pi}(\gamma) = (\Lambda^mT^*_\gamma \mu)(\bar{\flat}(\tilde{U})).
\end{equation}
This defines a vector bundle morphism
\[
\bar{\flat}: V(\pi \circ \nu) \to {\mathcal M}(\pi \circ \nu^0) \simeq \Lambda^m_2(T^*{\mathcal M}^0\pi) \subseteq \Lambda^m(T^*{\mathcal M}^0\pi)
\]
over the canonical projection $\mu: {\mathcal M}\pi \to {\mathcal M}^0\pi$. Using local coordinates $(x^{i}, u^\alpha, p, p_\alpha^{i}; \dot{u^\alpha}, \dot{p}, \dot{p}_\alpha^{i})$ and $(x^{i}, u^\alpha, p_\alpha^{i}; \bar{p}, \bar{p}_\alpha^{i}, \bar{p}_i^{\alpha j})$ on $V(\pi \circ \nu)$ and ${\mathcal M}(\pi \circ \nu^0)$, respectively, the local expression of $\bar{\flat}$ is
\begin{equation}\label{Local-b-tilde}
\bar{\flat}(x^{i}, u^\alpha, p, p_\alpha^{i}; \dot{u^\alpha}, \dot{p}, \dot{p}_\alpha^{i}) = (x^{i}, u^\alpha, p_\alpha^{i}; -\dot{p}, -\dot{p}_\alpha^{i}, \bar{p}_i^{\alpha j} = \dot{u}^\alpha \delta^j_i).
\end{equation}
If $\gamma \in {\mathcal M}\pi$, the following commutative diagram
\begin{equation}\label{Diagram-Comm-1}
\xymatrix{V_\gamma(\pi \circ \nu)\ar[rr]^{\tilde{\flat}_\gamma}\ar[dr]^{\bar{\flat}_\gamma}&&{\mathcal M}_\gamma(\pi \circ \nu) = \Lambda_2^mT_\gamma^*({\mathcal M}\pi)\\&
{\mathcal M}_{\mu(\gamma)}(\pi \circ \nu^0) = \Lambda_2^mT_{\mu(\gamma)}^*({\mathcal M}^0\pi)\ar[ur]_{\Lambda^mT^*_\gamma\mu}&}
\end{equation}
illustrates the relation between $\tilde{\flat}$ and $\bar{\flat}$.

We shall now use the vector bundle morphism $\bar{\flat}$ to construct $A$. For $\gamma^0 \in {\mathcal M}^0\pi$ and $Z^0\in J^1_{\gamma^0}(\pi \circ \nu^0)$ one first defines
\[
\widetilde{A(Z^0)} \in \widetilde{\mathbb{P}(\pi)}(\gamma) \subseteq V_{\gamma}^*(\pi \circ \nu) \otimes \Lambda^mT^*_{\pi(\nu(\gamma))}M,
\]
with $\gamma \in {\mathcal M}\pi$ such that $\mu(\gamma) = \gamma^0$, as follows:
\begin{equation}\label{3.15'}
\big\langle\widetilde{A(Z^0)}, \tilde{U} \big\rangle = - \Lambda^m((Z^0)^*)(\bar{\flat}(\tilde{U})),
\end{equation}
for $\tilde{U} \in V_\gamma(\pi \circ \nu)$.
Then, if $\tilde{\mu}: \widetilde{\mathbb{P}(\pi)} \to \mathbb{P}(\pi)$ is the canonical projection, we set
\begin{equation}\label{3.15''}
A(Z^0) = \tilde{\mu}(\widetilde{A(Z^0)}).
\end{equation}
Note that $A$ is well-defined and its local expression is
\begin{equation}\label{Local-A}
A(x^{i}, u^{\alpha}, p_{\alpha}^{i}; u^{\alpha}_j, p_{\alpha j}^{i}) = \Big(x^{i}, u^{\alpha}, p_{\alpha}^{i}; - \sum_{i} p^{i}_{\alpha i}, u^{\alpha}_j\Big).
\end{equation}
This proves that $A$ is an affine bundle epimorphism over the identity of ${\mathcal M}^0\pi$ (for more details, see \cite{Gr}).

Recall that the affine bundle $J^1(\pi \circ \nu^0) \to {\mathcal M}^0\pi$ is modelled over the vector bundle $V(J^1(\pi \circ \nu^0)) = (\pi \circ \nu^0)^*(T^*M) \otimes V(\pi \circ \nu^0) \to {\mathcal M}^0\pi$. We denote by $(x^{i}, u^{\alpha}, p_\alpha^{i}; u^\alpha_j, p^{i}_{\alpha j})$ the standard coordinates on $J^1(\pi \circ \nu^0)$ and $V(J^1(\pi \circ \nu^0))$ (see Appendix \ref{1-jet-bundle}).
From the local expression (\ref{Local-A}), it follows that the kernel of $A$ is a vector subbundle of $V(J^1(\pi \circ \nu^0))$ which is locally characterized by
\begin{equation}\label{Local-Ker-A}
\operatorname{Ker} A = \big\{ (x^{i}, u^{\alpha}, p_\alpha^{i}; u^\alpha_j, p^{i}_{\alpha j}) \in V(J^1(\pi \circ \nu^0)) \mid u^\alpha_j = 0, \sum_i p_{\alpha i}^{i} = 0, \forall \alpha, i\big\}.
\end{equation}
We can thus consider the quotient affine bundle $J^1(\pi \circ \nu^0) / \operatorname{Ker} A \to {\mathcal M}^0\pi $ which is modelled over the quotient vector bundle $(\pi \circ \nu^0)^*(T^*M) \otimes V(\pi \circ \nu^0) / \operatorname{Ker} A \to {\mathcal M}^0\pi$. From (\ref{Local-Ker-A}), we have that a local basis of sections for this vector bundle is
\[
\big\{\nu_{\alpha}^i = \big[dx^{i} \otimes \frac{\partial}{\partial u^{\alpha}}\big], \nu^\alpha = 1/m\big[ dx^{1} \otimes \frac{\partial}{\partial p_{\alpha}^1} + \dots + dx^m \otimes \frac{\partial}{\partial p_{\alpha}^m}\big] \big\}
\]
for $i \in \{1, \dots, m\}$ and $\alpha \in \{1, \dots, n\}$. Note that in the quotient vector bundle 
\[
\big[dx^{i} \otimes \frac{\partial}{\partial p_{\alpha}^i}\big] = \big[dx^{j} \otimes \frac{\partial}{\partial p_{\alpha}^j}\big], \; \; \forall \; i, j \mbox{ and } \alpha.
\]
Local coordinates associated to this basis of sections on the quotient vector bundle $(\pi \circ \nu^0)^*(T^*M) \otimes V(\pi \circ \nu^0)/ \operatorname{Ker} A\to {\mathcal M}^0\pi$ (and also on the quotient affine bundle $J^1(\pi \circ \nu^0)/ \operatorname{Ker} A$) are denoted $(x^{i}, u^{\alpha}, p_{\alpha}^{i}; \hat{u}^{\alpha}_j, \hat{p}_{\alpha})$ 

The affine bundle epimorphism $A: J^1(\pi \circ \nu^0) \to \mathbb{P}(\pi)$ induces an affine bundle isomorphism
\[
\hat{A}: J^1(\pi \circ \nu^0)/\operatorname{Ker}  A \to \mathbb{P}(\pi)
\]
and, from (\ref{Local-A}), we deduce that the local expression of $\hat{A}$ is
\begin{equation}\label{Local-A-hat}
\hat{A}(x^{i}, u^\alpha, p_\alpha^{i}; \hat{u}^\alpha_i, \hat{p}_\alpha) =  (x^{i}, u^\alpha, p_\alpha^{i}; -\hat{p}_\alpha, \hat{u}^\alpha_i).
\end{equation} 
By definition, the affine bundle isomorphism
\[
\sharp^{\rm aff}: \mathbb{P}(\pi) \to J^1(\pi \circ \nu^0) /\operatorname{Ker}  A
\]
is the inverse isomorphism to $\hat{A}: J^1(\pi \circ \nu^0) /\operatorname{Ker}  A \to \mathbb{P}(\pi)$.
If we consider the local coordinates $(x^{i}, u^{\alpha}, p_\alpha^{i}; {\mathcal A}_\alpha, {\mathcal A}^\alpha_i)$ on the phase bundle $\mathbb{P}(\pi)$ then, using (\ref{Local-A-hat}), it follows that
\begin{equation}\label{Local-sharp-Aff}
\sharp^{\rm aff}(x^{i}, u^{\alpha}, p_\alpha^{i}; {\mathcal A}_\alpha, {\mathcal A}^\alpha_i) = (x^{i}, u^{\alpha}, p_\alpha^{i}; {\mathcal A}^\alpha_i, -{\mathcal A}_\alpha).
\end{equation}

\subsection{The field-theoretic analogue to the Hamiltonian vector field $X_H:TQ \rightarrow T(T^*Q)$}\label{Gamma_h}
Let $h: {\mathcal M}^0\pi \to {\mathcal M}\pi$ be a Hamiltonian section. We have seen that the differential of $h$
\[
dh: {\mathcal M}^0\pi \to \mathbb{P}(\pi)
\]
is a section of the phase bundle $\mathbb{P}(\pi)$. So, we can define the section
\[
\Gamma_h: {\mathcal M}^0\pi \to J^1(\pi \circ \nu^0)/ \operatorname{Ker} A
\]
of the quotient affine bundle $J^1(\pi \circ \nu^0)/ \operatorname{Ker} A \to {\mathcal M}^0\pi$ by
\begin{equation}\label{Gamma-h-0}
\Gamma_h = \sharp^{\rm aff} \circ dh.
\end{equation}
Using (\ref{d-v-H}), (\ref{Local-sharp-Aff}) and the local basis of sections $\{\nu_\alpha^i, \nu^\alpha\}$ introduced above, we obtain that the local expression of $\Gamma_h$ is
\begin{equation}\label{Gamma-h}
\Gamma_h - [dx^{i} \otimes \frac{\partial}{\partial x^{i}}] = \frac{\partial H}{\partial p_{\alpha}^i} \nu^{i}_{\alpha} - \frac{\partial H}{\partial u^{\alpha}} \nu^{\alpha}.
\end{equation}

Now, we will show that $\Gamma_h$ plays the same role, in Hamiltonian Classical Field Theories of first order, that the Hamiltonian vector field associated with a Hamiltonian function in Classical Mechanics. This will give an affirmative answer to {\bf Question 3} in Section \ref{sec:phase bundle}.

For this purpose, we will discuss the relation between $\Gamma_h$ and the solutions of the Hamilton-deDonder-Weyl equations for $h$. This uses the notion of a Hamiltonian connection (see \cite{DeMaMa2,DeMaMa1}).

Let $\tau: N \to B$ be an arbitrary fibration and $H$ an Ehresmann connection on $\tau: N \to B$. Denote by $^H: N \times_B TB \to H \subseteq TN$ the horizontal lift induced by $H$ (see Appendix \ref{1-jet-bundle}). It is clear that if $1 \leq r \leq b = \operatorname{dim} B$ the previous map induces a vector bundle isomorphism between $N \times_B \Lambda^r(TB)$ and $\Lambda^r H$ which we also denote by
\[
^H: N \times_B \Lambda^r(TB) \to \Lambda^rH \subseteq \Lambda^r(TN).
\]
So, if $\chi \in \Gamma (N \times_B \Lambda^r(TB))$, the image $\chi^H$ of $\chi$ by the previous map is called the horizontal lift of $\chi$.
If $(b^{i}, n^{\alpha})$ are local coordinates on $N$ which are adapted to the fibration $\tau$, then the horizontal lift reads locally
\[
\left( \frac{\partial}{\partial b^{i}}\right)^H = \frac{\partial}{\partial b^{i}} + H^{\alpha}_i\frac{\partial}{\partial n^{\alpha}}, \; \; \forall i \in \{1, \dots, b\}
\]
and if
\[
\chi = \chi_{i_1 \dots i_r}\frac{\partial}{\partial b^{i_1}} \wedge \dots \wedge \frac{\partial}{\partial b^{i_r}} \in \Gamma(\Lambda^rTB)
\]
then its horizontal lift is
\[
\chi^H = \chi_{i_1 \dots i_r}
\left(\frac{\partial}{\partial b^{i_1}}\right)^H 
\wedge \dots \wedge \left(\frac{\partial}{\partial b^{i_r}}\right)^H
 = \chi_{i_1 \dots i_r}\left(\frac{\partial}{\partial b^{i_1}} + H_{i_1}^{\alpha_1} \frac{\partial}{\partial n^{\alpha_1}}\right) \wedge \dots \wedge \left(\frac{\partial}{\partial b^{i_r}} + H_{i_r}^{\alpha_r} \frac{\partial}{\partial n^{\alpha_r}}\right).
\]

Now, suppose that $\pi: E \to M$ is the configuration bundle of a Hamiltonian Classical Field Theory of first order, with $m = \operatorname{dim} M$, and consider the fibration $\pi \circ \nu^0: {\mathcal M}^0\pi \to M$. For a Hamiltonian section $h: {\mathcal M}^0\pi \to {\mathcal M}\pi$, we denote by $\omega_h$ the $(m+1)$-form on ${\mathcal M}^0\pi$ given by (\ref{w-h}).

Then, we may prove the following result.
\begin{lemma}
Let $h: {\mathcal M}^0\pi \to {\mathcal M}\pi$ be a Hamiltonian section, let $H$ be an Ehresmann connection on the fibration $\pi \circ \nu^0: {\mathcal M}^0\pi \to M$, and let $s^0: U \subseteq M \to {\mathcal M}^0\pi$ be a section horizontal with respect to $H$. Then, $s^0$ is a solution of the Hamilton-deDonder-Weyl equations  for $h$ if and only if
\begin{equation}\label{H-conn-1}
(s^0)^*\left(i_{\chi^H}\omega_h\right)= 0, \; \; \forall \;\chi \in \Gamma(\Lambda^m(TM)).
\end{equation}
\end{lemma}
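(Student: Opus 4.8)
The plan is to derive the equivalence from Proposition~\ref{Prop_HdDW}, which already characterises the solutions of the Hamilton--deDonder--Weyl equations by the vanishing of the pull‑backs $(s^0)^*(i_U\omega_h)$ for $U$ ranging over the vertical vector fields of $\pi\circ\nu^0\colon{\mathcal M}^0\pi\to M$; what remains is to rewrite that characterisation as \eqref{H-conn-1}, and the hypothesis that $s^0$ is horizontal with respect to $H$ is precisely what makes this rewriting possible.

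First I would reduce the family of test multivectors. Since $\dim M=m$, the bundle $\Lambda^m(TM)\to M$ is a line bundle, generated over a chart $(x^i)$ of $M$ by $\chi_0=\frac{\partial}{\partial x^1}\wedge\cdots\wedge\frac{\partial}{\partial x^m}$; moreover $\chi\mapsto\chi^H$ is $C^\infty(M)$‑linear, being induced by the horizontal‑lift morphism recalled just before the statement, and hence so is $\chi\mapsto(s^0)^*(i_{\chi^H}\omega_h)$. Therefore \eqref{H-conn-1} for all $\chi$ is equivalent to the single identity $(s^0)^*(i_{\chi_0^H}\omega_h)=0$ holding in a coordinate atlas of $M$.

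Next I would work in adapted coordinates $(x^i,u^\alpha,p^i_\alpha)$ on ${\mathcal M}^0\pi$. Write $s^0(x^i)=(x^i,u^\alpha(x),p^i_\alpha(x))$ and the horizontal lifts $e_i=\bigl(\tfrac{\partial}{\partial x^i}\bigr)^{H}=\tfrac{\partial}{\partial x^i}+H^\alpha_i\tfrac{\partial}{\partial u^\alpha}+H^{\,j}_{\alpha i}\tfrac{\partial}{\partial p^j_\alpha}$, so that $\chi_0^H=e_1\wedge\cdots\wedge e_m$; the hypothesis that $s^0$ is horizontal with respect to $H$ says exactly that $\partial u^\alpha/\partial x^i=H^\alpha_i\circ s^0$ and $\partial p^j_\alpha/\partial x^i=H^{\,j}_{\alpha i}\circ s^0$, i.e. $Ts^0\bigl(\tfrac{\partial}{\partial x^i}\bigr)=e_i\circ s^0$. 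Using the local expression \eqref{Local-w-h} of $\omega_h$ together with the pairings $\langle dx^i,e_j\rangle=\delta^i_j$, $\langle du^\alpha,e_j\rangle=H^\alpha_j$, $\langle dp^i_\alpha,e_j\rangle=H^{\,i}_{\alpha j}$, one computes the iterated interior product $i_{\chi_0^H}\omega_h=i_{e_m}\cdots i_{e_1}\omega_h$ and then its pull‑back by $s^0$. Collecting terms (and substituting the relations that express horizontality of $s^0$), the result is a $1$‑form which vanishes exactly when the Hamilton--deDonder--Weyl residuals $\bigl(H^\alpha_i-\partial H/\partial p^i_\alpha\bigr)\circ s^0$ and $\bigl(\sum_i H^{\,i}_{\alpha i}+\partial H/\partial u^\alpha\bigr)\circ s^0$ all vanish — that is, by \eqref{HdDW_intro} and the proof of Proposition~\ref{Prop_HdDW}, exactly when $s^0$ is a solution. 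This establishes both implications at once.

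The step I expect to be the main obstacle is the interior‑product computation itself: contracting the decomposable $m$‑vector $e_1\wedge\cdots\wedge e_m$ with the two kinds of terms in \eqref{Local-w-h} — those built from $du^\alpha\wedge dp^i_\alpha\wedge d^{m-1}x_i$ and the term $dH\wedge d^mx$ — keeping careful track of the signs coming from $d^{m-1}x_i=i\bigl(\tfrac{\partial}{\partial x^i}\bigr)d^mx$, and then arranging the outcome so that, after the pull‑back by the horizontal section $s^0$, it collapses precisely onto the two families of residuals above. A secondary, routine point is to justify the reduction to the local generator $\chi_0$ carefully: the $C^\infty(M)$‑linearity of the whole assignment and its compatibility under changes of chart.
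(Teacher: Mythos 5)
There is a genuine gap, and it sits exactly at the step you flagged as the ``main obstacle'': the interior-product computation does not come out the way you claim. If $(s^0)^*$ denotes the honest pullback of the $1$-form $i_{\chi^H}\omega_h$, then for any section $s^0$ horizontal with respect to $H$ one has $T_xs^0(v)\in H_{s^0(x)}$, while $\chi^H(s^0(x))$ lies in $\Lambda^mH_{s^0(x)}$, the top exterior power of the $m$-dimensional space $H_{s^0(x)}$; hence $\chi^H\wedge T_xs^0(v)\in\Lambda^{m+1}H_{s^0(x)}=0$ and $\bigl((s^0)^*(i_{\chi^H}\omega_h)\bigr)(v)=\omega_h\bigl(\chi^H\wedge T_xs^0(v)\bigr)=0$ identically, whether or not $s^0$ solves the equations. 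You can see the cancellation explicitly for $m=1$: there $(s^0)^*(i_{e}\omega_h)=\bigl[(H^\alpha-\tfrac{\partial H}{\partial p_\alpha})\dot p_\alpha-(H_\alpha+\tfrac{\partial H}{\partial u^\alpha})\dot u^\alpha+H^\alpha\tfrac{\partial H}{\partial u^\alpha}+H_\alpha\tfrac{\partial H}{\partial p_\alpha}\bigr]dx$, which vanishes once you substitute the horizontality relations $\dot u^\alpha=H^\alpha$, $\dot p_\alpha=H_\alpha$ --- the residuals you expect to isolate cancel pairwise. So your route would conclude that \eqref{H-conn-1} holds for \emph{every} horizontal section, and the equivalence cannot be read off the pullback.

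The information actually lives in the vertical directions, which the pullback by $s^0$ never sees; this is also why the condition in \eqref{H-conn-1} has to be understood as the vanishing of the covector $i_{\chi^H}\omega_h$ at each point of $s^0(U)$ (on all of $T_{s^0(x)}{\mathcal M}^0\pi$), not merely on the image of $T_xs^0$. This is how the paper's proof proceeds: it splits $T_{s^0(x)}{\mathcal M}^0\pi=H_{s^0(x)}\oplus V_{s^0(x)}(\pi\circ\nu^0)$, notes that the vanishing on the horizontal summand is automatic by the dimension count above, and identifies the vanishing on the vertical summand, i.e. the condition $i_{\chi^H}i_U\omega_h=0$ along $s^0$ of \eqref{H-conn-2}, with the condition $(s^0)^*(i_U\omega_h)=0$ of Proposition \ref{Prop_HdDW}, using the horizontality of $s^0$ to evaluate the $m$-form $(s^0)^*(i_U\omega_h)$ on horizontal lifts. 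To repair your argument you should therefore test $i_{\chi^H}\omega_h$ against the vertical vectors $\partial/\partial u^\alpha$ and $\partial/\partial p^i_\alpha$ --- this is where the residuals $\sum_iH^i_{\alpha i}+\partial H/\partial u^\alpha$ and $H^\alpha_i-\partial H/\partial p^i_\alpha$ actually appear --- rather than against $T_xs^0(v)$. Your reduction to the single local generator $\chi_0$ by $C^\infty(M)$-linearity and your appeal to Proposition \ref{Prop_HdDW} are fine as stated.
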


\begin{proof}
Using that $s^0$ is horizontal with respect to $H$ we deduce that
\[
(s^0)^*(i_U \omega_h) = 0, \; \; \forall \;U \in \Gamma(V(\pi \circ \nu^0))
\]
if and only if 
\begin{equation}\label{H-conn-2}
(s^0)^*\left( i_{\chi^H}i_U \omega_h \right) = 0, \; \; \forall \;\chi \in \Gamma(\Lambda^m(TM)).
\end{equation}
So, it is clear that if (\ref{H-conn-1}) holds then $s^0$ is a solution of the Hamilton-deDonder-Weyl equations for $h$ by Proposition \ref{Prop_HdDW}.

Conversely, assume that $s^0$ is a solution of the Hamilton-deDonder-Weyl  equations for $h$. Then (\ref{H-conn-2}) holds. Thus, using that 
\[
T_{s_0(x)}{\mathcal M}^0\pi = \kern-3pt ({s_0(x)} \times T_xM)^H\oplus V_{s_0(x)}(\pi \circ \nu^0), \; \; \forall x \in U\subseteq M,
\]
we deduce that
\[
(s^0)^*\left(i_{\chi^H}\omega_h\right)(x) = 0, \; \; \forall \chi \in  \Gamma(\Lambda^mTM), \; \; \forall x \in U \subseteq M.
\]
This proves the result.
\end{proof} 
The previous result suggests the introduction of the following definition.
\begin{definition}\label{Hamiltonian-connection} Let $h: {\mathcal M}^0\pi \to {\mathcal M}\pi$ be a Hamiltonian section. An Ehresmann connection $^H: {\mathcal M}^0 \pi \times_M TM \to H \subseteq T({\mathcal M}^0\pi)$ on the fibration $\pi \circ \nu^0: {\mathcal M}^0\pi \to M$ is said to be a \textbf{Hamiltonian connection for $h$} if
\[
i_{\chi^H}\omega_h  = 0, \; \; \forall \chi \in \Gamma(\Lambda^mTM).
\]
\end{definition}
As a direct consequence of the definition, if locally the Hamiltonian section $h$ is
\[
h(x^{i}, u^{\alpha}, p_{\alpha}^{i}) = (x^{i}, u^{\alpha}, -H(x, u, p), p_{\alpha}^{i})
\]
and the Ehresmann connection is
\[
\left(\frac{\partial}{\partial x^{i}}\right)^H =  \frac{\partial}{\partial x^{i}} + H_{i}^{\alpha} \frac{\partial}{\partial u^{\alpha}} + H_{\alpha i}^j \frac{\partial}{\partial p_{\alpha}^{j}},
\]
then from (\ref{Local-w-h}), we have the equivalence
\begin{equation}\label{Ham-Connection}
^H: {\mathcal M}^0\pi \times_M TM \to T({\mathcal M}^0\pi) \mbox{ Hamiltonian connection for $h$ } \Leftrightarrow H^{\alpha}_i = \frac{\partial H}{\partial p_{\alpha}^{i}}, \; \; H^{i}_{\alpha i} = -\frac{\partial H}{\partial u^{\alpha}}.
\end{equation}
So, our definition of a Hamiltonian connection is equivalent to that introduced in \cite{DeMaMa2,DeMaMa1}. Note that a Hamiltonian connection $H$ for $h$ may be identified with a section $s^H: {\mathcal M}^0\pi \to J^1(\pi \circ \nu^0)$ of the affine bundle $J^{1}(\pi \circ \nu^0) \to {\mathcal M}^0\pi$ (see Appendix \ref{1-jet-bundle}). Moreover, if $\Gamma_h$ is the section of the quotient affine bundle $J^1(\pi \circ \nu^0) / \operatorname{Ker} A \to {\mathcal M}^0\pi$ defined in (\ref{Gamma-h-0}) then, using (\ref{Gamma-h}) and (\ref{Ham-Connection}), we obtain the following result.

\begin{proposition}\label{Gamma-h-Ham-connections}
Let $s^H: {\mathcal M}^0\pi \to J^1(\pi \circ \nu^0)$ be the section of the affine bundle $J^1(\pi \circ \nu^0) \to {\mathcal M}^0\pi$ induced by a Hamiltonian connection $H$ for $h$ and $p: J^1(\pi \circ \nu^0) \to J^1(\pi \circ \nu^0) / \operatorname{Ker} A $ be the canonical projection. Then
\[
\Gamma_h = p \circ s^H.
\]
\end{proposition}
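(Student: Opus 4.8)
The statement is a local-coordinate identity, so the plan is to verify it in adapted coordinates and then invoke the naturality of all the maps involved to conclude it holds globally. First I would write down the three ingredients in coordinates. By \eqref{Ham-Connection}, a Hamiltonian connection $H$ for $h$ has horizontal lift coefficients $H^\alpha_i = \partial H/\partial p^i_\alpha$ and $\sum_i H^i_{\alpha i} = -\partial H/\partial u^\alpha$ (the remaining coefficients $H^j_{\alpha i}$ for $i \neq j$, and the traceless part of $H^i_{\alpha i}$, are unconstrained). The induced section $s^H : {\mathcal M}^0\pi \to J^1(\pi\circ\nu^0)$ is, in the standard jet coordinates $(x^i,u^\alpha,p^i_\alpha; u^\alpha_j, p^i_{\alpha j})$ of Appendix \ref{1-jet-bundle}, just $s^H(x^i,u^\alpha,p^i_\alpha) = (x^i,u^\alpha,p^i_\alpha; H^\alpha_j, H^j_{\alpha i})$.

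Next I would compose with the canonical projection $p : J^1(\pi\circ\nu^0) \to J^1(\pi\circ\nu^0)/\operatorname{Ker} A$. By the local description \eqref{Local-Ker-A} of $\operatorname{Ker} A$ and the chosen basis of sections $\{\nu^i_\alpha = [dx^i \otimes \partial/\partial u^\alpha],\ \nu^\alpha = \tfrac1m[\sum_j dx^j \otimes \partial/\partial p^j_\alpha]\}$, the quotient coordinates $(\hat u^\alpha_j, \hat p_\alpha)$ of a jet with coordinates $(u^\alpha_j, p^i_{\alpha j})$ are $\hat u^\alpha_j = u^\alpha_j$ and $\hat p_\alpha = \sum_i p^i_{\alpha i}$ (the components $p^j_{\alpha i}$ with $i\neq j$ and the traceless part of $p^i_{\alpha i}$ are killed). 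Hence $p \circ s^H$ has quotient coordinates $\hat u^\alpha_j = \partial H/\partial p^j_\alpha$ and $\hat p_\alpha = \sum_i H^i_{\alpha i} = -\partial H/\partial u^\alpha$. On the other hand, $\Gamma_h = \sharp^{\mathrm{aff}}\circ dh$ was computed in \eqref{Gamma-h} to be, relative to the same basis, $\partial H/\partial p^i_\alpha\, \nu^i_\alpha - \partial H/\partial u^\alpha\, \nu^\alpha$ (modulo the horizontal reference term $[dx^i\otimes \partial/\partial x^i]$), i.e. it has exactly the same quotient coordinates $\hat u^\alpha_j = \partial H/\partial p^j_\alpha$, $\hat p_\alpha = -\partial H/\partial u^\alpha$. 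Comparing the two, $\Gamma_h = p\circ s^H$ in every adapted chart, and since both sides are globally well-defined sections of $J^1(\pi\circ\nu^0)/\operatorname{Ker} A \to {\mathcal M}^0\pi$, they agree.

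The only genuinely delicate point is bookkeeping: one must check that passing to the quotient by $\operatorname{Ker} A$ exactly identifies the $m^2 n$ coefficients $p^j_{\alpha i}$ together with the $n$ coefficients $H^\alpha_j$-worth of data down to the $mn + n$ quotient coordinates $(\hat u^\alpha_j, \hat p_\alpha)$ in a way that matches the ambiguity in the notion of Hamiltonian connection — i.e. that two Hamiltonian connections for the same $h$ project to the same class, which is precisely why the statement is phrased with $\Gamma_h$ living in the quotient. This is already implicit in the equivalence \eqref{Ham-Connection} (the unconstrained coefficients of $H$ lie in $\operatorname{Ker} A$), so no new obstruction arises; it is just a matter of tracking indices carefully through \eqref{Local-Ker-A}, the basis $\{\nu^i_\alpha,\nu^\alpha\}$, and \eqref{Gamma-h}. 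I would present the argument as: (i) coordinate formula for $s^H$; (ii) coordinate formula for $p$ in the chosen basis; (iii) comparison with \eqref{Gamma-h}; (iv) globalization by well-definedness of both sides.
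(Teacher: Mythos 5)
Your proof is correct and follows essentially the same route as the paper, which derives the proposition directly from the local expressions \eqref{Gamma-h} and \eqref{Ham-Connection}; your coordinate computation of $p\circ s^H$ in the basis $\{\nu^i_\alpha,\nu^\alpha\}$ just makes that comparison explicit. No gaps.
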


From (\ref{Local-Ker-A}) and (\ref{Ham-Connection}), we also get the following result.

\begin{proposition}\label{different-Ham-conn}
If $H, \bar{H}: {\mathcal M}^0\pi \to J^{1}(\pi \circ \nu^0) $ are Hamiltonian connections for the same Hamiltonian section $h: {\mathcal M}^0\pi \to {\mathcal M}\pi$, then they satisfy
\[
\bar{H} - H \in \Gamma(\operatorname{Ker} A).
\]
\end{proposition}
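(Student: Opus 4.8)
The plan is to reduce the statement to the local characterizations already in hand, namely the equivalence \eqref{Ham-Connection} describing Hamiltonian connections and the description \eqref{Local-Ker-A} of $\operatorname{Ker}A$. First I would recall the affine structure: $J^1(\pi\circ\nu^0)\to{\mathcal M}^0\pi$ is an affine bundle modelled over $V(J^1(\pi\circ\nu^0))=(\pi\circ\nu^0)^*(T^*M)\otimes V(\pi\circ\nu^0)$, so that for two global sections $H,\bar H\in\Gamma(J^1(\pi\circ\nu^0))$ the difference $\bar H-H$ is a well-defined global section of $V(J^1(\pi\circ\nu^0))$. Since $\operatorname{Ker}A$ is a vector subbundle of $V(J^1(\pi\circ\nu^0))$ and membership in a subbundle is a pointwise condition, it is enough to verify in an arbitrary adapted chart that the components of $\bar H-H$ satisfy the defining equations \eqref{Local-Ker-A}.

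The second step is the computation. In the standard coordinates $(x^{i},u^{\alpha},p_\alpha^{i};u^\alpha_j,p^{i}_{\alpha j})$ on $J^1(\pi\circ\nu^0)$, the section associated with the Ehresmann connection $H$ is represented by $u^\alpha_j=H^\alpha_j$, $p^{i}_{\alpha j}=H^{i}_{\alpha j}$ (the connection coefficients), and similarly for $\bar H$; hence the components of $\bar H-H$ are $\bar H^\alpha_j-H^\alpha_j$ and $\bar H^{i}_{\alpha j}-H^{i}_{\alpha j}$. Because $H$ and $\bar H$ are Hamiltonian connections for the \emph{same} section $h$, locally given by $-H(x,u,p)$, the equivalence \eqref{Ham-Connection} forces $H^\alpha_j=\partial H/\partial p^j_\alpha=\bar H^\alpha_j$ and $\sum_i H^{i}_{\alpha i}=-\partial H/\partial u^\alpha=\sum_i\bar H^{i}_{\alpha i}$. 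Therefore $\bar H^\alpha_j-H^\alpha_j=0$ and $\sum_i(\bar H^{i}_{\alpha i}-H^{i}_{\alpha i})=0$, which is exactly the local characterization \eqref{Local-Ker-A} of $\operatorname{Ker}A$. As the chart was arbitrary, $\bar H-H\in\Gamma(\operatorname{Ker}A)$.

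As an alternative, more in the spirit of the surrounding discussion, one can argue intrinsically through Proposition \ref{Gamma-h-Ham-connections}: both $s^{H}$ and $s^{\bar H}$ project under $p:J^1(\pi\circ\nu^0)\to J^1(\pi\circ\nu^0)/\operatorname{Ker}A$ onto the very same section $\Gamma_h=\sharp^{\rm aff}\circ dh$, so $p\circ s^{\bar H}=p\circ s^{H}$; since $p$ is an affine bundle projection whose underlying linear map has kernel precisely $\operatorname{Ker}A$, the difference $s^{\bar H}-s^{H}$ is a section of $\operatorname{Ker}A$. I do not expect a genuine obstacle here: the only point requiring a little care is the bookkeeping that the difference of two affine sections lives in the model vector bundle and that the subbundle condition \eqref{Local-Ker-A} is chart-independent; the computation itself is immediate from \eqref{Ham-Connection}.
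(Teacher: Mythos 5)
Your proof is correct and follows essentially the same route as the paper, which derives the proposition directly from the local characterization \eqref{Ham-Connection} of Hamiltonian connections together with the local description \eqref{Local-Ker-A} of $\operatorname{Ker}A$. Your alternative intrinsic argument via Proposition \ref{Gamma-h-Ham-connections} is also valid and non-circular, but the coordinate computation is what the paper intends.
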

Finally, from Propositions \ref{Gamma-h-Ham-connections} and \ref{different-Ham-conn}, it follows the following characterization of Hamiltonian connections for $h$.
\begin{theorem}\label{Gamma-h-Ham-Connections-h}
Let $h: {\mathcal M}^0\pi \to {\mathcal M}\pi$ be a Hamiltonian section and let $H$ be an Ehresmann connection for the fibration $\pi \circ \nu^0: {\mathcal M}^0\pi \to M$. Then, $H$ is a Hamiltonian connection for $h$ if and only if
\[
\Gamma_h = p \circ s^H.
\]
\end{theorem}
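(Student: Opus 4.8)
The plan is to obtain the theorem as a straightforward combination of Propositions \ref{Gamma-h-Ham-connections} and \ref{different-Ham-conn}, together with the local characterisations \eqref{Ham-Connection} and \eqref{Local-Ker-A}. One implication is free: if $H$ is a Hamiltonian connection for $h$, then $\Gamma_h = p\circ s^H$ is precisely the statement of Proposition \ref{Gamma-h-Ham-connections}. So the whole content of the theorem lies in the converse, which I would argue as follows.

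Assume $H$ is an Ehresmann connection on $\pi\circ\nu^0\colon{\mathcal M}^0\pi\to M$ with $\Gamma_h = p\circ s^H$. Since being a Hamiltonian connection for $h$ is a local condition — by \eqref{Ham-Connection} it is equivalent, in any adapted chart, to the pointwise identities $H^\alpha_i = \frac{\partial H}{\partial p_\alpha^i}$ and $\sum_i H^i_{\alpha i} = -\frac{\partial H}{\partial u^\alpha}$ — it suffices to verify the Hamiltonian connection property on the domain of an arbitrary adapted coordinate chart $U\subseteq{\mathcal M}^0\pi$. On such a $U$, formula \eqref{Ham-Connection} immediately exhibits a Hamiltonian connection $\bar H$ for $h$ (for instance $\bar H^\alpha_i=\frac{\partial H}{\partial p_\alpha^i}$, $\bar H^j_{\alpha i}=-\frac1m\delta^j_i\frac{\partial H}{\partial u^\alpha}$); alternatively, a global Hamiltonian connection $\bar H$ exists by a partition-of-unity argument, since the constraint \eqref{Ham-Connection} is affine in the connection coefficients and each fibre of the resulting affine subbundle of $J^1(\pi\circ\nu^0)$ is non-empty.

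Next I would apply Proposition \ref{Gamma-h-Ham-connections} to $\bar H$, getting $\Gamma_h = p\circ s^{\bar H}$, hence $p\circ s^H = p\circ s^{\bar H}$. Since $p$ is the projection onto the quotient affine bundle $J^1(\pi\circ\nu^0)/\operatorname{Ker} A$, this forces the difference $\xi := s^H - s^{\bar H}$, a section of the model vector bundle $V(J^1(\pi\circ\nu^0))$, to take values in $\operatorname{Ker} A$. By the local description \eqref{Local-Ker-A}, the connection coefficients of $\xi$ then satisfy $\xi^\alpha_j = 0$ and $\sum_i \xi^i_{\alpha i}=0$ for all $\alpha$. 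Consequently $H^\alpha_i = \bar H^\alpha_i = \frac{\partial H}{\partial p_\alpha^i}$ and $\sum_i H^i_{\alpha i} = \sum_i \bar H^i_{\alpha i} + \sum_i \xi^i_{\alpha i} = -\frac{\partial H}{\partial u^\alpha}$, so by \eqref{Ham-Connection} the connection $H$ is indeed a Hamiltonian connection for $h$.

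The only real obstacle is bookkeeping rather than substance: one must carefully identify the relation $p\circ s^H = p\circ s^{\bar H}$ with $s^H - s^{\bar H}\in\Gamma(\operatorname{Ker} A)$, and then read off from \eqref{Local-Ker-A} that perturbing $s^{\bar H}$ by such a section leaves the coefficients $H^\alpha_i$ untouched and changes the trace $\sum_i H^i_{\alpha i}$ by zero. Note that this last computation is essentially the one already encapsulated in Proposition \ref{different-Ham-conn}, so that proposition could be invoked directly in place of the explicit coordinate check.
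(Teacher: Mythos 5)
Your proposal is correct and follows essentially the same route the paper intends: the forward implication is Proposition \ref{Gamma-h-Ham-connections}, and the converse is obtained by comparing $H$ with some Hamiltonian connection $\bar H$ via the local descriptions \eqref{Local-Ker-A} and \eqref{Ham-Connection}, which is exactly the content behind Proposition \ref{different-Ham-conn} that the paper invokes. You merely make explicit two points the paper leaves implicit — the existence of at least one Hamiltonian connection for $h$ and the observation that adding a section of $\operatorname{Ker} A$ preserves the conditions in \eqref{Ham-Connection} — and both are verified correctly.
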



Theorem \ref{Gamma-h-Ham-Connections-h} suggests the introduction of the following definition.
\begin{definition}\label{Equiv-class-Ham-Connections-h}
Let $h: {\mathcal M}^0\pi \to {\mathcal M}\pi$ be a Hamiltonian section. Then, the section $\Gamma_h: {\mathcal M}^0\pi \to J^{1}(\pi \circ \nu^0)/ \operatorname{Ker}A$ is called the \textbf{equivalence class of the Hamiltonian connections for $h$}.
\end{definition}
The following commutative diagram illustrates the results obtained in Sections \ref{sec:phase bundle}, \ref{canonical-isomorphism}, \ref{Gamma_h}
\[
\xymatrix{{\mathbb{P}(\pi)}\ar[rr]^{\sharp^{\rm aff}}&&J^{1}(\pi \circ \nu^0)/
\operatorname{Ker} A\\&
{\mathcal M}^0\pi\ar[ul]^{dh}\ar[ur]_{\Gamma_h}&}
\]
It is the field-theoretic analogue to Diagram \ref{diagram-1} for Hamiltonian Mechanics.

The last step is to introduce a suitable space of currents for Hamiltonian Classical Field Theories of first order and a suitable canonical bracket formulation for the evolution of such currents along the solution of the Hamilton-deDonder-Weyl equations. This is the aim of the next two subsections.


\subsection{A suitable space of currents for Hamiltonian Classical Field Theories}\label{Currents-Ham-deDonder-Weyl} We shall define a space of currents for Hamiltonian Classical Field Theories of first order, which plays the same role that the space of observables in Hamiltonian Mechanics.

Recall that in Hamiltonian Mechanics, the Hamiltonian vector field $X_H$ is a section of the vector bundle $T(T^*Q) \to T^*Q$ and the space of observables is the set $C^{\infty}(T^*Q)$ of real $C^{\infty}$-functions on $T^*Q$. Given an observable $F \in  C^{\infty}(T^*Q)$, we can consider a section $dF$ (the differential of $F$) of the dual bundle $T^*(T^*Q) \to T^*Q$ to $T(T^*Q) \to T^*Q$ and the evolution of the observable $F$ along a solution $s: I \subseteq \mathbb{R} \to T^*Q$ of Hamilton's equation is given as
\[
\frac{d}{dt} (F\circ s)= \langle dF, X_H \rangle \circ s.
\]
When written for all observables $F$, the previous equations are equivalent to the Hamilton equations.

Our goal is to carry out these construction for Hamiltonian Classical Field theories. As we have seen, given a Hamiltonian section $h: {\mathcal M}^0\pi \to {\mathcal M}\pi$, the object corresponding to the Hamiltonian vector field $X_H$  is the section $\Gamma_h$ of the quotient affine bundle $J^{1}(\pi \circ \nu^0)/\operatorname{Ker}A \to {\mathcal M}^0\pi$. So, we need to overcome the following two steps:

\noindent{\bf First step}: Describe the dual vector bundle $(J^{1}(\pi \circ \nu^0)/ \operatorname{Ker} A)^+$ to the affine bundle $J^{1}(\pi \circ \nu^0)/ \operatorname{Ker A} \to {\mathcal M}^0\pi$.

\noindent{\bf Second step}: Introduce a space ${\mathcal O}$ of currents and a differential operator
\begin{equation}\label{2nd_step} 
d: {\mathcal O} \to \Gamma\big((J^{1}(\pi \circ \nu^0)/\operatorname{Ker} A)^+\big)
\end{equation} 
on this space, such that the evolution of a current $\alpha^0 \in {\mathcal O}$ along a solution $s^0: U \subseteq M \to {\mathcal M}^0\pi$ is given by
\[
(s^0)^*(d\alpha^0) = \langle d\alpha^0, \Gamma_h \rangle  \circ s^0.
\]
We will show that $s^0$ satisfies these equations for any $\alpha\in \mathcal{O}$ if and only if $s^0$ is a solution of the Hamilton-deDonder-Weyl equations.

\noindent{\bf First step}: Let $A: J^{1}(\pi \circ \nu^0) \to \mathbb{P}(\pi)$ (respectively, $\hat{A}: J^1(\pi \circ \nu^0)/\operatorname{Ker} A \to \mathbb{P}(\pi)$) be the affine bundle epimorphism (respectively, isomorphism) considered in Section \ref{canonical-isomorphism}.
Denote by $\mathbb{P}(\pi)^+$ and $J^1(\pi \circ \nu^0)^+$ the vector bundles over ${\mathcal M}^0\pi$ defined by
\begin{align*} 
\mathbb{P}(\pi)^+& = {\rm Aff}(\mathbb{P}(\pi), (\pi \circ \nu^0)^*(\Lambda^mT^*M))\\
J^1(\pi \circ \nu^0)^+ &= {\rm Aff}(J^1(\pi \circ \nu^0),  (\pi \circ \nu^0)^*(\Lambda^mT^*M)).
\end{align*} 
It is clear that $A$ induces the vector bundle morphism
\[
A^+: \mathbb{P}(\pi)^+ \to J^1(\pi \circ \nu^0)^+ = {\mathcal M}(\pi \circ \nu^0),\qquad 
\langle A^+(U^0), Z^0 \rangle = \langle U^0, A(Z^0) \rangle
\]
for $U^0 \in \mathbb{P}(\pi)^+_{\gamma^0}$ and $Z^0 \in J^1_{\gamma^0}(\pi \circ \nu^0)$, with $\gamma^0 \in {\mathcal M}^0\pi$. 
Since $A$ is an epimorphism, we deduce that $A^+$ is a vector bundle monomorphism. In addition, the image of $A^+$ is the vector subbundle of $J^1(\pi \circ \nu^0)^+$ whose fiber at the point $\gamma^0 \in {\mathcal M}^0\pi$ is
\[
A^+(\mathbb{P}(\pi)^+)_{\gamma^0} = \big\{\theta^0 \in {\rm Aff}(J^1_{\gamma^0}(\pi \circ \nu^0),  \Lambda^mT^*_{\pi(\nu^0(\gamma^0))}M) \mid \big\langle(\theta^0)^{l}, \operatorname{Ker} A \cap V_{\gamma^0}(J^1(\pi \circ \nu^0))\big\rangle = 0\big\}.
\]
Here, $(\theta^0)^l: V_{\gamma^0}(J^1(\pi \circ \nu^0)) \to \Lambda^mT^*_{\pi(\nu^0(\gamma^0))}M$ denotes the linear map associated with the affine map $\theta^0: J^1_{\gamma^0}(\pi \circ \nu^0) \to \Lambda^mT^*_{\pi(\nu^0(\gamma^0))}M$.
So, we have a vector bundle isomorphism
\[
A^+: \mathbb{P}(\pi)^+ \to A^+(\mathbb{P}(\pi)^+) \subseteq {\mathcal M}(\pi \circ \nu^0)
\]
over the identity of ${\mathcal M}^0\pi$.

Now, denote by $\hat{A}^+: \mathbb{P}(\pi)^+ \to (J^{1}(\pi \circ \nu^0)/ \operatorname{Ker} A)^+$ the vector bundle isomorphism induced by $\hat{A}: J^{1}(\pi \circ \nu^0)/ \operatorname{Ker} A \to \mathbb{P}(\pi)$. Then, it is clear that the vector bundles $A^+(\mathbb{P}(\pi)^+)$ and $(J^{1}(\pi \circ \nu^0)/ \operatorname{Ker} A)^+$ can be identified and, under this identification, $\hat{A}^+$ is just the vector bundle isomorphism $A^+: \mathbb{P}(\pi)^+ \to A^+(\mathbb{P}(\pi)^+) \subseteq {\mathcal M}(\pi \circ \nu^0)$.

The following commutative diagram
\[
\xymatrix{{\mathbb{P}(\pi)^+}\ar[rr]^{\hat{A}^+}\ar[dr]_{A^+}&&(J^{1}(\pi \circ \nu^0)/ \operatorname{Ker} A)^+\ar[dl]^{\simeq}\\&
A^+(\mathbb{P}(\pi)^+) &}
\]
illustrates the situation.

It is desirable to have an explicit realisation of dual vector bundle $(J^{1}(\pi \circ \nu^0)/ \operatorname{Ker} A)^+$. As we know
\[
J^1(\pi \circ \nu^0)^+ = {\mathcal M}(\pi \circ \nu^0) \simeq \Lambda^m_2(T^*({\mathcal M}^0\pi))
\]
(see Section \ref{res-ext-multimomentum-bundle}). So it is possible to describe $A^+(\mathbb{P}(\pi)^+)$ and, therefore, $(J^{1}(\pi \circ \nu^0)/ \operatorname{Ker} A)^+$, as a certain vector subbundle $L$ of $\Lambda^m_2(T^*({\mathcal M}^0\pi))$. We shall now give such a description. 

First of all, using (\ref{P-Pi-tilde}), it follows that the vector bundle
\[
\widetilde{\mathbb{P}(\pi)}^+ = {\rm Aff} \big(\widetilde{\mathbb{P}(\pi)}, (\pi \circ \nu)^*(\Lambda^mT^*M)\big)
\]
is isomorphic to the vertical bundle $V(\pi \circ \nu)$ of the fibration $\pi \circ \nu: {\mathcal M}\pi \to M$.
An isomorphism
\[
\tilde{\mathcal I}: V(\pi \circ \nu) \to \widetilde{\mathbb{P}(\pi)}^+
\]
is given by
\begin{equation}\label{Def-I-tilde}
\langle\tilde{\mathcal I}(\tilde{U}), \tilde{\mathcal A}\rangle = \tilde{\mathcal A}(\tilde{U}), \; \; \mbox{ for } \tilde{U} \in V_{\gamma}(\pi \circ \nu) \mbox{ and } \tilde{\mathcal A} \in \widetilde{\mathbb{P}(\pi)}_\gamma,
\end{equation}
with $\gamma \in {\mathcal M}\pi$.

Now, if we consider the standard fibred actions of $(\pi \circ \nu)^*(\Lambda^mT^*M)$ on $V(\pi \circ \nu)$ and $\widetilde{\mathbb{P}(\pi)}^+$ then, it is clear that $\tilde{\mathcal I}$ is $(\pi \circ \nu)^*(\Lambda^mT^*M)$-equivariant and, thus, it induces a vector bundle isomorphism
\[
{\mathcal I}: V(\pi \circ \nu)/(\pi \circ \nu)^*(\Lambda^mT^*M) \to \widetilde{\mathbb{P}(\pi)}^+/(\pi \circ \nu)^*(\Lambda^mT^*M)
\]
over the identity of ${\mathcal M}^0\pi$.
Then, from Definition \ref{Def-phase-bundle}, we deduce that the quotient vector bundle $\widetilde{\mathbb{P}(\pi)}^+/(\pi \circ \nu)^*(\Lambda^mT^*M)$ is isomorphic to $\mathbb{P}(\pi)^+$. So, we have a vector bundle isomorphism
\[
{\mathcal I}: V(\pi \circ \nu)/(\pi \circ \nu)^*(\Lambda^mT^*M) \to \mathbb{P}(\pi)^+
\]
which is characterized by the following condition
\begin{equation}\label{I-mu-A-tilde}
\langle{\mathcal I}[\tilde{U}], \tilde{\mu}(\tilde{\mathcal A})\rangle = \langle\tilde{\mathcal I}(\tilde{U}), \tilde{\mathcal A} \rangle
\end{equation}
for $\tilde{U} \in V_{\gamma}(\pi \circ \nu)$ and $\tilde{\mathcal A} \in \widetilde{\mathbb{P}(\pi)}_\gamma$, with $\gamma \in {\mathcal M}\pi$ and $\tilde{\mu}: \widetilde{\mathbb{P}(\pi)} \to \mathbb{P}(\pi)$ the canonical projection.

We now consider the composition
\[
A^+ \circ {\mathcal I}: V(\pi \circ \nu)/(\pi \circ \nu)^*(\Lambda^mT^*M) \to A^+(\mathbb{P}(\pi)^+) \subseteq {\mathcal M}(\pi \circ \nu^0) = \Lambda^m_2(T^*{\mathcal M}^0\pi)
\]
of the two vector bundles isomorphisms $A^+$ and $\mathcal{I}$ defined above and show that $A^+ \circ {\mathcal I}$ can be expressed in a simple way, which allows to describe its image $L$ explicitly.

Consider the vector bundle morphism $\bar{\flat}: V(\pi \circ \nu) \to {\mathcal M}(\pi \circ \nu^0) \simeq \Lambda^m_2(T^*{\mathcal M}^0\pi)$ defined in Section \ref{canonical-isomorphism} which is characterized by Eq. (\ref{3.18'}) and has the local expression \eqref{Local-b-tilde}. Using (\ref{R-action-vertical}), we deduce that $\bar{\flat}$ induces the vector bundle morphism
\[
\flat: V(\pi \circ \nu)/(\pi \circ \nu)^*(\Lambda^mT^*M) \to {\mathcal M}(\pi \circ \nu^0) \simeq \Lambda^m_2(T^*({\mathcal M}^0\pi))
\]
over the identity of ${\mathcal M}^0\pi$ given by
\begin{equation}\label{b-b-tilde}
\flat[\tilde{U}] = \bar{\flat}(\tilde{U}), \; \; \mbox{ for } \tilde{U} \in V(\pi \circ \nu).
\end{equation}
So, if $\gamma \in {\mathcal M}\pi$ and ${\mathcal J}_\gamma: V_\gamma(\pi \circ \nu) \to (V(\pi \circ \nu)/(\pi \circ \nu)^*(\Lambda^mT^*M))_{\mu(\gamma)}$ is the canonical isomorphism between the fibers by $\gamma$ and $\mu(\gamma)$ of the vector bundles $V(\pi \circ \nu) \to {\mathcal M}\pi$ and $V(\pi \circ \nu)/(\pi \circ \nu)^*(\Lambda^mT^*M) \to {\mathcal M}^0\pi$, then the following diagram
\begin{equation}\label{Diagram-Comm-2}
\xymatrix{V_\gamma(\pi \circ \nu)\ar[rr]^{\bar{\flat}_\gamma}\ar[dr]^{{\mathcal J}_\gamma}&&{\mathcal M}_{\mu(\gamma)}(\pi \circ \nu^0) = \Lambda_2^mT_{\mu(\gamma)}^*({\mathcal M}^0\pi)\\&
(V(\pi \circ \nu)/(\pi \circ \nu)^*(\Lambda^mT^*M))_{\mu(\gamma)}
\ar[ur]_{\flat_\gamma}&}
\end{equation}
is commutative
\begin{proposition}\label{Definition-L}
We have the equality
\begin{equation}\label{b-A-+-I}
- A^+ \circ {\mathcal I}= \flat.
\end{equation}
This implies that the vector bundle $(J^1(\pi \circ \nu^0)/ \operatorname{Ker} A)^+$ is isomorphic to the vector subbundle $L$ of ${\mathcal M}(\pi \circ \nu^0) \simeq \Lambda^m_2(T^*({\mathcal M}^0\pi))$ given by
\begin{equation}\label{Description-L}
L  = \bar{\flat}(V(\pi \circ \nu)).
\end{equation}
In particular, a local basis of sections of the vector subbundle $L$ is
\begin{equation}\label{local_L} 
\big\{d^mx, du^\alpha \wedge d^{m-1}x_i, \sum_i dp_\alpha^{i} \wedge d^{m-1}x_i\big\}.
\end{equation} 

\end{proposition}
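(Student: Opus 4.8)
The plan is to establish the three assertions in turn — the identity $-A^{+}\circ{\mathcal I}=\flat$ of \eqref{b-A-+-I}, the resulting identification of $(J^{1}(\pi\circ\nu^{0})/\operatorname{Ker} A)^{+}$ with the subbundle $L=\bar{\flat}(V(\pi\circ\nu))$, and the explicit local basis \eqref{local_L} — essentially all of the work being in the first.

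For \eqref{b-A-+-I} the plan is to argue by duality. Both $-A^{+}\circ{\mathcal I}$ and $\flat$ are vector bundle morphisms from $V(\pi\circ\nu)/(\pi\circ\nu)^{*}(\Lambda^{m}T^{*}M)$ into ${\mathcal M}(\pi\circ\nu^{0})=J^{1}(\pi\circ\nu^{0})^{+}\simeq\Lambda^{m}_{2}(T^{*}({\mathcal M}^{0}\pi))$ over the identity of ${\mathcal M}^{0}\pi$, and the target is paired non-degenerately with $J^{1}(\pi\circ\nu^{0})$ through \eqref{duality}; so it is enough to check that the two morphisms pair identically with an arbitrary $Z^{0}\in J^{1}_{\mu(\gamma)}(\pi\circ\nu^{0})$. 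On the left I would unwind the definitions in order: the defining property of $A^{+}$, then \eqref{3.15''}, reduce $\langle A^{+}({\mathcal I}[\tilde U]),Z^{0}\rangle$ to $\langle{\mathcal I}[\tilde U],\tilde\mu(\widetilde{A(Z^{0})})\rangle$, taking the representative $\gamma\in{\mathcal M}\pi$ over $\mu(\gamma)$ to be the point at which $\tilde U$ lives (legitimate since $A$ is well defined); then \eqref{I-mu-A-tilde} and \eqref{Def-I-tilde} turn this into $\widetilde{A(Z^{0})}(\tilde U)$; and \eqref{3.15'} evaluates it to $-\Lambda^{m}((Z^{0})^{*})(\bar{\flat}(\tilde U))$. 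On the right, \eqref{b-b-tilde} together with \eqref{duality} gives $\langle\flat[\tilde U],Z^{0}\rangle=\langle\bar{\flat}(\tilde U),Z^{0}\rangle=\Lambda^{m}((Z^{0})^{*})(\bar{\flat}(\tilde U))$, so the two coincide, which is \eqref{b-A-+-I}. I expect the only real obstacle to be bookkeeping: keeping the representative $\gamma$ the same inside ${\mathcal I}$, $\tilde{\mathcal I}$ and $\widetilde{A(Z^{0})}$, and carrying the minus sign of \eqref{3.15'} cleanly through the chain of identifications; as a safeguard one can also verify the identity directly in the coordinates of \eqref{Local-A} and \eqref{Local-b-tilde}.

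Once \eqref{b-A-+-I} is in hand the rest is almost formal. Since ${\mathcal I}$ is a vector bundle isomorphism and $A^{+}\colon\mathbb{P}(\pi)^{+}\to A^{+}(\mathbb{P}(\pi)^{+})$ is a vector bundle isomorphism, $\flat=-A^{+}\circ{\mathcal I}$ is injective with image $A^{+}(\mathbb{P}(\pi)^{+})$; but the discussion preceding the proposition already identifies $A^{+}(\mathbb{P}(\pi)^{+})$, via $\hat A^{+}$, with $(J^{1}(\pi\circ\nu^{0})/\operatorname{Ker} A)^{+}$, and $\flat[\tilde U]=\bar{\flat}(\tilde U)$ by \eqref{b-b-tilde} makes that image equal to $L=\bar{\flat}(V(\pi\circ\nu))$. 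Hence $L$ is a vector subbundle of ${\mathcal M}(\pi\circ\nu^{0})$ isomorphic to $(J^{1}(\pi\circ\nu^{0})/\operatorname{Ker} A)^{+}$, which is \eqref{Description-L}. Finally, for \eqref{local_L} I would simply read off \eqref{Local-b-tilde}: writing a general element of ${\mathcal M}(\pi\circ\nu^{0})$ as $\bar{p}\,d^{m}x+\bar{p}_{\alpha}^{i}\,du^{\alpha}\wedge d^{m-1}x_{i}+\bar{p}_{i}^{\alpha j}\,dp_{\alpha}^{i}\wedge d^{m-1}x_{j}$, the local formula for $\bar{\flat}$ says it lies in $L$ exactly when $\bar{p}$ and the $\bar{p}_{\alpha}^{i}$ are unconstrained while $\bar{p}_{i}^{\alpha j}=\dot{u}^{\alpha}\delta^{j}_{i}$ for some scalars $\dot{u}^{\alpha}$ (i.e. the $dp$-coefficients vanish off the diagonal $i=j$ and are constant along it); equivalently such forms are the linear combinations of $d^{m}x$, of the $du^{\alpha}\wedge d^{m-1}x_{i}$, and of the $\sum_{i}dp_{\alpha}^{i}\wedge d^{m-1}x_{i}$. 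These $1+nm+n$ forms are pointwise linearly independent, hence form a local basis of sections of $L$, in agreement with $\operatorname{rank}L=\operatorname{rank}\big(V(\pi\circ\nu)/(\pi\circ\nu)^{*}(\Lambda^{m}T^{*}M)\big)$.
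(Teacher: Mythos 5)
Your proposal is correct and follows essentially the same route as the paper's proof: the same chain of identities $\langle A^{+}({\mathcal I}[\tilde U]),Z^{0}\rangle=\langle{\mathcal I}[\tilde U],\tilde\mu(\widetilde{A(Z^{0})})\rangle=\langle\widetilde{A(Z^{0})},\tilde U\rangle=-\Lambda^{m}((Z^{0})^{*})(\bar{\flat}(\tilde U))=-\langle\flat[\tilde U],Z^{0}\rangle$ via \eqref{3.15''}, \eqref{I-mu-A-tilde}, \eqref{Def-I-tilde}, \eqref{3.15'}, \eqref{duality} and \eqref{b-b-tilde}, followed by the same deduction of \eqref{Description-L} and the same reading-off of the local basis from \eqref{Local-b-tilde}. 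Your added remarks on keeping the representative $\gamma$ consistent and on the non-degeneracy of the pairing with $J^{1}(\pi\circ\nu^{0})$ only make explicit what the paper leaves implicit.
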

\begin{proof}
If $\gamma \in {\mathcal M}\pi$, $\tilde{U} \in V_\gamma(\pi \circ \nu)$ and $Z^0 \in J^1_{\mu(\gamma)}(\pi \circ \nu^0)$ then, from (\ref{3.15''}), we obtain
\[
\langle A^+({\mathcal I}[\tilde{U}]), Z^0 \rangle = \langle {\mathcal I}[\tilde{U}], A(Z^0)\rangle = \langle{\mathcal I}[\tilde{U}], \tilde{\mu}(\widetilde{A(Z^0)})\rangle.
\]
Therefore, using (\ref{Def-I-tilde}) and (\ref{I-mu-A-tilde}), it follows that
\[
\langle A^+({\mathcal I}[\tilde{U}]), Z^0 \rangle = \langle\tilde{\mathcal I}(\tilde{U}), \widetilde{A(Z^0)} \rangle = \langle \widetilde{A(Z^0)}, \tilde{U}\rangle.
\]
So, from (\ref{duality}), (\ref{3.15'}) and (\ref{b-b-tilde}), we conclude that
\[
\langle A^+({\mathcal I}[\tilde{U}]), Z^0 \rangle = -(\Lambda^m(Z^0)^*)(\bar{\flat}(\tilde{U})) = -(\Lambda^m(Z^0)^*)(\flat[\tilde{U}]) = -\langle\flat[\tilde{U}], Z^0\rangle.
\]
This proves (\ref{b-A-+-I}).

From this, (\ref{Description-L}) follows using (\ref{b-b-tilde}) and (\ref{b-A-+-I}). 

Finally, the local expression is obtained by using (\ref{Local-b-tilde}) and (\ref{Description-L}).
\end{proof}

\noindent{\bf Second step}: The previous result together with \eqref{2nd_step} suggests the introduction of the following definition.

\begin{definition}\label{Def-currents}
The \textbf{space of currents} of a Hamiltonian Field Theory with configuration bundle $\pi: E \to M$ is
\[
{\mathcal O} = \big\{\alpha^0 \in \Gamma(\Lambda^{m-1}_1(T^*({\mathcal M}^0\pi))) \mid d\alpha^0 \in \Gamma(L)\big\}.
\]
\end{definition}

\begin{example}\label{currents}{\rm 
\textbf{i)} Let $\alpha$ be a $(m-1)$-form on $E$ which is semi-basic with respect to the projection $\pi: E \to M$. Then,
\[
\alpha^0 = (\nu^0)^*(\alpha) \in {\mathcal O}.
\]
Indeed, for $\alpha = \alpha^{i}(x, u) d^{m-1}x_i$,
we have
\[
d\alpha = \left(\frac{\partial \alpha^{i}}{\partial x^{i}}\right) d^mx + \left(\frac{\partial \alpha^{i}}{\partial u^{\alpha}}\right)du^{\alpha} \wedge d^{m-1}x_{i}
\]
which implies that $d\alpha \in \Gamma(L)$, see \eqref{local_L}.

\textbf{ii)} Let $Y$ be a section of the vector bundle $V\pi \to E$, that is, $Y$ is a vector field on $E$ and
\[
(T_y\pi)(Y(y)) = 0, \; \; \forall y \in E.
\]
Define the $(m-1)$-form $\hat{Y}$ on ${\mathcal M}^0\pi = \operatorname{Lin}(V\pi, \pi^*(\Lambda^{m-1}T^*M))$ as follows
\[
\hat{Y}(\gamma^0)(Z_1, \dots, Z_{m-1}) = \gamma^0(Y(\nu^0(\gamma^0)))\left(T_{\gamma^0}(\pi \circ \nu^0)(Z_1), \dots, T_{\gamma^0}(\pi \circ \nu^0)(Z_{m-1})\right),
\]
for $\gamma^0 \in {\mathcal M}^0\pi$ and $Z_1, \dots, Z_{m-1} \in T_{\gamma^0}{\mathcal M}^0\pi$. If the local expression of $Y$ is
\[
Y = Y^{\alpha}(x, u)\frac{\partial}{\partial u^{\alpha}}
\]
it follows that $\hat{Y} = \left(Y^{\alpha}(x, u)p_{\alpha}^{i}\right) d^{m-1}x_i$. Thus,
\[
d\hat{Y} = \left(\frac{\partial Y^{\alpha}}{\partial x^{i}}p_{\alpha}^{i}\right) d^mx + \left(\frac{\partial Y^{\alpha}}{\partial u^{\beta}}p_{\alpha}^{i}\right) du^{\beta} \wedge d^{m-1}x_i + Y^{\alpha} dp_{\alpha}^{i} \wedge d^{m-1}x_{i},
\]
which implies that $d\hat{Y} \in \Gamma(L)$ and $\hat{Y} \in {\mathcal O}$.
}
\end{example}

In the following theorem, we give the explicit description of the currents for the case when $m \geq 2$. Note that if $m = 1$ then ${\mathcal O} = C^{\infty}({\mathcal M}^0\pi)$.

\begin{theorem}\label{descrip-current}
If $m \geq 2$ then a section $\alpha^0$ of the vector bundle $\Lambda^{m-1}_1(T^*({\mathcal M}^0\pi)) \to {\mathcal M}^0\pi$ is a current if and only if there exists a unique $\pi$-semibasic $(m-1)$-form $\alpha$ on $E$ and a unique $\pi$-vertical vector field $Y$ on $E$ such that
\[
\alpha^0 = \hat{Y} + (\nu^0)^*(\alpha).
\]
\end{theorem}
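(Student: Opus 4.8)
The plan is to reduce everything to a local normal-form computation on $\mathcal{M}^0\pi$ and then to patch the local data. In adapted coordinates $(x^i,u^\alpha,p^i_\alpha)$ a section $\alpha^0$ of $\Lambda^{m-1}_1(T^*(\mathcal{M}^0\pi))\to\mathcal{M}^0\pi$ is $(\pi\circ\nu^0)$-semibasic, hence of the form $\alpha^0=f^i\,d^{m-1}x_i$ with $f^i\in C^\infty(\mathcal{M}^0\pi)$, so that, using $dx^j\wedge d^{m-1}x_i=\delta^j_i\,d^mx$,
\[
d\alpha^0=\Big(\sum_i\frac{\partial f^i}{\partial x^i}\Big)\,d^mx+\frac{\partial f^i}{\partial u^\alpha}\,du^\alpha\wedge d^{m-1}x_i+\sum_{i,j,\gamma}\frac{\partial f^i}{\partial p^j_\gamma}\,dp^j_\gamma\wedge d^{m-1}x_i .
\]
The "if" direction I would dispatch first: $(\nu^0)^*\alpha\in\mathcal{O}$ and $\hat Y\in\mathcal{O}$ by Example \ref{currents}(i)--(ii), and $\mathcal{O}$ is a linear subspace (the operator $d$ is linear and $\Gamma(L)$ is a linear subspace of sections), so $\hat Y+(\nu^0)^*\alpha\in\mathcal{O}$. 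This uses neither $m\geq2$ nor uniqueness.

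For the "only if" direction the key step is to compare $d\alpha^0$ with the local basis \eqref{local_L} of $L$. The parts $d^mx$ and $du^\alpha\wedge d^{m-1}x_i$ already lie in $L$, so $d\alpha^0\in\Gamma(L)$ is equivalent to requiring that $\sum_{i,j,\gamma}\frac{\partial f^i}{\partial p^j_\gamma}\,dp^j_\gamma\wedge d^{m-1}x_i$ be a $C^\infty(\mathcal{M}^0\pi)$-combination of the forms $\sum_i dp^i_\gamma\wedge d^{m-1}x_i$. Since the $m$-forms $dp^j_\gamma\wedge d^{m-1}x_i$ (with $\gamma$ fixed, $i,j$ varying) are pointwise linearly independent, this amounts to $\partial f^i/\partial p^j_\gamma=0$ for $i\neq j$ together with $\partial f^i/\partial p^i_\gamma$ (no sum on $i$) independent of $i$; call this common function $Y^\gamma$. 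This is exactly where $m\geq2$ is indispensable: $f^i$ now depends on the momenta only through $p^i_\bullet$, hence so does $Y^\gamma=\partial f^i/\partial p^i_\gamma$, and since this holds simultaneously for all $i$, picking two distinct indices forces $Y^\gamma$ to be independent of all momenta, i.e. $Y^\gamma=Y^\gamma(x,u)$. Integrating in the $p$-variables yields the local normal form $f^i=Y^\gamma(x,u)\,p^i_\gamma+\alpha^i(x,u)$.

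With the normal form in hand, I would globalize and get uniqueness simultaneously. Put $\alpha:=0_E^*\alpha^0$, where $0_E:E\to\mathcal{M}^0\pi$ is the zero section; this is intrinsic, is $\pi$-semibasic because $\alpha^0$ is $(\pi\circ\nu^0)$-semibasic, and reads $\alpha^i(x,u)\,d^{m-1}x_i$ locally. Then $\beta:=\alpha^0-(\nu^0)^*\alpha$ has local expression $Y^\gamma(x,u)\,p^i_\gamma\,d^{m-1}x_i$, which by Example \ref{currents}(ii) equals $\hat Y$ for the local $\pi$-vertical vector field $Y=Y^\gamma\,\partial/\partial u^\gamma$. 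These local vector fields agree on overlaps because $Y\mapsto\hat Y$ is injective: if $\hat Y=0$ then $\gamma^0(Y(y))=0$ for every $\gamma^0\in\mathcal{M}^0_y\pi$, and since $\mathcal{M}^0\pi\simeq\pi^*(\Lambda^{m-1}T^*M)\otimes V^*\pi$ separates the points of $V\pi$ when $m\geq1$, this gives $Y(y)=0$. Hence they glue to a global $\pi$-vertical $Y$ with $\hat Y=\beta$, i.e. $\alpha^0=\hat Y+(\nu^0)^*\alpha$. For uniqueness, pulling $\alpha^0=\hat Y+(\nu^0)^*\alpha$ back by $0_E$ and using $\nu^0\circ 0_E=\operatorname{id}_E$ and $0_E^*\hat Y=0$ forces $\alpha=0_E^*\alpha^0$, after which injectivity of $Y\mapsto\hat Y$ determines $Y$.

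I expect Step 1 — the extraction of the local normal form, and specifically the observation that the diagonal derivatives $\partial f^i/\partial p^i_\gamma$ cannot depend on the momenta — to be the main obstacle. For $m=1$ there is a single momentum block and that argument collapses, consistently with the remark preceding the theorem that then $\mathcal{O}=C^\infty(\mathcal{M}^0\pi)$; the patching and uniqueness arguments, by contrast, are routine once the normal form is available.
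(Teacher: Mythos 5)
Your proof is correct and follows essentially the same route as the paper: the heart in both cases is the local computation showing that $d\alpha^0\in\Gamma(L)$ forces $\partial f^i/\partial p^j_\gamma=0$ for $i\neq j$ and equality of the diagonal derivatives, from which one deduces (you via the dependency-set argument, the paper via vanishing of second derivatives — the same fact) that the common value $Y^\gamma$ is a function of $(x,u)$ only, giving the affine normal form $f^i=Y^\gamma p^i_\gamma+\alpha^i$. Your globalization and uniqueness argument via the zero section $0_E$ and the injectivity of $Y\mapsto\hat Y$ is a welcome explicit version of what the paper merely asserts ("$Y$ and $\alpha$ are unique... this also proves the global result").
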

\begin{proof}
It is clear that if $Y$ is a $\pi$-vertical vector field and $\alpha$ is a $\pi$-semibasic $(m-1)$-form on $E$ then $\alpha^0 = \hat{Y} + (\nu^0)^*(\alpha)$ is a current (see Examples \ref{currents}).

Conversely, suppose that $\alpha^0$ is a current. The local expression of $\alpha^0$ is $\alpha^0 = \alpha^{0i}(x^j, u^\beta, p_\beta^j) d^{m-1}x_i$
and
\[
d\alpha^0 = \left(\frac{\partial \alpha^{0i}}{\partial x^{i}}\right) d^mx + \left(\frac{\partial \alpha^{0i}}{\partial u^{\beta}}\right) du^{\beta} \wedge d^{m-1}x_i + \left(\frac{\partial \alpha^{0i}}{\partial p_\beta^j}\right)dp_{\beta}^j \wedge d^{m-1}x_i.
\]
Thus, using \eqref{local_L}, we deduce that
\begin{equation}\label{First-deriv}
\displaystyle \frac{\partial \alpha^{0i}}{\partial p_\beta^j} = 0 \mbox{ and } \frac{\partial \alpha^{0i}}{\partial p_\beta^i} = \frac{\partial \alpha^{0j}}{\partial p_\beta^j} \mbox{ if } i \neq j.
\end{equation}
This implies that
\[
\displaystyle \frac{\partial^2 \alpha^{0i}}{\partial (p_\alpha^{i})^2} = \frac{\partial^2 \alpha^{0j}}{\partial p_\alpha^{i} \partial p_\alpha^j}=0, \mbox{ with } i \neq j
\]
so if $m \geq 2$ we conclude that
\begin{equation}\label{Second-deriv}
\displaystyle \frac{\partial^2 \alpha^{0i}}{\partial (p_\alpha^{i})^2} = 0, \mbox{ for all } i.
\end{equation}
Therefore, from (\ref{First-deriv}) and (\ref{Second-deriv}), it follows that
\[
\alpha^{0i}(x^j, u^\beta, p_\beta^j) = Y^\alpha(x, u)p_\alpha^{i} + \alpha^{i}(x, u), \mbox{ for all } i.
\]
Consequently, we have proved that there exists a local $\pi$-vertical vector field $Y = Y^\alpha(x, u) \displaystyle \frac{\partial}{\partial u^\alpha}$ and a local $\pi$-semibasic $(m-1)$-form $\alpha = \alpha^{i}(x, u)d^{m-1}x_i$ on $E$ such that
\[
\alpha^0 = \hat{Y} + (\nu^0)^*(\alpha).
\]
Note that $Y$ and $\alpha$ are unique. Then, this last fact also proves the global result.
\end{proof}

\begin{remark}\label{Currents-Marco-Jerry} 
{\rm (i)} Note that ${\mathcal O}$ is a $C^\infty(E)$-module.\\
{\rm (ii)} 
In \cite{CaMa}, the authors consider as a space of currents the set of horizontal Poisson $(m-1)$-forms on ${\mathcal M}^0\pi$. Moreover, they prove that a $(m-1)$-form $F$ of this type may be described as
\[
F = \hat{Y} + (\nu^0)^*(\alpha') + \beta',
\]
where $X$ is a vertical vector field on $E$, $\alpha'$ is a $\pi$-semibasic $(m-1)$-form on $E$ and $\beta'$ is a closed $(\pi \circ \nu^0)$-semibasic $(m-1)$-form on ${\mathcal M}^0\pi$. Now, it is easy to prove that, under the previous conditions, there exists a unique closed $(m-1)$-form $\beta$ on $M$ such that $\beta' = (\pi \circ \nu^0)^*(\beta) = (\nu^0)^*(\pi^*(\beta))$. So, if we take $\alpha = \alpha' + \pi ^* \beta$, we conclude that
\[
F = \hat{Y} + (\nu^0)^*(\alpha).
\]
The previous discussion shows that ${\mathcal O}$ is just the space of currents which was considered in \cite{CaMa}. 
\end{remark}

\subsection{A suitable linear-affine bracket and the Hamilton-deDonder-Weyl equations}\label{suitable_LA_bracket}
We consider the linear-affine bracket
\[
\{\cdot, \cdot\}: {\mathcal O} \times \Gamma(\mu) \to \Gamma((\pi \circ \nu^0)^*(\Lambda^mT^*M))
\]
defined by
\begin{equation}\label{def-bracket}
\{\alpha^0, h\} = \langle d\alpha^0, \Gamma_h\rangle  = \langle d\alpha^0, \sharp^{\rm aff}(dh) \rangle.
\end{equation}

Assume that $m \geq 2$, that the local expression of the Hamiltonian section $h\in \Gamma(\mu)$ is
\[
h(x^{i}, u^{\alpha}, p_\alpha^{i}) = (x^{i}, u^{\alpha}, -H(x, u, p), p_\alpha^{i})
\]
and that the local expression of the current $\alpha^0\in {\mathcal O}$ is
\[
\alpha^0(x^{i}, u^{\alpha}, p_{\alpha}^{i}) = (Y^{\alpha}(x, u)p_\alpha^{i} + \beta^{i}(x, u))d^{m-1}x_i
\]
with $Y^\alpha$ and $\beta^{i}$ local real $C^{\infty}$-functions on $E$. Then, using (\ref{Gamma-h}) and (\ref{def-bracket}), we obtain the local expression of the linear-affine bracket \eqref{def-bracket} as 
\begin{equation}\label{Local-bracket}
\begin{array}{l}
\{\left(Y^\alpha(x, u)p_\alpha^{i} + \beta^{i}(x, u)\right)d^{m-1}x_i, h\} \\[5pt]
\;= \left(\displaystyle \frac{\partial \beta^{i}}{\partial x^{i}} + \frac{\partial Y^{\alpha}}{\partial x^{i}}p_\alpha^{i}  + \left( \frac{\partial \beta^{i}}{\partial u^\alpha} + \frac{\partial Y^{\beta}}{\partial u^\alpha}p_\beta^{i} \right)\frac{\partial H}{\partial p_\alpha^{i}} - \frac{\partial H}{\partial u^\alpha} Y^\alpha \right) \otimes d^mx.
\end{array}
\end{equation}

Note that if we write the current as $\alpha^0(x^{i}, u^{\alpha}, p_{\alpha}^{i}) = \alpha^{i} d^{m-1}x_i$, with $\alpha^i(x,u,p)=Y^{\alpha}(x, u)p_\alpha^{i} + \beta^{i}(x, u)$, the bracket takes the elegant form
\[
\{\alpha^id^{m-1}x_i, h\}= \left( \frac{\partial \alpha^{i}}{\partial x^{i}}   +  \frac{\partial \alpha^{i}}{\partial u^\alpha} \frac{\partial H}{\partial p_\alpha^{i}} - \frac{1}{m}\frac{\partial H}{\partial u^\alpha}  \frac{\partial \alpha^{i}}{\partial p^i_\alpha} \right) \otimes d^mx.
\]

As we know, $\Gamma(\mu)$ is an affine space which is modelled over the vector space $\Gamma((\pi \circ \nu^0)^*(\Lambda^mT^*M))$. Therefore, using (\ref{def-bracket}), it follows that the bilinear bracket
\[
\{\cdot, \cdot\}_l: {\mathcal O} \times \Gamma((\pi \circ \nu^0)^*(\Lambda^mT^*M)) \to \Gamma((\pi \circ \nu^0)^*(\Lambda^mT^*M))
\]
associated with the linear-affine bracket $\{\cdot, \cdot\}$ is given by
\[
\{\alpha^0, {\mathcal F}^0\}_l = \langle\mu^0(d\alpha^0), \sharp^{\rm lin}(d^l{\mathcal F}^0)\rangle.
\]
Here, $d^l{\mathcal F}^0$ is the vertical differential of ${\mathcal F}^0$ (see (\ref{d-l-F-0})), 
\[
\sharp^{\rm lin}: V(\mathbb{P}(\pi)) \to \displaystyle \frac{(\pi \circ \nu^0)^*(\Lambda^m T^*M) \otimes V(\pi \circ \nu^0)}{\operatorname{Ker} A}
\]
 is the vector bundle isomorphism associated with the affine bundle isomorphism $\sharp^{\rm aff}: \mathbb{P}(\pi) \to J^1(\pi \circ \nu^0)/\operatorname{Ker} A$, and
 \[
 \mu^0: {\rm Aff}\Big(\frac{J^1(\pi \circ \nu^0)}{\operatorname{Ker}  A}, (\pi \circ \nu^0)^*(\Lambda^mT^*M)\Big) \to {\rm Lin}\Big( \displaystyle \frac{(\pi \circ \nu^0)^*(T^*M) \otimes V(\pi \circ \nu^0)}{\operatorname{Ker}  A}, (\pi \circ \nu^0)^*(\Lambda^mT^*M)\Big)
\] 
is the canonical projection.
In local coordinates, we have
\begin{equation}\label{Local-exp-bracket-l}
\begin{array}{l}
\{ \left(Y^\alpha(x, u)p_\alpha^{i} + \beta^{i}(x, u)\right)d^{m-1}x_i, {\mathcal F}^0(x^{i}, u^{\alpha}, p_\alpha^{i})\otimes d^mx \}_l\\[5pt]
\;= \left(\left(\displaystyle  \frac{\partial \beta^{i}}{\partial u^\alpha} + \frac{\partial Y^{\beta}}{\partial u^\alpha}p_\beta^{i}\right) \displaystyle \frac{\partial {\mathcal F}^0}{\partial p_\alpha^{i}} - \frac{\partial {\mathcal F}^0}{\partial u^\alpha}Y^\alpha\right) \otimes d^mx.
\end{array}
\end{equation}
Again, if we write the current as $\alpha^0(x^{i}, u^{\alpha}, p_{\alpha}^{i}) = \alpha^{i} d^{m-1}x_i$, with $\alpha^i(x,u,p)=(Y^{\alpha}(x, u)p_\alpha^{i} + \beta^{i}(x, u))$, the bilinear bracket takes the form
\[
\{\alpha^id^{m-1}x_i, {\mathcal F}^0(x^{i}, u^{\alpha}, p_\alpha^{i})\otimes d^mx\}_l= \left( \frac{\partial \alpha^{i}}{\partial u^\alpha} \frac{\partial {\mathcal F}^0}{\partial p_\alpha^{i}} - \frac{1}{m} \frac{\partial {\mathcal F}^0}{\partial u^\alpha}  \frac{\partial \alpha^{i}}{\partial p^i_\alpha} \right) \otimes d^mx.
\]
On the other hand, if $m = 1$ then the space of currents is
\[
{\mathcal O} = \Gamma((\pi \circ \nu^0)^*(T^*M)) = C^{\infty}({\mathcal M}^0\pi)
\]
and, using (\ref{Gamma-h}) and (\ref{def-bracket}), we deduce that the linear-affine bracket
\[
\{\cdot, \cdot\}: C^{\infty}({\mathcal M}^0\pi) \times \Gamma ( \mu ) \to C^{\infty}({\mathcal M}^0\pi)
\]
and the bilinear bracket
\[
\{\cdot, \cdot\}_l: C^{\infty}({\mathcal M}^0\pi) \times C^{\infty}({\mathcal M}^0\pi) \to C^{\infty}({\mathcal M}^0\pi)
\]
are locally given by
\begin{equation}\label{Aff-bracket-m-1}
\{f^0, h\} = \displaystyle \frac{\partial f^0}{\partial x} + \frac{\partial f^0}{\partial u^\alpha}\frac{\partial H}{\partial p_\alpha} - \frac{\partial f^0}{\partial p_\alpha}\frac{\partial H}{\partial u^\alpha}
\end{equation}
and
\begin{equation}\label{Linear-bracket-m-1}
\{f^0, g^0\}_l = \displaystyle \frac{\partial f^0}{\partial u^\alpha}\frac{\partial g^0}{\partial p_\alpha} - \frac{\partial f^0}{\partial p_\alpha}\frac{\partial g^0}{\partial u^\alpha}
\end{equation}
for $f^0, g^0 \in C^{\infty}({\mathcal M}^0\pi)$ and $h \in \Gamma(\mu)$.

The following result extends to the field-theoretic context the canonical Poisson bracket formulation of Hamilton's equations.

\begin{theorem}\label{evolution-current}
Let $h: {\mathcal M}^0\pi \to {\mathcal M}\pi$ be a Hamiltonian section and $s^0: M \to {\mathcal M}^0\pi$ a (local) section of the projection $\pi \circ \nu^0: {\mathcal M}^0\pi \to M$. Then, $s^0$ is a solution of the Hamilton-deDonder-Weyl equations for $h$ if and only if
\begin{equation}\label{Current-Ham-eqs}
(s^0)^*(d\alpha^0) = \{\alpha^0, h\}\circ s^0, \; \; \; \forall \;\alpha^0 \in {\mathcal O}.
\end{equation}
\end{theorem}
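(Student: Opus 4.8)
The plan is to reduce the statement to a local computation using Proposition~\ref{Prop_HdDW} together with Theorem~\ref{descrip-current}, which tells us that (for $m\geq 2$) every current has the form $\alpha^0=\hat Y+(\nu^0)^*(\alpha)$, and so it suffices to test the identity~\eqref{Current-Ham-eqs} on currents of this particular shape; the case $m=1$, where ${\mathcal O}=C^\infty({\mathcal M}^0\pi)$, will be handled separately but identically in spirit via~\eqref{Aff-bracket-m-1}. First I would write $s^0$ in adapted coordinates as $s^0(x^i)=(x^i,u^\alpha(x),p_\alpha^i(x))$ and compute the left-hand side: if $\alpha^0=\alpha^{0i}\,d^{m-1}x_i$ with $\alpha^{0i}=Y^\alpha(x,u)p_\alpha^i+\beta^i(x,u)$, then using $d\alpha^0=\big(\partial\alpha^{0i}/\partial x^i\big)d^mx+\big(\partial\alpha^{0i}/\partial u^\beta\big)du^\beta\wedge d^{m-1}x_i+\big(\partial\alpha^{0i}/\partial p_\beta^j\big)dp_\beta^j\wedge d^{m-1}x_i$ and pulling back along $s^0$ collapses $du^\beta\mapsto(\partial u^\beta/\partial x^j)dx^j$ and $dp_\beta^j\mapsto(\partial p_\beta^j/\partial x^k)dx^k$, so that
\[
(s^0)^*(d\alpha^0)=\Big(\frac{\partial\alpha^{0i}}{\partial x^i}+\frac{\partial\alpha^{0i}}{\partial u^\beta}\frac{\partial u^\beta}{\partial x^i}+\frac{\partial\alpha^{0i}}{\partial p_\beta^j}\frac{\partial p_\beta^j}{\partial x^i}\Big)\circ s^0\;\otimes d^mx,
\]
the point being that only the "diagonal" derivatives $\partial u^\beta/\partial x^i$ and $\sum_i\partial p_\beta^i/\partial x^i$ survive because of the structure~\eqref{First-deriv}--\eqref{Second-deriv} of $d\alpha^0\in\Gamma(L)$.

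Next I would recall the local form~\eqref{Local-bracket} of the right-hand side, $\{\alpha^0,h\}\circ s^0=\big(\partial\beta^i/\partial x^i+(\partial Y^\alpha/\partial x^i)p_\alpha^i+(\partial\beta^i/\partial u^\alpha+(\partial Y^\beta/\partial u^\alpha)p_\beta^i)\,\partial H/\partial p_\alpha^i-(\partial H/\partial u^\alpha)Y^\alpha\big)\circ s^0\otimes d^mx$. Comparing the two expressions term by term, the identity~\eqref{Current-Ham-eqs} holds for this particular $\alpha^0$ if and only if
\[
\Big(Y^\alpha\,\frac{\partial u^\beta}{\partial x^i}\Big)\text{-type terms and }\Big(\text{the }\beta^i,Y^\alpha\text{ derivative terms}\Big)\text{ match},
\]
which, after cancelling the terms that are present identically on both sides (the $\partial\beta^i/\partial x^i$, $(\partial Y^\alpha/\partial x^i)p_\alpha^i$ and the $\partial/\partial u^\alpha$ terms contracted against $\partial u^\beta/\partial x^i$), is equivalent to the pair of conditions
\[
\frac{\partial u^\alpha}{\partial x^i}=\frac{\partial H}{\partial p_\alpha^i},\qquad
\sum_i\frac{\partial p_\alpha^i}{\partial x^i}=-\frac{\partial H}{\partial u^\alpha}
\]
holding along $s^0$ — precisely the Hamilton--deDonder--Weyl equations~\eqref{HdDW_intro}. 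The "only if" direction is then immediate from Proposition~\ref{Prop_HdDW}: if $s^0$ solves the equations, substitution yields~\eqref{Current-Ham-eqs} for all currents.

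For the converse, the key is to exhibit enough currents to detect each scalar equation. Using Examples~\ref{currents}: taking $\alpha^0=\hat Y$ for $Y=\partial/\partial u^\alpha$ (i.e.\ $Y^\beta=\delta^\beta_\alpha$, $\beta^i=0$) isolates, from the difference of the two sides, the combination $\sum_i\partial p_\alpha^i/\partial x^i+\partial H/\partial u^\alpha$, forcing the second HdDW equation; then, with that equation in hand, taking $\alpha^0=(\nu^0)^*(\alpha)$ for $\pi$-semibasic $\alpha$ — or more directly choosing currents whose $u$-dependence probes one momentum variable at a time — isolates $\partial u^\alpha/\partial x^i-\partial H/\partial p_\alpha^i$ and forces the first. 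I expect the main obstacle to be purely bookkeeping: organizing the constraint that $\alpha^{0i}$ depends on $p$ only through the "trace-like" combination $Y^\alpha p_\alpha^i$ (a consequence of $d\alpha^0\in\Gamma(L)$, see~\eqref{First-deriv}), so that when one pulls back $dp_\beta^j$ along $s^0$ only $\sum_i\partial p_\beta^i/\partial x^i$ appears and matches the single scalar equation available — getting the index contractions and the factor $1/m$ (visible in the "elegant form" of the bracket) to line up correctly is where care is needed, but no genuine difficulty arises. Finally, for $m=1$ the argument is the classical one: ${\mathcal O}=C^\infty({\mathcal M}^0\pi)$, the bracket is~\eqref{Aff-bracket-m-1}, and testing $(s^0)^*(df^0)=\{f^0,h\}\circ s^0$ against the coordinate functions $f^0=u^\alpha$ and $f^0=p_\alpha$ recovers exactly the time-dependent Hamilton equations.
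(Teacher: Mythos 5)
Your proposal is correct and follows essentially the same route as the paper: write $s^0$ and $\alpha^0=\hat Y+(\nu^0)^*(\alpha)$ in adapted coordinates (via Theorem \ref{descrip-current}), compute $(s^0)^*(d\alpha^0)$, compare term by term with the local expression \eqref{Local-bracket} of $\{\alpha^0,h\}$, and observe that the difference vanishes for all currents exactly when the Hamilton--deDonder--Weyl equations hold, with the $m=1$ case handled separately via \eqref{Aff-bracket-m-1}. Your explicit choice of test currents for the converse only makes precise what the paper leaves implicit in its ``if and only if''.
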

\begin{proof}
Suppose that $m \geq 2$ and that
\[
s^0(x^{i}) = (x^{i}, u^\alpha(x), p_\alpha^{i}(x)), \quad  \alpha^0(x^{i}, u^\alpha, p_\alpha^{i}) = \left(Y^\alpha(x, u)p_\alpha^{i} + \beta^{i}(x, u)\right)d^{m-1}x_i,
\]
with $Y^\alpha$, $\beta^{i}$ local real $C^\infty$-functions on $E$.
Then,
\[
(s^0)^*(d\alpha^0) = \left(\frac{\partial \beta^{i}}{\partial x^{i}} + \frac{\partial Y^{\alpha}}{\partial x^{i}}p_\alpha^{i} + \left( \frac{\partial \beta^{i}}{\partial u^\alpha} + \frac{\partial Y^{\beta}}{\partial u^\alpha}p_\beta^{i} \right)\frac{\partial u^\alpha}{\partial x^{i}} + \frac{\partial p_\alpha^{i}}{\partial x^{i}} Y^\alpha \right) \otimes d^mx.
\]
Thus, using (\ref{Local-bracket}), we conclude that (\ref{Current-Ham-eqs}) hold if and only if
\[
\displaystyle \frac{\partial u^\alpha}{\partial x^{i}} = \frac{\partial H}{\partial p_\alpha^{i}}, \; \; \; \; \frac{\partial p_\alpha^{i}}{\partial x^{i}} = -\frac{\partial H}{\partial u^\alpha},
\]
or, equivalently, $s^0$ is a solution of the Hamilton-deDonder-Weyl equations for $h$.

If $m =1$ the result is proved in a similar way using (\ref{Aff-bracket-m-1}).
\end{proof}
\subsection{A remark on boundary conditions}\label{BC}
If the boundary $\partial M$ of the base space $M$ of the configuration bundle $\pi: E \to M$ is not empty, then the Hamilton-deDonder-Weyl equations for a Hamiltonian section can be supplemented by boundary conditions.

The boundary of the configuration space $E$ is just
\[
\partial E = \pi^{-1}(\partial M)
\]
in such a way that 
\[
\pi_{|\partial E}: \partial E \to \partial M
\]
is again a fibration.

In a similar way, the restricted multimomentum bundle ${\mathcal M}^0\pi$ is a manifold with boundary,
\[
\partial({\mathcal M}^0\pi) = (\nu^0)^{-1}(\partial E) = (\pi \circ \nu^0)^{-1}(\partial M),
\]
and we have fibrations
\[
(\nu^0)_{|\partial({\mathcal M}^0\pi)}: \partial({\mathcal M}^0\pi) \to \partial E, \; \; \; (\pi \circ \nu^0)_{|\partial({\mathcal M}^0\pi)}: \partial({\mathcal M}^0\pi) \to \partial M.
\]
A boundary condition for the Hamiltonian Classical Field theory is given by specifying a subbundle $B^0 \rightarrow \partial M$ of $(\pi \circ \nu^0)_{|\partial({\mathcal M}^0\pi)}: \partial({\mathcal M}^0\pi) \to \partial M$, such that $B^0 \rightarrow B_E^0:= (\nu^0)_{|\partial({\mathcal M}^0\pi)}( B^0 )$ is a subbundle of $(\nu^0)_{|\partial({\mathcal M}^0\pi)}: \partial({\mathcal M}^0\pi) \to \partial E$. In such a case, we will consider only sections $s^0: M \to {\mathcal M}^0\pi$ such that
\begin{equation}\label{boundary-cond-sections}
s^0(\partial M) \subseteq B^0.
\end{equation}
A standard assumption in the literature for the subbundle $B^0$ is
\[
i_{B^0}^*\omega_h = 0,
\]
where $i_{B^0}: B^0 \to {\mathcal M}^0\pi$ is the canonical inclusion and $h: {\mathcal M}^0\pi \to {\mathcal M}\pi$ is the Hamiltonian section (see, for instance, \cite{AsIbSp,LeMaSa,IbSp1,IbSp2}; see also \cite{BiSnFi} for boundary conditions in the Lagrangian formalism).

From (\ref{boundary-cond-sections}), we deduce that among all the Hamiltonian connections
\[
^{H}: {\mathcal M}^0\pi \times_M TM \to T({\mathcal M}^0\pi)
\]
we should only consider those whose restriction to $\partial({\mathcal M}^0\pi) \times_{\partial M} T(\partial M)$ takes values in the tangent bundle $TB^0$, that is, $^{H}$ should induce a monomorphism of vector bundles
\[
^{H}: \partial({\mathcal M}^0\pi) \times_{\partial M} T(\partial M) \to TB^0.
\]
This remark is sufficient for the purposes in this paper.

A more detailed discussion of boundary conditions for a Hamiltonian Classical Field theory of first order and its relation with the section $\Gamma_h$ of the quotient affine bundle $J^1(\pi \circ \nu^0) / \operatorname{Ker} A$ and with the theory of covariant Peierls brackets \cite{Pe} in the space of the solutions will be postponed to a future publication (see the next Section \ref{Conclusions-Future}).

\section{The affine representation of the Lie algebra of currents on the affine space of Hamiltonian sections}\label{aff-repre-Lie-algebra-aff-space}

In this section, we prove that the space of currents ${\mathcal O}$ of a Hamiltonian Field theory of first order admits a Lie algebra structure and we show that the linear affine bracket $\{\cdot, \cdot\}$ introduced in Section \ref{suitable_LA_bracket} (see (\ref{def-bracket})) induces an affine representation of ${\mathcal O}$ on the affine space of Hamiltonian sections.

We first review the notion of an affine representation of a Lie algebra on an affine space (for more details, see \cite{HaLe}).

Let $A$ be an affine space modelled over the vector space $V$.
The vector space of affine maps of $A$ on $V$, ${\rm Aff}(A, V)$, is a Lie algebra and the Lie bracket on ${\rm Aff}(A, V)$ is given by
\[
[\varphi, \psi] = \varphi^{l} \circ \psi - \psi^{l} \circ \varphi,
\]
for $\varphi, \psi \in {\rm Aff}(A, V)$, where $\varphi^l, \psi^l: V \to V$ are the linear maps associated with $\varphi, \psi$, respectively.

An affine representation of  a real Lie algebra $\frak{g}$ on $A$ is a Lie algebra morphism
\[
\mathcal{R}: \frak{g} \to {\rm Aff(A, V)}.
\]

We first note that if $\pi: E \to M$ is a configuration bundle with ${\rm dim}M = 1$ then it is easy to prove the following facts (see Section \ref{time-dependent-Ham-mech} for the particular case when $M$ is the real line $\mathbb{R}$ and $E = \mathbb{R} \times Q$):
\begin{itemize}
\item
$J^1\pi$ is an affine subbundle of corank $1$ of the tangent bundle $TE \to E$ which is modelled over the vertical bundle $V\pi \to E$ to the fibration $\pi: E \to M$.
\item
The restricted multimomentum bundle is just the dual bundle $V^*\pi \to E$ to $V\pi \to E$.
\item
The extended multimomentum bundle is the cotangent bundle $T^*E \to E$ of $E$.
\item
$\Gamma(\mu)$ is an affine space which is modelled over the vector space ${\mathcal O} = C^{\infty}(V^*\pi)$ of the currents. 
\end{itemize}
So, in this case, we have a Lie algebra structure on ${\mathcal O} = C^{\infty}(V^*\pi)$. In fact, the Lie bracket on ${\mathcal O}$ is just the Poisson bracket $\{\cdot, \cdot\}_l$ on $V^*\pi$ given by (\ref{Linear-bracket-m-1}). Moreover, using (\ref{Aff-bracket-m-1}) and (\ref{Linear-bracket-m-1}), we deduce that the linear-affine bracket
\[
\{\cdot, \cdot\}: C^{\infty}(V^*\pi) \times \Gamma(\mu) \to C^{\infty}(V^*\pi)
\]
induces an affine representation of the Lie algebra $(C^{\infty}(V^*\pi), \{\cdot, \cdot\}_l)$ on the affine space $\Gamma(\mu)$. More explicitly, we have
\[
\{\{\alpha_0, \beta_0\}_l, h\}=\{\alpha_0, \{\beta_0, h\}\}_l- \{\beta_0, \{\alpha_0, h\}\}_l,
\]
for $\alpha_0, \beta_0 \in C^{\infty}(V^*\pi)$ and $h\in \Gamma(\mu)$.


Therefore, in the rest of this section, we will assume the following hypothesis:

\noindent{\bf Assumption}: In what follows, we will suppose that ${\rm dim}M \geq 2$.

First we introduce a Lie algebra structure on ${\mathcal O}$, then we show that the linear affine bracket $\{\cdot, \cdot\}: {\mathcal O} \times \Gamma(\mu) \to \Gamma((\pi \circ \nu^0)^*(\Lambda^mT^*M))$ induces an affine representation of the Lie algebra $({\mathcal O}, \{\cdot, \cdot\}_{\mathcal O})$ on the affine space $\Gamma(\mu)$.


\subsection{The Lie algebra structure on the space of currents ${\mathcal O}$}

The construction of the Lie bracket is made in several steps which involve the definition of a vertical vector field on ${\mathcal M}^0\pi$ associated to a current.

\subsubsection{Definition of the vertical vector field on ${\mathcal M}^0\pi$ associated to a current} Let $\flat: V(\pi \circ \nu)/(\pi \circ \nu)^*(\Lambda^mT^*M)  \to L \subseteq {\mathcal M}(\pi \circ \nu^0) = \Lambda^m_2(T^*({\mathcal M}^0\pi))$ be the vector bundle isomorphism over the identity of ${\mathcal M}^0\pi$ given by (\ref{b-b-tilde}) and denote by $\sharp: L \to V(\pi \circ \nu)/(\pi \circ \nu)^*(\Lambda^mT^*M)$ the inverse morphism. If $\alpha^0 \in {\mathcal O}$ then, from Definition \ref{Def-currents}, we have $\alpha^0 \in \Gamma(\Lambda^{m-1}_1(T^*{\mathcal M}^0\pi))$ and $d\alpha^0 \in \Gamma(L)$. So, we can consider the section $\sharp(d\alpha^0)$ of the vector bundle $V(\pi \circ \nu)/(\pi \circ \nu)^*(\Lambda^mT^*M) \to {\mathcal M}^0\pi = {\mathcal M}\pi/\pi^*(\Lambda^mT^*M)$.
\begin{remark}\label{justification-notation}
The notation $\sharp(d\alpha^0)$ is justified by the following fact. The vector bundle $V(\pi \circ \nu)/(\pi \circ \nu)^*(\Lambda^mT^*M)$ is canonically isomorphic to $\mathbb{P}(\pi)^+$ (an isomorphism ${\mathcal I}$ between these vector bundles is characterized by condition (\ref{I-mu-A-tilde})). So, if $h: {\mathcal M}^0\pi \to {\mathcal M}\pi$ is a Hamiltonian section then $dh \in \mathbb{P}(\pi)$ and
\[
\langle \sharp(d\alpha^0), dh \rangle \in \Gamma((\pi \circ \nu^0)^*(\Lambda^mT^*M)).
\]
Moreover, as will be proved later (see the next Lemma \ref{sharp-tilde-bracket}), we can write the linear-affine bracket as
\[
\{\alpha^0, h\} = -\langle\sharp(d\alpha^0), dh \rangle.
\]
The reader can compare the previous expression with Eq. (\ref{Canonical_Poisson}) for the definition of the canonical Poisson bracket on $T^*Q$.
\end{remark}

Note that a section of the vector bundle $V(\pi \circ \nu)/(\pi \circ \nu)^*(\Lambda^mT^*M) \to {\mathcal M}^0\pi = {\mathcal M}\pi/\pi^*(\Lambda^mT^*M)$ can be identified with a section $U: {\mathcal M}\pi \to V(\pi \circ \nu)$ of $V(\pi \circ \nu) \to {\mathcal M}\pi$ which is equivariant with respect to the fibred actions of $\pi^*(\Lambda^mT^*M)$ on ${\mathcal M}\pi$ and on $V(\pi \circ \nu)$.

By applying this observation to $\sharp(d\alpha^0)$, we denote by
\[
\tilde{\mathcal H}_{\alpha^0}: {\mathcal M}\pi \to V(\pi \circ \nu)
\]
the equivariant vector field on ${\mathcal M}\pi$ associated with the section $\sharp(d\alpha^0)$. Since $\tilde{\mathcal H}_{\alpha^0}$ is equivariant, it follows that it is $\mu$-projectable to a vertical vector field 
\[
{\mathcal H}_{\alpha^0}: {\mathcal M}^0\pi \to V(\pi \circ \nu^0)
\]
on ${\mathcal M}^0\pi$.

We now present a description of $\tilde{\mathcal H}_{\alpha^0}$ in terms of $\alpha^0$.
Let $\tilde{\flat}: V(\pi \circ \nu) \to {\mathcal M}(\pi \circ \nu) = \Lambda^m_2(T^*({\mathcal M}\pi))$ be the vector bundle monomorphism given by (\ref{b-tilde-multi}). Denote by $\tilde{L}$ the image of $V(\pi \circ \nu)$ by $\tilde{\flat}$, so that
\[
\tilde{\flat}: V(\pi \circ \nu) \to \tilde{L} \subseteq {\mathcal M}(\pi \circ \nu) = \Lambda^m_2(T^*{\mathcal M}\pi)
\]
is a vector bundle isomorphism over the identity of ${\mathcal M}\pi$.
From (\ref{Local-multi-sym}), it follows that a local basis of $\Gamma(\tilde{L})$ is
\[
\{d^mx, du^\alpha \wedge d^{m-1}x_i, dp_\alpha^{i} \wedge d^{m-1}x_i\}.
\]
Note that if $\gamma \in {\mathcal M}\pi$ then
\[
(\Lambda^mT^*_\gamma\mu)(L_{\mu(\gamma)}) = \tilde{L}_{\gamma}.
\]
In fact, if $\theta$ is a section of ${\mathcal M}(\pi \circ \nu^0) = \Lambda^m_2(T^*{\mathcal M}^0\pi)$ then
\begin{equation}\label{L-mu-L-tilde}
\theta \in \Gamma(L) \Leftrightarrow \mu^*\theta \in \Gamma(\tilde{L}).
\end{equation}
Let $\bar{\flat}: V(\pi \circ \nu) \to L \subseteq {\mathcal M}(\pi \circ \nu^0) = \Lambda_2^m(T^*{\mathcal M}^0\pi)$ be the vector bundle morphism over the canonical projection $\mu: {\mathcal M}\pi \to {\mathcal M}^0\pi$ which is characterized by Eq. (\ref{3.18'}). If $\gamma \in {\mathcal M}\pi$ and ${\mathcal J}_\gamma: V_\gamma(\pi \circ \nu) \to (V(\pi \circ \nu)/(\pi \circ \nu)^*(\Lambda^mT^*M))_{\mu(\gamma)}$ is the canonical isomorphism between the fibers by $\gamma$ and $\mu(\gamma)$ of the vector bundles $V(\pi \circ \nu) \to {\mathcal M}\pi$ and $V(\pi \circ \nu)/(\pi \circ \nu)^*(\Lambda^mT^*M) \to {\mathcal M}^0\pi$ then, using (\ref{Diagram-Comm-1}) and (\ref{Diagram-Comm-2}), it follows that the following diagram
\begin{equation}\label{commutative-diagram}
\xymatrix{V_\gamma(\pi \circ \nu)\ar[rrr]^{\tilde{\flat}_\gamma}\ar[drrr]^{\bar{\flat}_\gamma} \ar[d]^{{\mathcal J}_\gamma}&&&\tilde{L}_\gamma\\
(V(\pi \circ \nu)/(\pi \circ \nu)^*(\Lambda^mT^*M))_{\mu(\gamma)} \ar[rrr]_{\hspace{1.5cm}\flat_{\mu(\gamma)}} &&& L_{\mu(\gamma)} \ar[u]_{\Lambda^mT^*_\gamma \mu}}
\end{equation}
is commutative. This implies that
\[
\tilde{\flat}_\gamma(\tilde{\mathcal H}_{\alpha^0}(\gamma)) = (\Lambda^mT^*_\gamma \mu)\big(\flat_{\mu(\gamma)}(\sharp(d\alpha^0)(\mu(\gamma)))\big) = (\Lambda^mT^*_\gamma \mu)((d\alpha^0)(\mu(\gamma))),
\]
or, in other words, $\tilde{\mathcal H}_{\alpha^0}$ satisfies the following condition
\begin{equation}\label{Def-H-tilde-alpha-0}
i_{\tilde{\mathcal H}_{\alpha^0}}\omega_{{\mathcal M}\pi} = \mu^*(d\alpha^0).
\end{equation}
Note that, since $\omega_{{\mathcal M}\pi}$ is non-degenerate, (\ref{Def-H-tilde-alpha-0}) may be considered as a definition of the equivariant vector field $\tilde{\mathcal H}_{\alpha^0}$.

So, in conclusion, for a current $\alpha^0 \in {\mathcal O}$ we have the following objects:
\begin{itemize}
\item
An equivariant vertical vector field $\tilde{\mathcal H}_{\alpha^0}: {\mathcal M}\pi \to V(\pi \circ \nu)$ on ${\mathcal M}\pi$, which is characterized by condition (\ref{Def-H-tilde-alpha-0}).
\item
The induced section $\sharp(d\alpha^0)$ of the vector bundle 
\[
\mathbb{P}(\pi)^+ \simeq V(\pi \circ \nu) / (\pi \circ \nu)^*(\Lambda^mT^*M) \to {\mathcal M}^0\pi = {\mathcal M}\pi / \pi^*(\Lambda^mT^*M).
\]
\item
The vertical vector field on ${\mathcal M}^0\pi$
\[
{\mathcal H}_{\alpha^0}: {\mathcal M}^0\pi \to V(\pi \circ \nu^0)
\]
which is the projection, via $\mu: {\mathcal M}\pi \to {\mathcal M}^0\pi$, of $\tilde{\mathcal H}_{\alpha^0}$.
\end{itemize}

We now present the local expressions of the vector fields $\tilde{\mathcal H}_{\alpha^0}$ and ${\mathcal H}_{\alpha^0}$.
As we have seen (see Section \ref{Currents-Ham-deDonder-Weyl}), the local expression of an element $\alpha^0 \in {\mathcal O}$ is
\[
\alpha^0(x^{i}, u^{\alpha}, p_{\alpha}^{i}) = \left(Y^{\alpha}(x^{j}, u^{\beta})p_\alpha^{i} + \alpha^{i}(x^{j}, u^{\beta})\right)d^{m-1}x_i
\]
with $Y^\alpha$ and $\alpha^{i}$ local real $C^{\infty}$-functions on $E$. We have
\begin{equation}\label{dif-current}
d\alpha^0 = \left(\frac{\partial \alpha^{i}}{\partial x^{i}} + \frac{\partial Y^{\alpha}}{\partial x^{i}}p_\alpha^{i} \right) d^mx + \left(\frac{\partial \alpha^{i}}{\partial u^\beta} + \frac{\partial Y^{\alpha}}{\partial u^\beta}p_\alpha^{i} \right) du^\beta \wedge d^{m-1}x_i + Y^{\alpha} dp_\alpha^{i} \wedge d^{m-1}x_i.
\end{equation}
 Thus, using (\ref{Local-multi-sym}) and (\ref{Def-H-tilde-alpha-0}), we deduce that 
 \begin{equation}\label{sharp-tilde}
 \tilde{\mathcal H}_{\alpha^0} = Y^\alpha \frac{\partial}{\partial u^\alpha} - \left(\frac{\partial \alpha^{i}}{\partial x^{i}} + \frac{\partial Y^{\alpha}}{\partial x^{i}}p_\alpha^{i} \right) \frac{\partial}{\partial p} - \left(\frac{\partial \alpha^{i}}{\partial u^\beta} + \frac{\partial Y^{\alpha}}{\partial u^\beta}p_\alpha^{i} \right) \frac{\partial}{\partial p_\beta^{i}}.
 \end{equation}
 Therefore, it follows that
  \begin{equation}\label{sharp-current-restricted}
 {\mathcal H}_{\alpha^0} = Y^\alpha \frac{\partial}{\partial u^\alpha} - \left(\frac{\partial \alpha^{i}}{\partial u^\beta} + \frac{\partial Y^{\alpha}}{\partial u^\beta}p_\alpha^{i} \right) \frac{\partial}{\partial p_\beta^{i}}.
 \end{equation}
 Following the terminology in \cite{CaCaIb} (see also \cite{GoSt}), Eq. (\ref{Def-H-tilde-alpha-0}) implies that $\mu^*\alpha^0$ is a Hamiltonian $(m-1)$-form and that the vector field $\tilde{\mathcal H}_{\alpha^0}$ is a Hamiltonian vector field on the multisymplectic manifold $({\mathcal M}\pi, \omega_{{\mathcal M}\pi})$.

\subsubsection{Definition of the Lie bracket $\{\cdot, \cdot\}_{\mathcal O}$}
As in \cite{CaCaIb}, we consider the $(m-1)$-form $\{\mu^*(\alpha^0), \mu^*(\beta^0)\}^{\tilde{}}_{\mathcal O}$ on the multisymplectic manifold $({\mathcal M}\pi, \omega_{{\mathcal M}\pi})$ defined by
\[
\{\mu^*(\alpha^0), \mu^*(\beta^0)\}^{\tilde{}}_{\mathcal O} = -i_{\tilde{\mathcal H}_{\alpha^0}} i_{\tilde{\mathcal H}_{\beta^0}} \omega_{{\mathcal M}\pi} = -i_{\tilde{\mathcal H}_{\alpha^0}}(\mu^*(d\beta^0)).
\]
Using the fact that $\tilde{\mathcal H}_{\alpha^0}$ is $\mu$-projectable, it follows that such a $(m-1)$-form is basic with respect to $\mu$. In fact,
\[
\{\mu^*(\alpha^0), \mu^*(\beta^0)\}_{\mathcal O}^{\tilde{}} = -\mu^*(i_{{\mathcal H}_{\alpha^0}}d\beta^0).
\]
This equation suggests the introduction of the the $(m-1)$-form $\{\alpha^0, \beta^0\}_{\mathcal O}$ on ${\mathcal M}^0\pi$ given by
\begin{equation}\label{def-bracket-0}
\{\alpha^0, \beta^0\}_{\mathcal O} = - i_{{\mathcal H}_{\alpha^0}}d\beta^0.
\end{equation}
Using the local expressions
\[
\alpha^0(x^{i}, u^{\alpha}, p_\alpha^{i}) = \left( Y^\alpha(x, u)p_\alpha^{i} + \alpha^{i}(x, u) \right)d^{m-1}x_i, \; \; \beta^0(x^{i}, u^{\alpha}, p_\alpha^{i}) = \left( Z^\alpha(x, u)p_\alpha^{i} + \beta^{i}(x, u) \right)d^{m-1}x_i
\]
we deduce from (\ref{dif-current}) and (\ref{sharp-current-restricted}) that
\begin{equation}
\begin{array}{l}\label{Local-bracket-current}
\{\left( Y^\alpha(x, u)p_\alpha^{i} + \alpha^{i}(x, u) \right)d^{m-1}x_i, \left( Z^\alpha(x, u)p_\alpha^{i} + \beta^{i}(x, u) \right)d^{m-1}x_i\}_{\mathcal O}  \\
[5pt]
\;= - \left(\big(\displaystyle Y^\beta \frac{\partial Z^\alpha}{\partial u^\beta} -  Z^\beta \frac{\partial Y^\alpha}{\partial u^\beta}\big)p_\alpha^{i} + \big(\displaystyle Y^\beta \frac{\partial \beta^{i}}{\partial u^\beta} -  Z^\beta \frac{\partial \alpha^{i}}{\partial u^\beta}\big)\right)d^{m-1}x_i 
\end{array}
\end{equation}
So, it is clear that $\{\alpha, \beta\}_{\mathcal O} \in {\mathcal O}$.

Moreover, we can prove the following result.
\begin{theorem}\label{Lie-algebra-O}
The bracket $\{\cdot, \cdot\}_{\mathcal O}$ given by
\[
\{\alpha^0, \beta^0\}_{\mathcal O} = -i_{{\mathcal H}_{\alpha^0}}(d\beta^0), \; \; \; \mbox{ for } \alpha^0, \beta^0 \in {\mathcal O},
\]
defines a Lie algebra structure on the space of currents ${\mathcal O}$.
\end{theorem}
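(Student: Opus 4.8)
The plan is to verify the three Lie-algebra axioms for $\{\alpha^0,\beta^0\}_{\mathcal O}=-i_{{\mathcal H}_{\alpha^0}}(d\beta^0)$ under the standing assumption $\dim M=m\geq 2$. Bilinearity over $\mathbb{R}$ is immediate: $\alpha^0\mapsto d\alpha^0$ is linear, $\alpha^0\mapsto\tilde{\mathcal H}_{\alpha^0}$ is linear (it is defined through the non-degenerate form $\omega_{{\mathcal M}\pi}$ and the injectivity of $\mu^*$), hence so is $\alpha^0\mapsto{\mathcal H}_{\alpha^0}$, and contraction is linear; that $\{\alpha^0,\beta^0\}_{\mathcal O}\in{\mathcal O}$ has already been observed from the local expression \eqref{Local-bracket-current}. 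Skew-symmetry follows from the identity $\mu^*\{\alpha^0,\beta^0\}_{\mathcal O}=-i_{\tilde{\mathcal H}_{\alpha^0}}i_{\tilde{\mathcal H}_{\beta^0}}\omega_{{\mathcal M}\pi}$ established above, the anticommutativity $i_Xi_Y=-i_Yi_X$ of interior products, and the injectivity of $\mu^*$ on forms; alternatively it can be read directly off \eqref{Local-bracket-current} by swapping $(Y^\alpha,\alpha^i)$ with $(Z^\alpha,\beta^i)$.

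The heart of the argument is the following lemma, which I would establish first: the assignment $\alpha^0\mapsto\tilde{\mathcal H}_{\alpha^0}$ is a Lie algebra anti-homomorphism, i.e.
\[
[\tilde{\mathcal H}_{\alpha^0},\tilde{\mathcal H}_{\beta^0}]=-\tilde{\mathcal H}_{\{\alpha^0,\beta^0\}_{\mathcal O}}.
\]
Since $\omega_{{\mathcal M}\pi}$ is closed, Cartan's formula together with \eqref{Def-H-tilde-alpha-0} gives $\mathcal{L}_{\tilde{\mathcal H}_{\alpha^0}}\omega_{{\mathcal M}\pi}=d\,\mu^*(d\alpha^0)=0$; hence $i_{[\tilde{\mathcal H}_{\alpha^0},\tilde{\mathcal H}_{\beta^0}]}\omega_{{\mathcal M}\pi}=\mathcal{L}_{\tilde{\mathcal H}_{\alpha^0}}i_{\tilde{\mathcal H}_{\beta^0}}\omega_{{\mathcal M}\pi}=d\,i_{\tilde{\mathcal H}_{\alpha^0}}(\mu^*(d\beta^0))$, using $d(\mu^*(d\beta^0))=0$. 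Now $i_{\tilde{\mathcal H}_{\alpha^0}}(\mu^*(d\beta^0))=i_{\tilde{\mathcal H}_{\alpha^0}}i_{\tilde{\mathcal H}_{\beta^0}}\omega_{{\mathcal M}\pi}=-\mu^*\{\alpha^0,\beta^0\}_{\mathcal O}$, so the last display equals $-\mu^*(d\{\alpha^0,\beta^0\}_{\mathcal O})=-i_{\tilde{\mathcal H}_{\{\alpha^0,\beta^0\}_{\mathcal O}}}\omega_{{\mathcal M}\pi}$ by \eqref{Def-H-tilde-alpha-0} applied to the current $\{\alpha^0,\beta^0\}_{\mathcal O}$. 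Non-degeneracy of $\omega_{{\mathcal M}\pi}$ yields the claim; projecting via $\mu$ — the bracket of $\mu$-projectable vector fields projects to the bracket of the projections — gives $[{\mathcal H}_{\alpha^0},{\mathcal H}_{\beta^0}]=-{\mathcal H}_{\{\alpha^0,\beta^0\}_{\mathcal O}}$.

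With the lemma, the Jacobi identity is a short computation on ${\mathcal M}^0\pi$. I would write $\{\{\alpha^0,\beta^0\}_{\mathcal O},\gamma^0\}_{\mathcal O}=-i_{{\mathcal H}_{\{\alpha^0,\beta^0\}_{\mathcal O}}}d\gamma^0=i_{[{\mathcal H}_{\alpha^0},{\mathcal H}_{\beta^0}]}d\gamma^0=\mathcal{L}_{{\mathcal H}_{\alpha^0}}i_{{\mathcal H}_{\beta^0}}d\gamma^0-i_{{\mathcal H}_{\beta^0}}\mathcal{L}_{{\mathcal H}_{\alpha^0}}d\gamma^0$ and then simplify each term using \eqref{def-bracket-0} in the form $i_{{\mathcal H}_{\xi^0}}d\eta^0=-\{\xi^0,\eta^0\}_{\mathcal O}$, Cartan's formula, and the crucial observation that $i_{{\mathcal H}_{\xi^0}}$ annihilates every current (because ${\mathcal H}_{\xi^0}$ is $(\pi\circ\nu^0)$-vertical while a current is locally $c^i\,d^{m-1}x_i$, built from the $dx$'s only). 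Thus $i_{{\mathcal H}_{\alpha^0}}\{\beta^0,\gamma^0\}_{\mathcal O}=0$, so $\mathcal{L}_{{\mathcal H}_{\alpha^0}}i_{{\mathcal H}_{\beta^0}}d\gamma^0=-\mathcal{L}_{{\mathcal H}_{\alpha^0}}\{\beta^0,\gamma^0\}_{\mathcal O}=-i_{{\mathcal H}_{\alpha^0}}d\{\beta^0,\gamma^0\}_{\mathcal O}=\{\alpha^0,\{\beta^0,\gamma^0\}_{\mathcal O}\}_{\mathcal O}$, and, from $\mathcal{L}_{{\mathcal H}_{\alpha^0}}d\gamma^0=-d\{\alpha^0,\gamma^0\}_{\mathcal O}$, we get $-i_{{\mathcal H}_{\beta^0}}\mathcal{L}_{{\mathcal H}_{\alpha^0}}d\gamma^0=i_{{\mathcal H}_{\beta^0}}d\{\alpha^0,\gamma^0\}_{\mathcal O}=-\{\beta^0,\{\alpha^0,\gamma^0\}_{\mathcal O}\}_{\mathcal O}$. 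Adding, $\{\{\alpha^0,\beta^0\}_{\mathcal O},\gamma^0\}_{\mathcal O}=\{\alpha^0,\{\beta^0,\gamma^0\}_{\mathcal O}\}_{\mathcal O}-\{\beta^0,\{\alpha^0,\gamma^0\}_{\mathcal O}\}_{\mathcal O}$, which is the Jacobi identity.

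I expect the main obstacle to be the key lemma: one must carefully track the pull-back $\mu^*$ and the defining relations \eqref{Def-H-tilde-alpha-0}, and, since the form $\omega_h$ downstairs on ${\mathcal M}^0\pi$ is degenerate, the whole computation has to be performed upstairs on the multisymplectic manifold $({\mathcal M}\pi,\omega_{{\mathcal M}\pi})$ and then pushed down via $\mu$, which requires knowing beforehand that $\{\alpha^0,\beta^0\}_{\mathcal O}$ is again a current and that $\tilde{\mathcal H}_{\alpha^0}$ is $\mu$-projectable (both already available in the text). A less elegant but entirely routine alternative would be to verify the Jacobi identity directly from the coordinate formula \eqref{Local-bracket-current}, recognizing the $Y^\alpha$-components as the Lie bracket of $\pi$-vertical vector fields on $E$ and the remaining components as a semidirect-product-type action on $\pi$-semibasic $(m-1)$-forms.
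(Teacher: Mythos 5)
Your argument is correct, but it is not the route the paper takes. The paper's proof of Theorem \ref{Lie-algebra-O} is precisely the short structural argument you mention only in your final sentence as a fallback: via Theorem \ref{descrip-current} it identifies ${\mathcal O}$ with $\Gamma(V\pi)\times\Gamma(\Lambda^{m-1}_1T^*E)$ and reads off from \eqref{Local-bracket-current} that under this identification the bracket becomes $\{(Y,\alpha),(Z,\beta)\}_{\mathcal O}=-([Y,Z],\, i_Yd\beta-i_Zd\alpha)$, i.e.\ (minus) the semidirect-product bracket of $\pi$-vertical vector fields acting by Lie derivative on $\pi$-semibasic $(m-1)$-forms, for which the Jacobi identity is immediate. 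Your main route instead works intrinsically on $({\mathcal M}\pi,\omega_{{\mathcal M}\pi})$: your key lemma $[\tilde{\mathcal H}_{\alpha^0},\tilde{\mathcal H}_{\beta^0}]=-\tilde{\mathcal H}_{\{\alpha^0,\beta^0\}_{\mathcal O}}$ is exactly Lemma \ref{anti-algebra-hom} of the paper, proved there by the same Cartan-formula computation but deployed only later, in the proof of Theorem \ref{affine-representation}; you then push it down via $\mu$ and obtain Jacobi from $i_{[{\mathcal H}_{\alpha^0},{\mathcal H}_{\beta^0}]}=\mathcal{L}_{{\mathcal H}_{\alpha^0}}\circ i_{{\mathcal H}_{\beta^0}}-i_{{\mathcal H}_{\beta^0}}\circ\mathcal{L}_{{\mathcal H}_{\alpha^0}}$ together with the correct observation that contracting the vertical fields ${\mathcal H}_{\xi^0}$ into semibasic forms gives zero. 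Both arguments are complete --- each needs the closure property $\{\alpha^0,\beta^0\}_{\mathcal O}\in{\mathcal O}$, already established from \eqref{Local-bracket-current} --- and the signs in your computation check out. What your approach buys is a coordinate-free derivation that makes the multisymplectic origin of the Jacobi identity transparent and anticipates the representation-theoretic Lemma \ref{anti-algebra-hom}; what the paper's identification buys is brevity and an explicit description of $({\mathcal O},\{\cdot,\cdot\}_{\mathcal O})$ as (the opposite of) a semidirect product Lie algebra.
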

\begin{proof}
Using Theorem \ref{descrip-current}, we deduce that there exists and isomorphism between the $C^{\infty}(E)$-modules ${\mathcal O}$ and $\Gamma(V\pi) \times \Gamma(\Lambda^{m-1}_1T^*E)$. In addition, from (\ref{Local-bracket-current}), it follows that under the previous isomorphism, the bracket $\{\cdot, \cdot\}_{\mathcal O}$ is given by
\[
\{(Y, \alpha), (Z, \beta)\}_{\mathcal O} = -([Y, Z], {\mathcal L}_Y\beta - {\mathcal L}_Z\alpha) =  -([Y, Z], i_Yd\beta - i_Zd\alpha),
\]
for $(Y, \alpha), (Z, \beta) \in \Gamma(V\pi) \times \Gamma(\Lambda^{m-1}_1T^*E)$, where $[\cdot, \cdot]$ is the Lie bracket of vector fields in $E$ and ${\mathcal L}$ is the Lie derivative operator. Using this definition of $\{\cdot, \cdot\}_{\mathcal O}$, it is easy to prove that $\{\cdot, \cdot\}_{\mathcal O}$ induces a Lie algebra structure on ${\mathcal O}$.
\end{proof}

Note that if we write the observables locally as $\alpha^0(x^{i}, u^{\alpha}, p_\alpha^{i}) =\alpha^{0i}d^{m-1}x_i$ and $\beta ^0(x^{i}, u^{\alpha}, p_\alpha^{i}) =\beta ^{0i}d^{m-1}x_i$ with $ \alpha^{0i}(x, u, p)=  Y^\alpha(x, u)p_\alpha^{i} + \alpha^{i}(x, u)  $ and $\beta ^{0i}(x, u, p)=  Z^\alpha(x, u)p_\alpha^{i} + \beta^{i}(x, u) $, then the Lie bracket $\{\cdot, \cdot\}_{\mathcal O}$ has the local expression
\[
\{\alpha^{0i}d^{m-1}x_i, \beta ^{0i}d^{m-1}x_i\}_{\mathcal O}=\left( \frac{\partial \alpha^{0j}}{\partial u^ \gamma } \frac{\partial \beta ^{0i}}{\partial p_ \gamma ^{j}} - \frac{\partial \beta ^{0j}}{\partial u^ \gamma }  \frac{\partial \alpha^{0i}}{\partial p^j_ \gamma } \right)d^{m-1}x_i 
\]
which is reminiscent of the local expression of the canonical Poisson bracket.
  
As a final remark on the definition of the morphism $\flat$ and the bracket $\{\cdot,\cdot\}_\mathcal{O}$ on the currents, we can derive a version of the classical result that any Poisson structure on a manifold induces a Lie algebroid structure on the cotangent bundle of the manifold. In this case, we will obtain a Lie algebroid structure on the vector bundle $L$ over $M^0\pi$. 

Indeed, it is clear that the vector bundle $V(\pi \circ \nu) \to \mathcal{M}\pi$ admits a Lie algebroid structure. The Lie bracket in the space $\Gamma(V(\pi \circ \nu))$ of sections is just the restriction of the standard Lie bracket to $(\pi \circ \nu)$-vertical vector fields and the anchor map is the inclusion $\Gamma(V(\pi \circ \nu)) \hookrightarrow \frak{X}(\mathcal{M}\pi)$. So, using the vector bundle isomorphism $\tilde{\flat}: V(\pi \circ \nu) \to \tilde{L} \subseteq \mathcal{M}(\pi \circ \nu)$, we can induce a Lie algebroid structure on the vector bundle $\tilde{L}$. In fact, a direct computation proves that the Lie bracket in the space of sections of $\tilde{L}$, $\Gamma(\tilde{L})$, is given by
\begin{equation}\label{bracket-L-tilde}
[\tilde{\theta}, \tilde{\theta}']_{\tilde{L}} = \mathcal{L}_{\tilde{\rho}(\tilde{\theta})}\tilde{\theta}' -  \mathcal{L}_{\tilde{\rho}(\tilde{\theta'})}\tilde{\theta} + d(i_{\tilde{\rho}(\tilde{\theta}')}\tilde{\theta}), \; \; \mbox{ for } \tilde{\theta}, \tilde{\theta}' \in \Gamma(\tilde{L}),
\end{equation}
where $\tilde{\rho} = \tilde{\flat}^{-1}: \Gamma(\tilde{L}) \to \Gamma(V(\pi \circ \nu))$ is the anchor map.

On the other hand, the space of sections of the vector bundle $V(\pi \circ \nu) / (\pi \circ \nu)^*(\Lambda^m(T^*M)) \to \mathcal{M}^0\pi$ may be identified with the $(\pi \circ \nu)$-vertical and $(\pi \circ \nu)^*(\Lambda^m(T^*M))$-equivariant vector fields on $\mathcal{M}\pi$. Thus, using that the standard Lie bracket of vector fields is closed for this subspace, we may induce a Lie algebroid structure on the vector bundle $V(\pi \circ \nu) / (\pi \circ \nu)^*(\Lambda^m(T^*M)) \to \mathcal{M}^0\pi$. Note that every $(\pi \circ \nu)$-vertical and $(\pi \circ \nu)^*(\Lambda^m(T^*M))$-equivariant vector field on $\mathcal{M}\pi$ is $\mu$-projectable a vector field on $\mathcal{M}^0\pi$ and this fact determines the anchor map of the Lie algebroid $V(\pi \circ \nu) / (\pi \circ \nu)^*(\Lambda^m(T^*M)) \to \mathcal{M}^0\pi$.

Now, using the vector bundle isomorphism 
\[
\flat: V(\pi \circ \nu) / (\pi \circ \nu)^*(\Lambda^m(T^*M)) \to L \subseteq \mathcal{M}(\pi \circ \nu^0)
\]
we can translate the previous Lie algebroid structure to a Lie algebroid structure on $L$. In fact, if $\theta \in \Gamma(L)$ then, from the commutativity of the diagram (\ref{commutative-diagram}), it follows that vector field $\tilde{\rho}(\mu^*(\theta))$ on $\mathcal{M}\pi$ is $\mu$-projectable to a vector field $\rho(\theta) \in \Gamma(V(\pi \circ \nu^0))$. This fact determines the anchor map $\rho: \Gamma(L) \to \Gamma(V(\pi \circ \nu^0))$ of the Lie algebroid $L$. Moreover, from (\ref{bracket-L-tilde}), we deduce that the Lie bracket $[\cdot, \cdot]$ on $\Gamma(L)$ is given by
\begin{equation}\label{bracket-L}
[\theta, \theta']_{L} = \mathcal{L}_{\rho(\theta)}\theta' -  \mathcal{L}_{\rho(\theta')}\theta + d(i_{\rho(\theta')}\theta), \; \; \mbox{ for } \theta, \theta' \in \Gamma(L).
\end{equation} 
Finally, if $\alpha^0, \beta^0 \in \mathcal{O}$ then, using (\ref{def-bracket-0}), (\ref{bracket-L}) and the fact that
\[
\rho(d\alpha^0) = \mathcal{H}_{\alpha^0}, \; \; \rho(d\beta^0) = \mathcal{H}_{\beta^0},
\]
we conclude that
\begin{equation}\label{bracket-exact-currents}
d\{\alpha^0, \beta^0\}_{\mathcal O} = -[ d\alpha^0, d\beta^0]_L, \; \; \mbox{ for } \alpha^0, \beta^0 \in {\mathcal O}.
\end{equation}
Eqs (\ref{bracket-L}) and (\ref{bracket-exact-currents}) are reminiscent of the properties of the Lie algebroid structure
on the cotangent bundle of a Poisson manifold (in particular, the cotangent bundle of an arbitrary manifold).

\subsection{The linear-affine bracket is an affine representation}
In this subsection, we show that the linear-affine bracket $\{\cdot, \cdot\}: {\mathcal O} \times \Gamma(\mu) \to \Gamma((\pi \circ \nu^0)^*(\Lambda^mT^*M))$ induces an affine representation of the Lie algebra of currents on the affine space of Hamiltonian sections.

\begin{theorem}\label{affine-representation}
The map
\begin{equation}\label{def-representation}
{\mathcal R}: {\mathcal O} \to {\rm Aff}\left(\Gamma(\mu), (\pi \circ \nu^0)^*(\Lambda^mT^*M)\right),\quad
{\mathcal R}(\alpha^0) = \{\alpha^0, \cdot \}
\end{equation}
is an affine representation of the Lie algebra $({\mathcal O}, \{\cdot, \cdot\}_{\mathcal O})$ on the affine space $\Gamma(\mu)$.
\end{theorem}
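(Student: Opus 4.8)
The plan is to verify the three conditions defining an affine representation: (i) that for each $\alpha^0\in{\mathcal O}$ the map $\mathcal{R}(\alpha^0)=\{\alpha^0,\cdot\}$ really is an affine map $\Gamma(\mu)\to\Gamma((\pi\circ\nu^0)^*(\Lambda^mT^*M))$, with associated linear map the bilinear bracket $\{\alpha^0,\cdot\}_l$ of Section~\ref{suitable_LA_bracket}; (ii) that $\alpha^0\mapsto\mathcal{R}(\alpha^0)$ is $\mathbb{R}$-linear; and (iii) that $\mathcal{R}$ intertwines $\{\cdot,\cdot\}_{\mathcal O}$ with the Lie bracket of ${\rm Aff}(\Gamma(\mu),\Gamma((\pi\circ\nu^0)^*(\Lambda^mT^*M)))$. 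Since the target is already a Lie algebra and $\mathcal{R}$ is linear, once (i)--(iii) are in place there is nothing further to check: antisymmetry and the Jacobi identity on the target are automatic, and the Jacobi identity for $\{\cdot,\cdot\}_{\mathcal O}$ was established in Theorem~\ref{Lie-algebra-O}.

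For (i) I would unwind $\{\alpha^0,h\}=\langle d\alpha^0,\sharp^{\rm aff}(dh)\rangle$: the assignment $h\mapsto dh$ is affine with linear part $d^l$ (Definition~\ref{differential-h} and the discussion following it), $\sharp^{\rm aff}$ is an affine bundle isomorphism with linear part $\sharp^{\rm lin}$, and pairing a fixed section of $(J^1(\pi\circ\nu^0)/\operatorname{Ker} A)^+$ against a variable section of $J^1(\pi\circ\nu^0)/\operatorname{Ker} A$ is affine; composing, $h\mapsto\{\alpha^0,h\}$ is affine and its linear part sends ${\mathcal F}^0$ to $\langle\mu^0(d\alpha^0),\sharp^{\rm lin}(d^l{\mathcal F}^0)\rangle=\{\alpha^0,{\mathcal F}^0\}_l$. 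Condition (ii) is immediate, since both $d$ on ${\mathcal O}$ and the pairing in its first argument are linear.

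The heart of the matter is (iii). Evaluating on an arbitrary $h\in\Gamma(\mu)$ and using the formula $[\varphi,\psi]=\varphi^l\circ\psi-\psi^l\circ\varphi$ for the Lie bracket on ${\rm Aff}(A,V)$, it is the affine-linear Jacobi identity
\[
\{\{\alpha^0,\beta^0\}_{\mathcal O},h\}=\{\alpha^0,\{\beta^0,h\}\}_l-\{\beta^0,\{\alpha^0,h\}\}_l,\qquad\alpha^0,\beta^0\in{\mathcal O},\ h\in\Gamma(\mu).
\]
I would prove this intrinsically on ${\mathcal M}\pi$. By Remark~\ref{justification-notation} (Lemma~\ref{sharp-tilde-bracket}) one has $\{\gamma^0,h\}=-\langle\sharp(d\gamma^0),dh\rangle$, and unwinding the identification ${\mathcal I}$ of \eqref{I-mu-A-tilde} this equals minus the directional derivative of the $(\pi\circ\nu)^*(\Lambda^mT^*M)$-valued function ${\mathcal F}_h$ along the equivariant $(\pi\circ\nu)$-vertical vector field $\tilde{\mathcal H}_{\gamma^0}$ of \eqref{Def-H-tilde-alpha-0} (this derivative is meaningful precisely because $\tilde{\mathcal H}_{\gamma^0}$ projects to $0$ on $M$). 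From \eqref{bracket-exact-currents} and the fact that the anchor of the Lie algebroid $L$ sends $d\gamma^0$ to ${\mathcal H}_{\gamma^0}$, one gets $\tilde{\mathcal H}_{\{\alpha^0,\beta^0\}_{\mathcal O}}=-[\tilde{\mathcal H}_{\alpha^0},\tilde{\mathcal H}_{\beta^0}]$. Substituting, expanding the commutator, using that $\tilde{\mathcal H}_{\alpha^0}$ is $\mu$-projectable to ${\mathcal H}_{\alpha^0}$ and that $\tilde{\mathcal H}_{\beta^0}({\mathcal F}_h)=-\mu^*\{\beta^0,h\}$, and finally that ${\mathcal H}_{\alpha^0}$ acting on a $(\pi\circ\nu^0)^*(\Lambda^mT^*M)$-valued function equals $-\{\alpha^0,\cdot\}_l$ applied to it (compare \eqref{sharp-current-restricted} with \eqref{Local-exp-bracket-l}), one reaches the right-hand side of the displayed identity after cancelling the injective $\mu^*$.

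Alternatively, and in keeping with the local computations already carried out, one may simply insert the coordinate formulas \eqref{Local-bracket}, \eqref{Local-exp-bracket-l} and \eqref{Local-bracket-current} into both sides of the displayed identity and verify it directly; writing a current as $(Y^\alpha p_\alpha^i+\alpha^i)d^{m-1}x_i$, it splits into a momentum-linear part, which reduces to the ordinary Jacobi identity for the Lie bracket of the $\pi$-vertical vector fields on $E$, and a momentum-independent part handled by the Leibniz rule. Either way, the main obstacle is bookkeeping rather than conceptual: one must keep track that $\{\beta^0,h\}$ --- a priori a section of $(\pi\circ\nu^0)^*(\Lambda^mT^*M)$ over ${\mathcal M}^0\pi$ --- is fed into the \emph{linear} slot of $\{\alpha^0,\cdot\}_l$, and that the various identifications ($\sharp^{\rm aff}\leftrightarrow\sharp^{\rm lin}$, $L\simeq(J^1(\pi\circ\nu^0)/\operatorname{Ker} A)^+$, ${\mathcal I}$, and the passage between equivariant objects on ${\mathcal M}\pi$ and objects on ${\mathcal M}^0\pi$) are used consistently and with the correct signs.
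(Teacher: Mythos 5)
Your proposal is correct and follows essentially the same route as the paper: the paper also reduces the statement to the affine--linear Jacobi identity \eqref{affine_linear_Jacobi} and proves it via exactly the two ingredients you isolate, namely that $\mu^*\{\alpha^0,h\}=-(d^v\mathcal{F}_h)(\tilde{\mathcal H}_{\alpha^0})$ and $\{\alpha^0,\mathcal{F}^0\}_l=-(d^l\mathcal{F}^0)(\mathcal{H}_{\alpha^0})$ (its Lemma \ref{sharp-tilde-bracket}) together with the anti-morphism property $[\tilde{\mathcal H}_{\alpha^0},\tilde{\mathcal H}_{\beta^0}]=-\tilde{\mathcal H}_{\{\alpha^0,\beta^0\}_{\mathcal O}}$ (its Lemma \ref{anti-algebra-hom}), followed by expanding the commutator acting on $\mathcal{F}_h$ and using $\mu$-projectability. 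Your derivation of the anti-morphism from \eqref{bracket-exact-currents} and the anchor of $L$, rather than directly from the non-degeneracy of $\omega_{\mathcal{M}\pi}$ as in Lemma \ref{anti-algebra-hom}, is only a repackaging of the same computation.
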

More explicitly, we have
\begin{equation}\label{affine_linear_Jacobi} 
\{\{\alpha^0,\beta^0\}_{\mathcal O}, h \}= \{ \alpha^0, \{\beta^0,h\} \}_l -  \{ \beta^0, \{\alpha^0,h\} \}_l, \; \; \mbox{ for } \alpha^0, \beta^0 \in {\mathcal O} \mbox{ and } h \in \Gamma(\mu).
\end{equation}

In order to prove this theorem, we will use the following results.
\begin{lemma}\label{anti-algebra-hom}
The linear map
\[
\tilde{\mathcal H}: {\mathcal O} \to \Gamma(V(\pi \circ \nu)), \; \; \alpha^0 \to \tilde{\mathcal H}(\alpha^0) : = \tilde{\mathcal H}_{\alpha^0}
\]
is a Lie algebra anti-morphism between the Lie algebras $({\mathcal O}, \{\cdot, \cdot\}_{\mathcal O})$ and $(\Gamma(V(\pi \circ \nu)), [\cdot, \cdot])$, where $[\cdot, \cdot ]$ is the standard Lie bracket of vector fields. 
\end{lemma}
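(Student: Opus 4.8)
The plan is to reduce the statement to a routine identity in $\mathfrak{X}(E)$ by using the explicit local formula \eqref{sharp-tilde} for $\tilde{\mathcal H}_{\alpha^0}$ together with the description of currents in Theorem \ref{descrip-current}. First I would recall that by Theorem \ref{descrip-current} (and the Assumption $\dim M\geq 2$) every current is of the form $\alpha^0=\hat Y+(\nu^0)^*(\alpha)$ for a unique $\pi$-vertical vector field $Y$ on $E$ and a unique $\pi$-semibasic $(m-1)$-form $\alpha$ on $E$; write $\alpha^0\leftrightarrow(Y,\alpha)$ and $\beta^0\leftrightarrow(Z,\beta)$. From \eqref{sharp-tilde} one sees that, in adapted coordinates with $\alpha^0=(Y^\alpha p_\alpha^i+\alpha^i)\,d^{m-1}x_i$,
\[
\tilde{\mathcal H}_{\alpha^0}=Y^\alpha\frac{\partial}{\partial u^\alpha}-\Big(\frac{\partial\alpha^i}{\partial x^i}+\frac{\partial Y^\alpha}{\partial x^i}p_\alpha^i\Big)\frac{\partial}{\partial p}-\Big(\frac{\partial\alpha^i}{\partial u^\beta}+\frac{\partial Y^\alpha}{\partial u^\beta}p_\alpha^i\Big)\frac{\partial}{\partial p_\beta^i},
\]
so the "configuration part" of $\tilde{\mathcal H}_{\alpha^0}$ (the $\partial/\partial u^\alpha$-component) is exactly the tautological lift of $Y$, while the momentum components are linear affine in the $p$'s with coefficients built from the first derivatives of $Y$ and $\alpha$.

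The key step is then the bracket computation $[\tilde{\mathcal H}_{\alpha^0},\tilde{\mathcal H}_{\beta^0}]$. I would argue that this Lie bracket is again of the form $\tilde{\mathcal H}_{\gamma^0}$ for a current $\gamma^0$, and identify $\gamma^0$. There are two clean ways to do this. The first is purely intrinsic: by \eqref{Def-H-tilde-alpha-0} the field $\tilde{\mathcal H}_{\alpha^0}$ is the (unique, since $\omega_{\mathcal M\pi}$ is non-degenerate) solution of $i_{\tilde{\mathcal H}_{\alpha^0}}\omega_{\mathcal M\pi}=\mu^*(d\alpha^0)$, i.e. $\mu^*\alpha^0$ is a Hamiltonian $(m-1)$-form with Hamiltonian vector field $\tilde{\mathcal H}_{\alpha^0}$. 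Using $\mathcal L_{\tilde{\mathcal H}_{\alpha^0}}\omega_{\mathcal M\pi}=0$ (as $d\mu^*d\alpha^0=0$ and $\omega_{\mathcal M\pi}$ is closed) together with the standard Cartan-calculus identity for Hamiltonian vector fields on a multisymplectic manifold, one gets
\[
i_{[\tilde{\mathcal H}_{\alpha^0},\tilde{\mathcal H}_{\beta^0}]}\omega_{\mathcal M\pi}=\mathcal L_{\tilde{\mathcal H}_{\alpha^0}}i_{\tilde{\mathcal H}_{\beta^0}}\omega_{\mathcal M\pi}=\mathcal L_{\tilde{\mathcal H}_{\alpha^0}}\mu^*(d\beta^0)=d\,i_{\tilde{\mathcal H}_{\alpha^0}}\mu^*(d\beta^0)=-\,d\,\mu^*\big(i_{\mathcal H_{\alpha^0}}d\beta^0\big)=\mu^*\big(d\{\alpha^0,\beta^0\}_{\mathcal O}\big),
\]
where I used that $\tilde{\mathcal H}_{\alpha^0}$ is $\mu$-projectable to $\mathcal H_{\alpha^0}$, that $\mu^*$ commutes with $d$, and the definition \eqref{def-bracket-0}. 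By uniqueness this says precisely $[\tilde{\mathcal H}_{\alpha^0},\tilde{\mathcal H}_{\beta^0}]=\tilde{\mathcal H}_{\{\alpha^0,\beta^0\}_{\mathcal O}}$, i.e. $\tilde{\mathcal H}([\alpha^0,\beta^0]_{\mathcal O})=[\tilde{\mathcal H}(\alpha^0),\tilde{\mathcal H}(\beta^0)]$ — wait, one must be careful about signs. The bracket \eqref{def-bracket-0} carries a minus sign, and the proof of Theorem \ref{Lie-algebra-O} shows $\{\cdot,\cdot\}_{\mathcal O}$ corresponds to minus the Lie bracket of vector fields on $E$ (the formula there reads $\{(Y,\alpha),(Z,\beta)\}_{\mathcal O}=-([Y,Z],\mathcal L_Y\beta-\mathcal L_Z\alpha)$); tracking this sign through the computation above is exactly what produces the anti-morphism statement rather than a morphism statement, so $\tilde{\mathcal H}(\{\alpha^0,\beta^0\}_{\mathcal O})=-[\tilde{\mathcal H}_{\alpha^0},\tilde{\mathcal H}_{\beta^0}]$.

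Alternatively — and this is the route I would actually present because it is the most transparent and leaves no room for sign ambiguity — I would invoke Theorem \ref{Lie-algebra-O}: under the $C^\infty(E)$-module isomorphism ${\mathcal O}\cong\Gamma(V\pi)\times\Gamma(\Lambda^{m-1}_1T^*E)$, the bracket is $\{(Y,\alpha),(Z,\beta)\}_{\mathcal O}=-([Y,Z],\,i_Yd\beta-i_Zd\alpha)$, and from \eqref{sharp-tilde} one computes $[\tilde{\mathcal H}_{(Y,\alpha)},\tilde{\mathcal H}_{(Z,\beta)}]$ directly in coordinates: the $\partial/\partial u^\alpha$-components give $[Y,Z]^\alpha$, and the $\partial/\partial p$ and $\partial/\partial p_\beta^i$ components reorganize (using that everything is linear affine in $p$ with coefficients depending only on $(x,u)$) into exactly the field $\tilde{\mathcal H}_{(Z',\beta')}$ with $(Z',\beta')=([Y,Z],\,\mathcal L_Y\beta-\mathcal L_Z\alpha)$, i.e. the opposite of the bracket $\{\cdot,\cdot\}_{\mathcal O}$. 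Hence $\tilde{\mathcal H}_{\{\alpha^0,\beta^0\}_{\mathcal O}}=-[\tilde{\mathcal H}_{\alpha^0},\tilde{\mathcal H}_{\beta^0}]$, which is the anti-morphism property. The linearity of $\alpha^0\mapsto\tilde{\mathcal H}_{\alpha^0}$ is immediate from \eqref{Def-H-tilde-alpha-0} and non-degeneracy of $\omega_{\mathcal M\pi}$. The main obstacle — such as it is — is purely bookkeeping: keeping the two sign conventions (the minus in \eqref{def-bracket-0} and the minus in the identification with $(\Gamma(V\pi)\times\Gamma(\Lambda^{m-1}_1T^*E))$) consistent so that the result genuinely comes out as an \emph{anti}-morphism; the underlying reason it works is the elementary fact that $Y\mapsto\hat Y$-type tautological lifts and Lie derivative along them intertwine the Lie bracket on $E$ with the Lie bracket on the lifted fields, so no deep multisymplectic input beyond non-degeneracy of $\omega_{\mathcal M\pi}$ is really needed.
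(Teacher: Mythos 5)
Your first, ``intrinsic'' route is precisely the paper's own proof: apply $i_{[X,X']}=\mathcal{L}_X\circ i_{X'}-i_{X'}\circ\mathcal{L}_X$ to $\omega_{{\mathcal M}\pi}$, kill the second term using \eqref{Def-H-tilde-alpha-0} and the closedness of $\omega_{{\mathcal M}\pi}$, push down via $\mu$-projectability, invoke \eqref{def-bracket-0}, and conclude by non-degeneracy. However, your displayed chain carries a spurious minus sign at the projection step: since $\tilde{\mathcal H}_{\alpha^0}$ is $\mu$-projectable onto ${\mathcal H}_{\alpha^0}$, one has $i_{\tilde{\mathcal H}_{\alpha^0}}\mu^*(d\beta^0)=\mu^*\bigl(i_{{\mathcal H}_{\alpha^0}}d\beta^0\bigr)$ with no sign; the minus enters only afterwards, through the definition $\{\alpha^0,\beta^0\}_{\mathcal O}=-i_{{\mathcal H}_{\alpha^0}}d\beta^0$ in \eqref{def-bracket-0}. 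As written, your display terminates in $+\mu^*\bigl(d\{\alpha^0,\beta^0\}_{\mathcal O}\bigr)$ and would therefore prove a morphism rather than an anti-morphism; correcting that one step gives $i_{[\tilde{\mathcal H}_{\alpha^0},\tilde{\mathcal H}_{\beta^0}]}\omega_{{\mathcal M}\pi}=-\mu^*\bigl(d\{\alpha^0,\beta^0\}_{\mathcal O}\bigr)$, hence $[\tilde{\mathcal H}_{\alpha^0},\tilde{\mathcal H}_{\beta^0}]=-\tilde{\mathcal H}_{\{\alpha^0,\beta^0\}_{\mathcal O}}$ cleanly, and there is then no residual ``sign ambiguity'' left to wave at --- the anti-morphism is forced by \eqref{def-bracket-0} alone, not by the identification in Theorem \ref{Lie-algebra-O}. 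Your second, coordinate route (matching $[\tilde{\mathcal H}_{\alpha^0},\tilde{\mathcal H}_{\beta^0}]$ computed from \eqref{sharp-tilde} against $\tilde{\mathcal H}$ applied to $([Y,Z],\mathcal{L}_Y\beta-\mathcal{L}_Z\alpha)$) is a legitimate alternative, but as presented it is only a sketch: the reorganization of the $\partial/\partial p$ and $\partial/\partial p^{i}_{\beta}$ components is asserted rather than carried out, and that is exactly where all the bookkeeping lives. Either route is acceptable once the sign in the intrinsic computation is repaired.
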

\begin{lemma}\label{sharp-tilde-bracket}
Let $\alpha^0$ be an element of ${\mathcal O}$.
\begin{enumerate}
\item
If $h \in \Gamma(\mu)$ is a Hamiltonian section then
\begin{equation}\label{exp-bracket-l-a}
\mu^*\{\alpha^0, h\} = -(d^v{\mathcal F}_h)(\tilde{\mathcal H}_{\alpha^0}), 
\end{equation}
where ${\mathcal F}_h \in C^{\infty}\big({\mathcal M}\pi, (\pi \circ \nu)^*(\Lambda^mT^*M)\big)$ is the extended Hamiltonian density associated with $h$ and $d^v{\mathcal F}_h$ is the vertical differential of ${\mathcal F}_h$. So, we have
\begin{equation}\label{exp-bracket-l-a-1}
\{\alpha^0, h\} = -\langle\sharp(d\alpha^0), dh \rangle.
\end{equation}
\item
If ${\mathcal F}^0 \in \Gamma((\pi \circ \nu^0)^*(\Lambda^mT^*M))$ then
\begin{equation}\label{exp-bracket-l-l}
\{\alpha^0, {\mathcal F}^0\}_l = -(d^l{\mathcal F}^0)({\mathcal H}_{\alpha^0}),
\end{equation}
where $d^l{\mathcal F}^0$ is the linear part of the differential of ${\mathcal F}^0$.
\end{enumerate}
\end{lemma}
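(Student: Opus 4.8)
The plan is to prove \eqref{exp-bracket-l-a} first, since \eqref{exp-bracket-l-a-1} is then immediate, and to obtain the companion formula \eqref{exp-bracket-l-l} by running the same argument fibrewise. I would establish \eqref{exp-bracket-l-a} by unwinding the definition $\{\alpha^0,h\} = \langle d\alpha^0,\sharp^{\rm aff}(dh)\rangle$, together with $\sharp^{\rm aff} = \hat A^{-1}$ and Proposition~\ref{Definition-L}. Fix $\gamma\in{\mathcal M}\pi$, put $\gamma^0 = \mu(\gamma)$, and pick $Z^0\in J^1_{\gamma^0}(\pi\circ\nu^0)$ with $A(Z^0) = dh(\gamma^0)$, so that $[Z^0] = \Gamma_h(\gamma^0)$. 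Since $d\alpha^0\in\Gamma(L)$ annihilates $\operatorname{Ker}A$ (by the description of $A^+({\mathbb P}(\pi)^+)$ in Section~\ref{Currents-Ham-deDonder-Weyl}), the value $\{\alpha^0,h\}(\gamma^0) = \langle d\alpha^0(\gamma^0),[Z^0]\rangle = \Lambda^m(Z^0)^*(d\alpha^0(\gamma^0))$, computed via \eqref{duality}, does not depend on the chosen $Z^0$. On the other hand, Proposition~\ref{Definition-L} reads $\flat = -A^+\circ{\mathcal I}$, whence ${\mathcal I}(\sharp(d\alpha^0)) = -(A^+)^{-1}(d\alpha^0)$; using the defining relation $\langle A^+(U^0),Z^0\rangle = \langle U^0,A(Z^0)\rangle$ of the transpose one gets $\langle{\mathcal I}(\sharp(d\alpha^0))(\gamma^0),dh(\gamma^0)\rangle = -\langle d\alpha^0(\gamma^0),Z^0\rangle = -\{\alpha^0,h\}(\gamma^0)$, which is \eqref{exp-bracket-l-a-1} under the convention that $\langle\sharp(d\alpha^0),dh\rangle$ denotes $\langle{\mathcal I}(\sharp(d\alpha^0)),dh\rangle$; the minus sign comes exactly from the one in Proposition~\ref{Definition-L} and reproduces the classical $\{F,G\} = -\langle\sharp dF,dG\rangle$ of \eqref{Canonical_Poisson}. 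Finally, pulling back to ${\mathcal M}\pi$ and using Definition~\ref{differential-h} (so $dh\circ\mu = \tilde\mu\circ d^v{\mathcal F}_h$), the characterization \eqref{I-mu-A-tilde} of ${\mathcal I}$ and \eqref{Def-I-tilde} of $\tilde{\mathcal I}$, applied with $\tilde U = \tilde{\mathcal H}_{\alpha^0}(\gamma)$ (a representative of $\sharp(d\alpha^0)(\gamma^0)$) and $\tilde{\mathcal A} = d^v{\mathcal F}_h(\gamma)$, give
\[
\langle{\mathcal I}(\sharp(d\alpha^0)),dh\rangle\circ\mu = \langle\tilde{\mathcal I}(\tilde{\mathcal H}_{\alpha^0}),d^v{\mathcal F}_h\rangle = (d^v{\mathcal F}_h)(\tilde{\mathcal H}_{\alpha^0}),
\]
so that $\mu^*\{\alpha^0,h\} = -(d^v{\mathcal F}_h)(\tilde{\mathcal H}_{\alpha^0})$, i.e.\ \eqref{exp-bracket-l-a}.

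A quicker, purely computational route to \eqref{exp-bracket-l-a} is to contract the local expression \eqref{d-v-F} of $d^v{\mathcal F}_h$ against the local expression \eqref{sharp-tilde} of $\tilde{\mathcal H}_{\alpha^0}$ and compare with \eqref{Local-bracket}: the two sides agree up to the overall sign. For \eqref{exp-bracket-l-l} I would argue the same way downstairs: replace $dh$ by $d^l{\mathcal F}^0$ and $\tilde{\mathcal H}_{\alpha^0}$ by its $\mu$-projection ${\mathcal H}_{\alpha^0}$, with \eqref{d-l-F-0} playing the role of $d^v{\mathcal F}_h$; equivalently, one linearizes \eqref{exp-bracket-l-a-1} in the second slot, using that $h\mapsto dh$ is affine with linear part $d^l$ (being careful with the sign built into the affine structure of $\Gamma(\mu)$ relative to Hamiltonian functions). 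The shortest check is again in coordinates: contract \eqref{d-l-F-0} with \eqref{sharp-current-restricted} and compare with \eqref{Local-exp-bracket-l}.

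The main obstacle is not analytic but a matter of bookkeeping. The delicate points are: (i) propagating the minus sign of Proposition~\ref{Definition-L} consistently through the chain of identifications ${\mathbb P}(\pi)^+\simeq V(\pi\circ\nu)/(\pi\circ\nu)^*(\Lambda^m T^*M)$ (via ${\mathcal I}$) and $(J^1(\pi\circ\nu^0)/\operatorname{Ker}A)^+\simeq L$ (via $A^+$), and keeping straight the conventions in the affine structures of $\Gamma(\mu)$ and $\Gamma({\mathbb P}(\pi))$; (ii) checking that the two representative-choices used — a lift $Z^0$ of $\Gamma_h(\gamma^0)$ to $J^1(\pi\circ\nu^0)$ and a point $\gamma\in{\mathcal M}\pi$ over $\gamma^0$ — drop out of the answer, which is precisely where the hypotheses ``$d\alpha^0$ annihilates $\operatorname{Ker}A$'' and ``$\tilde{\mathcal H}_{\alpha^0}$ is equivariant and $\mu$-projectable'' are used; and (iii) recalling that under the standing assumption $\dim M\geq 2$ one may invoke Theorem~\ref{descrip-current} for the shape of currents, while the case $\dim M=1$ is covered directly by \eqref{Aff-bracket-m-1}--\eqref{Linear-bracket-m-1}.
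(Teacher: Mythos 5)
Your proposal is correct, and it actually contains two proofs: your ``quicker, purely computational route'' is precisely the paper's own proof (for (i), contract the local expression \eqref{d-v-F} of $d^v{\mathcal F}_h$ with \eqref{sharp-tilde} and compare with \eqref{Local-bracket} to get \eqref{exp-bracket-l-a}, then deduce \eqref{exp-bracket-l-a-1} via Definition \ref{differential-h} and the $\mu$-projectability of $\tilde{\mathcal H}_{\alpha^0}$; for (ii), contract $d^l{\mathcal F}^0$ with \eqref{sharp-current-restricted} and compare with \eqref{Local-exp-bracket-l}), whereas your primary, intrinsic route is genuinely different and runs in the opposite order. I checked its bookkeeping: from Proposition \ref{Definition-L} one indeed gets ${\mathcal I}(\sharp(d\alpha^0))=-(A^+)^{-1}(d\alpha^0)$, and with the transpose identity $\langle A^+(U^0),Z^0\rangle=\langle U^0,A(Z^0)\rangle$ applied to any lift $Z^0$ of $\Gamma_h(\gamma^0)$ this yields $\langle{\mathcal I}(\sharp(d\alpha^0)),dh\rangle=-\langle d\alpha^0,Z^0\rangle=-\{\alpha^0,h\}$, i.e.\ \eqref{exp-bracket-l-a-1}; the two well-definedness checks you single out (independence of $Z^0$ because $(d\alpha^0)^l$ kills $\operatorname{Ker}A$, independence of $\gamma$ over $\gamma^0$ because $\tilde{\mathcal H}_{\alpha^0}$ is equivariant) are exactly the points that need to be used, and \eqref{I-mu-A-tilde} together with \eqref{Def-I-tilde} then converts \eqref{exp-bracket-l-a-1} into \eqref{exp-bracket-l-a}. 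What your route buys is a coordinate-free argument that explains the minus sign conceptually (it is literally the sign of Proposition \ref{Definition-L}, mirroring $\{F,G\}=-\langle\sharp dF,dG\rangle$ in \eqref{Canonical_Poisson}); what the paper's computation buys is brevity and a simultaneous verification of the local formulas reused elsewhere. The one place where I would lean on the coordinate check, as you yourself suggest, is item (ii): because $h=(x^i,u^\alpha,-H,p^i_\alpha)$, the linear part of $h\mapsto dh$ in the conventions of \eqref{d-v-H} and \eqref{d-l-F-0} hides a sign, so ``linearizing \eqref{exp-bracket-l-a-1} in the second slot'' is safest done directly against the definitions \eqref{Local-exp-bracket-l} and \eqref{sharp-current-restricted} — which is exactly what the paper does.
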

\begin{remark}
Note that $\{\alpha^0, h\} \in \Gamma((\pi \circ \nu^0)^*(\Lambda^mT^*M))$ and, thus, $\mu^*\{\alpha^0, h\} \in \Gamma((\pi \circ \nu)^*(\Lambda^mT^*M))$. In fact,
\[
\mu^*\{\alpha^0, h\}(\gamma) = \{\alpha^0, h\}(\mu(\gamma)), \; \; \; \mbox{ for } \gamma \in {\mathcal M}\pi.
\]
\end{remark}
\begin{proof} ({\em of Theorem \ref{affine-representation}})
It is clear that ${\mathcal R}$ is a linear map.
So, we must prove that ${\mathcal R}$ is a Lie algebra morphism. For this purpose, we will use Lemmas \ref{anti-algebra-hom} and \ref{sharp-tilde-bracket}.

If $\alpha^0, \beta^0$ are currents and $h \in \Gamma(\mu)$ is a Hamiltonian section then, using (\ref{def-representation}) and (\ref{exp-bracket-l-a}), we obtain
\[
\mu^*\left({\mathcal R}(\{\alpha^0, \beta^0\}_{\mathcal O}) h \right) = -(d^v{\mathcal F}_h)(\tilde{\mathcal H}_{\{\alpha^0, \beta^0\}_{\mathcal O}}).
\]
Therefore, from Lemma \ref{anti-algebra-hom}, it follows that
\[
\mu^*\left(  {\mathcal R}(\{\alpha^0, \beta^0\}_{\mathcal O})h \right) = \tilde{\mathcal H}_{\alpha^0}\left((d^v{\mathcal F}_h)(\tilde{\mathcal H}_{\beta^0})\right) - \tilde{\mathcal H}_{\beta^0}\left((d^v{\mathcal F}_h)(\tilde{\mathcal H}_{\alpha^0})\right).
\]
Now, using (\ref{exp-bracket-l-a}) and the fact the vector fields $\tilde{\mathcal H}_{\alpha^0}$ and $\tilde{\mathcal H}_{\beta^0}$ are $\mu$-projectable on the vector fields ${\mathcal H}_{\alpha^0}$ and ${\mathcal H}_{\beta^0}$, respectively, we deduce
\begin{align*}
\mu^*\left(  {\mathcal R}(\{\alpha^0, \beta^0\}_{\mathcal O})h \right) & =  \mu^*\left(-{\mathcal H}_{\alpha^0}(\{\beta^0, h\}) + {\mathcal H}_{\beta^0}(\{\alpha^0, h\})\right) \\
& = \mu^*\left(-d^l\{\beta^0, h\}({\mathcal H}_{\alpha^0}) + d^l\{\alpha^0, h\}({\mathcal H}_{\beta^0}) \right).
\end{align*}
So, from (\ref{exp-bracket-l-l}), we obtain that
\[
\mu^*\left(  {\mathcal R}(\{\alpha^0, \beta^0\}_{\mathcal O})h \right) = \mu^*\left(\{\alpha^0, \{\beta^0, h\}\}_l -  \{\beta^0, \{\alpha^0, h\}\}_l    \right).
\]
Finally, using (\ref{def-representation}) and the fact that
\[
{\mathcal R}(\alpha^0)^{l} = \{\alpha^0, \cdot \}_l, \; \; \; {\mathcal R}(\beta^0)^{l} = \{\beta^0, \cdot \}_l,
\]
we conclude that
\begin{align*} 
{\mathcal R}(\{\alpha^0, \beta^0\}_{\mathcal O})h   & =  \left({\mathcal R}(\alpha^0)^l({\mathcal R}(\beta^0)h) -  {\mathcal R}(\beta^0)^l({\mathcal R}(\alpha^0)h) \right) \\
& =  \big({\mathcal R}(\alpha^0)^l \circ {\mathcal R}(\beta^0) - {\mathcal R}(\beta^0)^l \circ {\mathcal R}(\alpha^0)\big)h,
\end{align*}
which proves the result.
\end{proof}
\begin{proof}({\em of Lemma \ref{anti-algebra-hom}})
It is clear that the map $\tilde{\mathcal H}: {\mathcal O} \to \Gamma(V(\pi \circ \nu))$ is linear. So, we will prove that
\[
[\tilde{\mathcal H}_{\alpha^0}, \tilde{\mathcal H}_{\beta^0}] = -\tilde{\mathcal H}_{\{\alpha^0, \beta^0\}_{\mathcal O}},
\]
for $\beta^0 \in {\mathcal O}$.

Now, we have that
\[
i_{[\tilde{\mathcal H}_{\alpha^0}, \tilde{\mathcal H}_{\beta^0}]}\omega_{{\mathcal M}\pi} = \left({\mathcal L}_{\tilde{\mathcal H}_{\alpha^0}} \circ i_{\tilde{\mathcal H}_{\beta^0}} - i_{\tilde{\mathcal H}_{\beta^0}} \circ {\mathcal L}_{\tilde{\mathcal H}_{\alpha^0}}\right)\omega_{{\mathcal M}\pi}.
\]
Thus, using (\ref{Def-H-tilde-alpha-0}) and the fact that $\omega_{{\mathcal M}\pi}$ is closed, we deduce that
\[
i_{[\tilde{\mathcal H}_{\alpha^0}, \tilde{\mathcal H}_{\beta^0}]}\omega_{{\mathcal M}\pi}  = {\mathcal L}_{\tilde{\mathcal H}_{\alpha^0}}(\mu^*(d\beta^0)) - i_{\tilde{\mathcal H}_{\beta^0}}d(\mu^*(d\alpha^0))  
 =  {\mathcal L}_{\tilde{\mathcal H}_{\alpha^0}}(\mu^*(d\beta^0)) = d(i_{\tilde{\mathcal H}_{\alpha^0}}(\mu^*(d\beta^0))).
\]
But, since the vector field $\tilde{\mathcal H}_{\alpha^0}$ on ${\mathcal M}\pi$ is $\mu$-projectable over the vector field ${\mathcal H}_{\alpha^0}$ on ${\mathcal M}^0\pi$, it follows that
\[
i_{[\tilde{\mathcal H}_{\alpha^0}, \tilde{\mathcal H}_{\beta^0}]}\omega_{{\mathcal M}\pi}  = d(\mu^*(i_{{\mathcal H}_{\alpha^0}}d\beta^0)).
\]
Therefore, using (\ref{def-bracket-0}), we obtain that
\[
i_{[\tilde{\mathcal H}_{\alpha^0}, \tilde{\mathcal H}_{\beta^0}]}\omega_{{\mathcal M}\pi}  = -\mu^*(d\{\alpha^0, \beta^0\}_{\mathcal O}).
\]
So, from (\ref{Def-H-tilde-alpha-0}), it follows that
\[
i_{[\tilde{\mathcal H}_{\alpha^0}, \tilde{\mathcal H}_{\beta^0}]}\omega_{{\mathcal M}\pi}  = -i_{\tilde{\mathcal H}_{\{\alpha^0, \beta^0\}_{\mathcal O}}}\omega_{{\mathcal M}\pi}
\]
and, since $\omega_{{\mathcal M}\pi}$ is non-degenerate, this implies the result.
\end{proof}
\begin{proof} ({\em of Lemma \ref{sharp-tilde-bracket}})
Suppose that
\[
\alpha^0(x^{i}, u^{\alpha}, p_\alpha^{i}) = \left(Y^\alpha(x, u)p_\alpha^{i} + \alpha^{i}(x, u)\right)d^{m-1}x_i.
\]
(i) If
\[
h(x^{i}, u^{\alpha}, p_\alpha^{i}) = (x^{i}, u^{\alpha}, -H(x^{j}, u^\beta, p_\beta^j), p_\alpha^{i})
\]
then, from (\ref{d-v-F}) and (\ref{sharp-tilde}), we have
\[
(d^v{\mathcal F}_h)(\tilde{\mathcal H}_{\alpha^0}) = \left( Y^\alpha \frac{\partial H}{\partial u^\alpha} - \left( \frac{\partial Y^\beta}{\partial u^\alpha} p_\beta^{i} + \frac{\partial \alpha^{i}}{\partial u^\alpha}\right) \frac{\partial H}{\partial p_\alpha^{i}} - \left(\frac{\partial \alpha^{i}}{\partial x^{i}} + \frac{\partial Y^\alpha}{\partial x^{i}}p_\alpha^{i}\right)\right)d^mx.
\]
This, by (\ref{Local-bracket}), implies
\[
\mu^*\{\alpha^0, h\} = -(d^v{\mathcal F}_h)(\tilde{\mathcal H}_{\alpha^0}).
\]
Finally, using (\ref{exp-bracket-l-a}), Definition \ref{differential-h}, and the fact that $\tilde{\mathcal H}_{\alpha^0}$ is $\mu$-projectable on the vector 
field ${\mathcal H}_{\alpha^0}$, we deduce that (\ref{exp-bracket-l-a-1}) also holds.

(ii) Suppose that
\[
{\mathcal F}^0(x^{i}, u^\alpha, p_\alpha^{i}) = F^0(x^{i}, u^\alpha, p_\alpha^{i})d^mx.
\]
Then, we have
\[
d^l{\mathcal F}^0 = \left(\frac{\partial F^0}{\partial u^\alpha} du^\alpha + \frac{\partial F^0}{\partial p_\alpha^{i}} dp_\alpha^{i}\right)\otimes d^mx.
\]
Thus, using (\ref{sharp-current-restricted}), it follows
\[
(d^l{\mathcal F}^0)({\mathcal H}_{\alpha^0}) =  \left(Y^\alpha \frac{\partial F^0}{\partial u^\alpha} - \left( \frac{\partial Y^\beta}{\partial u^\alpha} p_\beta^{i} + \frac{\partial \alpha^{i}}{\partial u^\alpha}\right) \frac{\partial F^0}{\partial p_\alpha^{i}}\right)d^mx.
\]
Therefore, from (\ref{Local-exp-bracket-l}), we deduce the result.
\end{proof}

\section{Examples}\label{examples}
\subsection{Time-dependent Hamiltonian Mechanics}\label{time-dependent-Ham-mech}
In this section, we will use the following terminology. Let $\tau: V \to P$ be a vector bundle. Then, we can consider the vector bundle
\[
id_{\mathbb{R}} \times \tau: \mathbb{R} \times V \to \mathbb{R} \times P.
\]
The sections of this vector bundle are just the time-dependent sections of $\tau: V \to P$. For this reason, the vector bundle $ id_{\mathbb{R}} \times \tau: \mathbb{R} \times V \to \mathbb{R} \times P$ will be called the time-dependent vector bundle associated with $\tau: V \to P$.

For time-dependent Mechanics, the base space of the configuration bundle is the real line $\mathbb{R}$, that is, we have a fibration $\pi:E \to \mathbb{R}$. This fibration is trivializable but not canonically trivializable. In fact, if one chose\textcolor{blue}{s}  a reference frame one may trivialize the fibration. This means that $E$ may be identified with a global product $\mathbb{R} \times Q$ and, under this identification, $\pi: E \simeq \mathbb{R} \times Q \to \mathbb{R}$ is the canonical projection on the first factor. For simplicity, in what follows, we will assume that this is our starting point although all the constructions in this section may be extended, in a natural way, if we don't chose a reference frame (for an affine formulation of frame-independent Mechanics, we remit to \cite{GrGrUr1,GrGrUr2,GrGrUr3,GrGrUr4,IgMaPaSo,MaMeSa}.)

So, if the configuration bundle is trivial, the $1$-jet bundle $J^1\pi = J^1(pr_1) \to E = \mathbb{R} \times Q$ may be identified with the affine subbundle of $T(\mathbb{R} \times Q) = \mathbb{R} \times \mathbb{R} \times TQ$
\[
J^1(pr_1) = \big\{ \frac{\partial}{\partial t}_{|t} + v_q \in T_t\mathbb{R} \times T_qQ \mid (t, q) \in \mathbb{R} \times Q \mbox{ and } v_q \in T_qQ\big \}.
\] 
Thus, $J^1(pr_1)$ is isomorphic to the vector bundle $\mathbb{R} \times  TQ \to \mathbb{R} \times Q$.

An Ehresmann connection $^{H}: (\mathbb{R} \times Q) \times_\mathbb{R} T\mathbb{R} \to H \subseteq T(\mathbb{R} \times Q)$ on the fibration 
$pr_1:  \mathbb{R} \times Q \to \mathbb{R}$ is completely determined by a vector field $\Gamma$ on $\mathbb{R} \times Q$ satisfying
\[
\left\langle  dt, \Gamma \right\rangle  = 1.
\]
In fact,  the horizontal subbundle associated with the connection is of rank $1$ and generated by the vector field $\Gamma$.

The extended (resp. restricted) multimomentum bundle may be identified with the cotangent bundle $T^*(\mathbb{R} \times Q) = \mathbb{R} \times \mathbb{R} \times T^*Q$ (resp. the vector bundle $\mathbb{R} \times T^*Q$) and, under this identification, the multisymplectic structure on $\mathbb{R} \times \mathbb{R} \times T^*Q$ is just the canonical symplectic structure $\omega_{(\mathbb{R} \times Q)}$ on $\mathbb{R}\times \mathbb{R} \times T^*Q$.

We have a principal $\mathbb{R}$-action on $T^*(\mathbb{R} \times Q) = \mathbb{R} \times \mathbb{R} \times T^*Q$ given by
\[
p' \cdot (t, p, \alpha_q) = (t, p+p', \alpha_q)
\]
for $(t, p, \alpha_q) \in \mathbb{R} \times T^*_qQ$ and $p' \in \mathbb{R}$. The principal bundle projection is just the canonical projection
\[
\mu: T^*(\mathbb{R} \times Q) = \mathbb{R} \times \mathbb{R} \times T^*Q \to V^*(pr_1) = \mathbb{R} \times T^*Q, \; \; (t, p, \alpha_q) \to (t, \alpha_q).
\]
Note that $\mathbb{R} \times T^*Q$ admits a Poisson structure of corank $1$ which is induced by the canonical symplectic structure on $T^*Q$. In fact, the $\mathbb{R}$-action on the extended multimomentum bundle preserves the symplectic form and the canonical projection $\mu$ is a Poisson map.

In this case, a Hamiltonian section $h: V^*(pr_1) = \mathbb{R} \times T^*Q \to T^*(\mathbb{R} \times Q) = \mathbb{R} \times \mathbb{R} \times T^*Q$ may be identified with a global time-dependent Hamiltonian function $H: \mathbb{R} \times T^*Q \to \mathbb{R}$. In addition, the couple
\[
(\omega_h = h^*\omega_{(\mathbb{R} \times Q)}, \eta = (\pi \circ \nu^0)^*(dt))
\]
is a cosymplectic structure on $V^*(pr_1) = \mathbb{R} \times T^*Q$ and, thus, we can consider the Reeb vector field $\Gamma_H$ of $(\omega_H = \omega_h, \eta)$.
\begin{remark}
We recall that a cosymplectic structure on a manifold $P$ of odd dimension $2p + 1$ is a couple $(\omega, \theta)$, where $\omega$ is a closed $2$-form, $\theta$ is a closed $1$-form and $\theta \wedge \omega^p$ is a volume form on $P$. The Reeb vector field $\Gamma$ associated with the structure $(\omega, \theta)$ is the vector field on $P$ which is completely characterized by the conditions
\[
i_{\Gamma}\omega = 0, \quad i_{\Gamma}\eta = 1.
\]
\end{remark}
Note that the $2$-form $\omega_H$ on $\mathbb{R} \times T^*Q$ is given by
\[
\omega_H = \omega_Q + dH \wedge dt,
\]
where $\omega_Q$ is the canonical symplectic structure on $T^*Q$. So, the Reeb vector field $\Gamma_H$ of the cosymplectic structure $(\omega_H, \eta)$ on $\mathbb{R} \times T^*Q$ is
\[
(t, \alpha) \to \Gamma_H(t, \alpha) = \frac{\partial}{\partial t}_{|t} + X_{H(t, \cdot)}(\alpha),
\]
with $X_{H(t, \cdot)}$ the Hamiltonian vector field on $T^*Q$ associated with the function
\[
H(t, \cdot): T^*Q \to \mathbb{R}, \; \; \beta \to H(t, \beta).
\]
Thus, if $(t, q^{i}, p_{i})$ are canonical coordinates on $\mathbb{R} \times T^*Q$ then
\[
\Gamma_H = \frac{\partial}{\partial t} + \sum_{i}\left(\frac{\partial H}{\partial p_i} \frac{\partial}{\partial q^{i}} - \frac{\partial H}{\partial q^{i}} \frac{\partial}{\partial p_{i}}\right)
\]
and the integral curves $t \to (t, q^{i}(t), p_{i}(t))$ of $\Gamma_H$ are just of the solution of the Hamilton equations for $H$, that is,
\[
\displaystyle \frac{dq^{i}}{dt} = \frac{\partial H}{\partial p_i}, \; \; \; \frac{dp_{i}}{dt} = -\frac{\partial H}{\partial q^{i}}.
\]
Therefore, $\Gamma_H$ is the evolution vector field associated with the time-dependent Hamiltonian function $H: \mathbb{R} \times T^*Q \to \mathbb{R}$ (for more details see, for instance, \cite{ChLeMa,LeMaMa}).


In addition, the vector bundle $V^*(\pi \circ \nu)$ may be identified with the time-dependent cotangent bundle to $\mathbb{R} \times T^*Q$,
\[
id_{\mathbb{R}} \times \pi_{\mathbb{R} \times T^*Q}: \mathbb{R} \times T^*(\mathbb{R} \times T^*Q) \to  \mathbb{R} \times (\mathbb{R} \times T^*Q),
\]
where $\pi_{\mathbb{R} \times T^*Q}: T^*(\mathbb{R} \times T^*Q) \to \mathbb{R} \times T^*Q$ is the canonical projection.

Under the previous identification, the principal $\mathbb{R}$-action on $V^*(\pi \circ \nu) \simeq  \mathbb{R} \times T^*(\mathbb{R} \times T^*Q)\simeq  \mathbb{R} \times (\mathbb{R} \times \mathbb{R}  \times T^*T^*Q)$ is given by
\[
p' \cdot (t,(p, p_q, \tilde{\gamma})) = (t ,(p+p', p_q,  \tilde{\gamma})),
\]
for $p' \in \mathbb{R}$ and $(t, (p, p_q,\tilde{\gamma})) \in \mathbb{R} \times ( \mathbb{R} \times \mathbb{R} \times T^*T^*Q)$.

Moreover, the extended phase bundle $\widetilde{\mathbb{P}(\pi)}$ is
\[
\widetilde{\mathbb{P}(\pi)} = \{(t, dp_{|p} + \gamma) \mid (t, p) \in \mathbb{R} \mbox{ and } \gamma\in T^*(T^*Q) \}.
\]
In other words, $\widetilde{\mathbb{P}(\pi)}$ may be identified with the time-dependent vector bundle associated with the vector bundle
$id_{\mathbb{R}} \times \pi_{T^*Q}: \mathbb{R} \times T^*T^*Q \to \mathbb{R} \times T^*Q$, that is,
\[
id_{\mathbb{R}} \times (id_{\mathbb{R}} \times \pi_{T^*Q}): \widetilde{\mathbb{P}(\pi)} \simeq \mathbb{R} \times (\mathbb{R} \times T^*T^*Q) \to  T^*( \mathbb{R} \times Q)\simeq \mathbb{R} \times \mathbb{R} \times T^*Q
\]
and the principal $\mathbb{R}$-action on $\widetilde{\mathbb{P}(\pi)}\simeq \mathbb{R} \times (\mathbb{R} \times T^*T^*Q)$ is given by
\[
p' \cdot (t, (p, \tilde{\gamma})) = (t , p+p' \tilde{\gamma})
\]
for $p', t, p \in \mathbb{R}$ and $\tilde{\gamma} \in T^*(T^*Q)$.

Thus, the phase bundle $\mathbb{P}(\pi)$ may be identified with the time-dependent cotangent bundle to $T^*Q$
\[
id_{\mathbb{R}} \times \pi_{T^*Q}: \mathbb{P}(\pi) \simeq  \mathbb{R} \times T^*T^*Q \to \mathbb{R} \times T^*Q.
\]
Then, the differential $dh$ of $h$ is just the vertical differential $d^vH$ (with respect to the projection $pr_1: \mathbb{R} \times T^*Q \to \mathbb{R}$) of the Hamiltonian function $H$, that is,
\[
d^vH: \mathbb{R} \times T^*Q \to \mathbb{R} \times T^*T^*Q, \; \; (t, \alpha) \to d^vH(t, \alpha) = (t, dH_{(t, \cdot)}(\alpha)).
\] 
On the other hand, the vector bundle $L$, which was introduced in Proposition \ref{Definition-L}, is isomorphic to the cotangent bundle to $\mathbb{R} \times T^*Q$.

Moreover, as we know, the $1$-jet bundle to the projection
\[
\pi \circ \nu^0: \mathbb{R} \times T^*Q \to \mathbb{R}
\]
may be identified with the affine subbundle of $T(\mathbb{R} \times T^*Q)$ given by
\begin{equation}\label{1-jet-bundle-phase-space}
\big\{\frac{\partial}{\partial t}_{|t} + X_\alpha \in T_t \mathbb{R} \times T_\alpha T^*Q \mid (t, \alpha) \in \mathbb{R} \times T^*Q\big\}
\end{equation}
or, equivalently, with the time-dependent tangent bundle to $T^*Q$
\[
id_{\mathbb{R}} \times \tau_{T^*Q}: \mathbb{R} \times TT^*Q \to \mathbb{R} \times T^*Q.
\]
Under all the previous identifications, the affine bundle isomorphism $A: J^1(\pi \circ \nu^0) \simeq \mathbb{R} \times TT^*Q \to \mathbb{P}(\pi) \simeq \mathbb{R} \times T^*T^*Q$ is given by
\[
A(t, X) = (t, i_X\omega_Q), \; \; \mbox{ for } (t, X) \in \mathbb{R} \times TT^*Q,
\]
with $\omega_Q$ the canonical symplectic structure of $T^*Q$. Thus, in this case, the vector subbundle $\operatorname{Ker} A$ is trivial and this implies that there exists a unique Hamiltonian connection for the hamiltonian section $h$. In fact, the horizontal subbundle of such a connection is generated by the evolution vector field $\Gamma_H$.

In addition, if $\sharp^{\rm aff}: \mathbb{P}(\pi) \to J^{1}(\pi \circ \nu^0)$ is the inverse morphism of $A: J^{1}(\pi \circ \nu^0) \to \mathbb{P}(\pi)$ it is clear that, under the identification of $J^1(\pi \circ \nu^0)$ with the affine subbundle of $T(\mathbb{R} \times T^*Q)$ given by (\ref{1-jet-bundle-phase-space}), the image of the section $dh$ of $\mathbb{P}(\pi)$ is just the evolution vector field $\Gamma_H$.

On the other hand, the space ${\mathcal O}$ of currents is the set of smooth real functions on $\mathbb{R} \times T^*Q$ (the space of observables in Classical Mechanics) 
\[
{\mathcal O} \simeq C^{\infty}(\mathbb{R} \times T^*Q)
\]
and it is clear that the space $\Gamma(\mu)$ of sections of the projection
\[
\mu: T^*(\mathbb{R} \times Q) \to \mathbb{R} \times T^*Q
\]
may be also identified with $C^{\infty}(\mathbb{R} \times T^*Q)$. Then, the linear-affine bracket
\[
\{\cdot, \cdot\}: {\mathcal O} \times \Gamma(\mu) \to C^{\infty}(\mathbb{R} \times T^*Q)
\]
is given by
\[
\{F^0, H\} = \langle dF^0,\Gamma_H\rangle = \frac{\partial F^0}{\partial t} + \sum_{i}\left(\frac{\partial F^0}{\partial q^{i}} \frac{\partial H}{\partial p_i} - \frac{\partial F^0}{\partial p_{i}} \frac{\partial H}{\partial q^i} \right),
\]
for $(F^0, H) \in {\mathcal O} \times \Gamma(\mu) \simeq C^{\infty}(\mathbb{R} \times T^*Q) \times C^{\infty}(\mathbb{R} \times T^*Q)$.

This bracket was considered in \cite{GrGrUr1,GrGrUr2,MaMeSa}.

In addition, the bracket $\{\cdot, \cdot\}_{\mathcal O}$ on the space of observables ${\mathcal O} \simeq C^{\infty}(\mathbb{R} \times T^*Q)$ is just the standard Poisson bracket $\{\cdot, \cdot\}_{\mathbb{R} \times T^*Q}$ induced by the canonical symplectic structure of $T^*Q$. In other words,
\[
\{F^0, G^0\}_{\mathbb{R} \times T^*Q} = \sum_{i}\left(\frac{\partial F^0}{\partial q^{i}} \frac{\partial G^0}{\partial p_i} - \frac{\partial F^0}{\partial p_{i}} \frac{\partial G^0}{\partial q^i} \right)
\]
for $F^0, G^0 \in C^{\infty}(\mathbb{R} \times T^*Q)$.

Finally, the affine representation ${\mathcal R}$ of the Lie algebra $(C^{\infty}(\mathbb{R} \times T^*Q), \{\cdot, \cdot\}_{\mathbb{R} \times T^*Q})$ on the affine space $\Gamma(\mu) \simeq C^{\infty}(\mathbb{R} \times T^*Q)$ is given by
\[
{\mathcal R}: C^{\infty}(\mathbb{R} \times T^*Q) \to {\rm Aff}( C^{\infty}(\mathbb{R} \times T^*Q), C^{\infty}(\mathbb{R} \times T^*Q)),  \quad  {\mathcal R}(F^0) H =  \langle dF^0,\Gamma_H\rangle.
\]
Concretely, we have
\[
\{\{F_0, G_0\}_{\mathbb{R} \times T^*Q}, H\}=\{F_0, \{G_0, H\}\}_{\mathbb{R} \times T^*Q}- \{G_0, \{F_0, H\}\}_{\mathbb{R} \times T^*Q},
\]

\subsection{A particular case: the configuration bundle is trivial and the base space is orientable}\label{particular-case}
In this section we will assume that $E = M \times Q$, $\pi: E \to M$ is the canonical projection $pr_1: M \times Q \to M$
on the first factor. In this case, the affine bundle
\[
J^1(pr_1) \to M \times Q
\]
can be identified with the vector bundle 
\[
T^*M \otimes TQ = \cup_{(x, y) \in M \times Q}{\rm Lin}(T_xM, T_yQ).
\]
Let us further assume that $M$ is orientable, with $m = dim M \geq 2$, and fix a volume form $\mbox{vol} \in \Omega^m(M)$ on $M$. We denote by $\chi_{\mbox{vol}}$ the $m$-vector on $M$ which is characterized by the condition
\[
i(\chi_{\mbox{vol}})\mbox{vol} = 1.
\] 
Using the volume form $\mbox{vol}$ on $M$, we have
\[
\Lambda^mT_x^*M \simeq \mathbb{R}, \;\; \; \forall x \in M,
\]
and the vector bundle $\Lambda^mT^*M \to M$ may be trivialized as the trivial line vector bundle $M \times \mathbb{R} \to M$. Using $\mbox{vol}$ again, the reduced multimomentun bundle ${\mathcal M}^0\pi$ is isomorphic to the vector bundle. 
\[
\Lambda^{m-1}T^*M \otimes T^*Q = \cup_{(x, y) \in M \times Q}{\rm Lin}(T_yQ, \Lambda^{m-1}(T_x^*M))
\]
We can also identify it with
\[
TM \otimes T^*Q = \cup_{(x, y) \in M \times Q}\operatorname{Lin} (T^*_{x}M, T^*_yQ ).
\]
As in the general case, we will denote by $\nu^0: \Lambda^{m-1}T^*M \otimes T^*Q \to M \times Q$ the vector bundle projection. We will see that this space admits a multisymplectic structure.
\begin{proposition}
Let $\lambda_{{\mathcal M}^0\pi}$ be the $m$-form on ${\mathcal M}^0\pi$  given by
\begin{align*}
\lambda_{{\mathcal M}^0\pi}(\gamma^0)(Z_1^0, \dots, Z_m^0) & = \sum_{i=1}^m (-1)^{i+1}\left(\gamma^0(T_{\gamma^0}(pr_2 \circ \nu^0)(Z_i^0))\right)\\
&\qquad   \Big(T_{\gamma^0}(pr_1 \circ \nu^0)(Z_1^0), \dots, \widehat{T_{\gamma^0}(pr_1 \circ \nu^0)(Z_i^0)}, \dots, T_{\gamma^0}(pr_1 \circ \nu^0)(Z_m^0)\Big),
\end{align*}
for $\gamma^0 \in {\rm Lin}(T_yQ, \Lambda^{m-1}(T_x^*M))$ and $Z_1^0, \dots, Z_m^0 \in T_{\gamma^0}(\Lambda^{m-1}T^*M \otimes T^*Q)$, where $pr_2: M \times Q \to Q$ is the projection on the second factor. Then, $\omega_{{\mathcal M}^0\pi} = -d\lambda_{{\mathcal M}^0\pi}$ is a multisymplectic structure on ${\mathcal M}^0\pi$.
\end{proposition}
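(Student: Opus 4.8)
The plan is to reduce the statement to a local coordinate computation and then invoke the non-degeneracy already recorded in Section~\ref{Ham-Class-Field-Theo}. Since $\omega_{{\mathcal M}^0\pi} = -d\lambda_{{\mathcal M}^0\pi}$ is exact it is automatically closed, so the only content of the proposition is that the flat morphism
\[
\flat_{\omega_{{\mathcal M}^0\pi}}\colon T({\mathcal M}^0\pi) \to \Lambda^m(T^*({\mathcal M}^0\pi)), \qquad Y \mapsto i_Y\omega_{{\mathcal M}^0\pi},
\]
is fibrewise injective. I would start from coordinates $(x^i)$ on $M$ and $(u^\alpha)$ on $Q$, yielding the coordinates $(x^i, u^\alpha, p^i_\alpha)$ on ${\mathcal M}^0\pi$ in which $\gamma^0\in{\rm Lin}(T_yQ,\Lambda^{m-1}T^*_xM)$ is described by $\gamma^0(\partial/\partial u^\alpha) = p^i_\alpha\, d^{m-1}x_i$. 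The formula defining $\lambda_{{\mathcal M}^0\pi}$ is the evident analogue of the canonical $m$-form \eqref{Canonical-m-form}, and evaluating it on coordinate vector fields — the one step requiring a little care with signs, since it amounts to computing a tautological insertion form — one finds
\[
\lambda_{{\mathcal M}^0\pi} = p^i_\alpha\, du^\alpha \wedge d^{m-1}x_i, \qquad\text{hence}\qquad \omega_{{\mathcal M}^0\pi} = du^\alpha \wedge dp^i_\alpha \wedge d^{m-1}x_i .
\]
This is precisely the local form, recorded in \eqref{Local-w-h}, of $\omega_h = h^*(\omega_{{\mathcal M}\pi})$ for a locally defined Hamiltonian section with $H\equiv 0$; thus non-degeneracy for $m\ge 2$ is exactly the assertion made immediately after \eqref{Local-w-h}, and the proposition follows.

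For a self-contained verification I would argue directly. Writing an arbitrary vector field as $X = a^j\,\partial/\partial x^j + b^\beta\,\partial/\partial u^\beta + c^j_\beta\,\partial/\partial p^j_\beta$ and using $i_X d^{m-1}x_i = \sum_{j\neq i} a^j\, d^{m-2}x_{ij}$, where $d^{m-2}x_{ij} := i_{\partial/\partial x^j}i_{\partial/\partial x^i}(d^mx)$, one computes
\[
i_X\omega_{{\mathcal M}^0\pi} = \sum_{\alpha,i}\Big( b^\alpha\, dp^i_\alpha \wedge d^{m-1}x_i - c^i_\alpha\, du^\alpha \wedge d^{m-1}x_i + \sum_{j\neq i} a^j\, du^\alpha \wedge dp^i_\alpha \wedge d^{m-2}x_{ij}\Big).
\]
The three groups of summands involve pairwise disjoint families of coordinate monomials (respectively: one $dp$ and no $du$; one $du$ and no $dp$; one $du$ and one $dp$), so $i_X\omega_{{\mathcal M}^0\pi}=0$ forces each group to vanish on its own. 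As $d^{m-1}x_i$ is, up to sign, the basis $(m-1)$-form omitting $dx^i$, the forms $dp^i_\alpha\wedge d^{m-1}x_i$ and $du^\alpha\wedge d^{m-1}x_i$ are linearly independent, so $b^\alpha = 0$ and $c^i_\alpha = 0$ for all indices; and for the third group, fixing $j$ and picking any $i\neq j$ (possible because $m\ge 2$) and any $\alpha$, the monomial $du^\alpha\wedge dp^i_\alpha\wedge d^{m-2}x_{ij}$ is non-zero and carries coefficient exactly $a^j$, so $a^j=0$. Hence $X=0$.

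The computation is routine, and the only genuine subtlety — the place where the hypothesis $m\ge 2$ enters in an essential way — is the treatment of the last group of terms, which must be checked with some attention in the borderline case $m=2$, where $d^{m-2}x_{ij}$ degenerates to the constant $\pm 1$ and, for fixed $i$, there is a unique $j\neq i$; one verifies that the argument above still goes through. (For $m=1$ the statement genuinely fails: $\omega_{{\mathcal M}^0\pi}$ then equals $du^\alpha\wedge dp_\alpha$ and has one-dimensional kernel spanned by $\partial/\partial x$, consistently with the cosymplectic picture of Section~\ref{time-dependent-Ham-mech}.) Note finally that the orientability of $M$ and the chosen volume form are not needed for this argument: they intervene only in the global identifications ${\mathcal M}^0\pi \simeq \Lambda^{m-1}T^*M\otimes T^*Q \simeq TM\otimes T^*Q$, whereas $\lambda_{{\mathcal M}^0\pi}$ itself is defined intrinsically from the product structure of $E = M\times Q$.
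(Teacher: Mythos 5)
Your proposal is correct and follows essentially the same route as the paper: compute the local expression $\lambda_{{\mathcal M}^0\pi} = p^i_\alpha\, du^\alpha \wedge d^{m-1}x_i$, deduce $\omega_{{\mathcal M}^0\pi} = du^\alpha \wedge dp^i_\alpha \wedge d^{m-1}x_i$, and verify non-degeneracy by contracting with the coordinate vector fields. The paper merely lists the three contractions $i_{\partial/\partial x^j}\omega$, $i_{\partial/\partial u^\alpha}\omega$, $i_{\partial/\partial p^i_\alpha}\omega$ and asserts the conclusion, so your more explicit separation of the resulting monomials into disjoint groups is just a fuller write-up of the same argument.
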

\begin{proof}
A direct computation proves that the local expression of $\lambda_{{\mathcal M}^0\pi}$ is
\[
\lambda_{{\mathcal M}^0\pi} = p_\alpha^{i} du^\alpha \wedge d^{m-1}x_i.
\]
Thus, the local expression of $\omega_{{\mathcal M}^0\pi}$ is
\begin{equation}\label{Local-exp-omega-0}
\omega_{{\mathcal M}^0\pi} = du^\alpha \wedge dp_\alpha^{i} \wedge d^{m-1}x_i.
\end{equation}
Therefore,
\[
i_{\frac{\partial}{\partial x^j}}\omega_{{\mathcal M}^0\pi} = du^{\alpha} \wedge dp_\alpha^{i} \wedge d^{m-2}x_{ij}, \; i_{\frac{\partial}{\partial u^\alpha}}\omega_{{\mathcal M}^0\pi} = dp_\alpha^{i} \wedge d^{m-1}x_{i}, \; i_{\frac{\partial}{\partial p_\alpha^{i}}}\omega_{{\mathcal M}^0\pi} = -du^{\alpha} \wedge d^{m-1}x_{i},
\]
which implies that $\omega_{{\mathcal M}^0\pi}$ is a multisymplectic structure on ${\mathcal M}^0\pi$. 
\end{proof}
On the other hand, the extended multimomentum bundle ${\mathcal M}\pi$ may be identified with the Withney sum of the vector bundles 
$\Lambda^mT^*M \times Q \to M \times Q$ and $\Lambda^{m-1}T^*M \otimes T^*Q \to M \times Q$, that is,
\[
{\mathcal M}\pi \simeq (\Lambda^mT^*M \times Q) \oplus (\Lambda^{m-1}T^*M \otimes T^*Q).
\]
So, using the volume form $\mbox{vol}$, we deduce that
\[
{\mathcal M}\pi \simeq \mathbb{R} \times {\mathcal M}^0\pi.
\]
Under this identification, the canonical multisymplectic structure $\omega_{{\mathcal M}\pi}$ on ${\mathcal M}\pi$ is
\[
 \omega_{{\mathcal M}\pi} = \omega_{{\mathcal M}^0\pi} - dp \wedge \mbox{vol},
 \]
where $p$ is the canonical coordinate on $\mathbb{R}$. Here, we also denote by $\omega_{{\mathcal M}^0\pi}$ and $\mbox{vol}$ the pullbacks to ${\mathcal M}\pi$ of $\omega_{{\mathcal M}^0\pi}$ and $\mbox{vol}$, respectively. 

Moreover, a Hamiltonian section $h: {\mathcal M}^0\pi \to {\mathcal M}\pi$ is just a global Hamiltonian function $H: {\mathcal M}^0\pi \to \mathbb{R}$ on ${\mathcal M}^0\pi$ and the $(m+1)$-form $\omega_h$ on ${\mathcal M}^0\pi$ is
\begin{equation}\label{omega-H-h}
\omega_h = \omega_H = \omega_{{\mathcal M}^0\pi} + dH \wedge \mbox{vol}.
\end{equation}

On the other hand, under the identification between ${\mathcal M}\pi$ and $\mathbb{R} \times {\mathcal M}^0\pi$ and using (\ref{vertical-lift-dmx}) (see Appendix \ref{ver-lift-section-vector-bundle}), it follows that
\begin{equation}\label{vol-vertical}
(\mbox{vol})^{\bf v} = \frac{\partial}{\partial p},
\end{equation} 
where $p$ is the standard coordinate on $\mathbb{R}$.

So, the extended phase bundle $\widetilde{\mathbb{P}(pr_1)}$ is isomorphic to the affine bundle over ${\mathcal M}\pi \simeq \mathbb{R} \times {\mathcal M}^0\pi$
\[
\cup_{(p, \gamma^0) \in \mathbb{R} \times {\mathcal M}^0\pi}\Big\{\tilde{\mathcal A} \in V^*_{(p, \gamma^0)}(pr_1 \circ \nu) \mid \tilde{\mathcal A}(\frac{\partial}{\partial p}_{|p}) = 1 \Big\},
\] 
where $V(pr_1 \circ \nu)$ is the vertical bundle of the fibration $pr_1 \circ \nu: {\mathcal M}\pi \to M$.

Now, using the previous identifications, we have that the fibred action of $\pi^*(\Lambda^mT^*M)$ on ${\mathcal M}(pr_1)$ is just the standard action of $\mathbb{R}$ on $\mathbb{R} \times {\mathcal M}^0\pi$. Therefore, since
\[
V_{(p, \gamma^0)}(pr_1 \circ \nu) \simeq {\rm span}\frac{\partial}{\partial p}_{|p} \oplus V_{\gamma^0}(pr_1 \circ \nu^0),
\]
for $(p, \gamma^0) \in \mathbb{R} \times {\mathcal M}^0(pr_1)$, we deduce that the phase bundle $\mathbb{P}(pr_1)$ is isomorphic to the vector bundle $V^*(pr_1 \circ \nu^0)$. An isomorphism 
\[
V^*(pr_1 \circ \nu^0) \to \mathbb{P}(pr_1) = \frac{\widetilde{\mathbb{P}(pr_1)}}{\mathbb{R}}
\]
between these spaces is given by
\[
{\mathcal A}_{\gamma^0} \in V^*_{\gamma^0}(pr_1 \circ \nu^0) \to [dp_{|p} + A_{\gamma^0}] \in \mathbb{P}(pr_1)_{\gamma^0},
\]
for $\gamma^0 \in {\mathcal M}^0\pi$, with $p$ an arbitrary real  number.

Thus, the image of the vertical differential $d^vH$ of $H$ under the previous isomorphism is just the equivalence class induced by the vertical differential $d^v{\mathcal F}_h$ of the extended Hamiltonian density ${\mathcal F}_h = p + H$. This implies that, under the identification between $V^*(pr_1 \circ \nu^0)$ and $\mathbb{P}(pr_1)$, the differential $dh$ of $h$ (as a section of the affine bundle $\mathbb{P}(pr_1) \to {\mathcal M}^0(pr_1)$) is just $d^vH$ (as a section of the vector bundle $V^*(pr_1 \circ \nu^0) \to {\mathcal M}^0(pr_1)$). 

Next, following Section \ref{canonical-isomorphism} (see (\ref{b-tilde-multi}) and (\ref{3.18'})), we will define the vector bundle monomorphism 
\[
\flat^0: V(pr_1 \circ \nu^0) \to {\mathcal M}(pr_1 \circ \nu^0) = \Lambda^m_2(T^*({\mathcal M}^0\pi))
\]
as follows
\[
\flat^0(U) = -i_U\omega_{{\mathcal M}^0\pi}(\gamma^0),
\]
for $U \in V_{\gamma^0}(pr_1 \circ \nu^0)$ and $\gamma^0 \in {\mathcal M}^0\pi$. We have
\[
\flat^0(\frac{\partial}{\partial u^\alpha}) = -(dp_\alpha^{i} \wedge d^{m-1}x_i), \; \; \flat^0(\frac{\partial}{\partial p_\alpha^{i}}) = du^\alpha  \wedge d^{m-1}x_i.
\]

Now, we consider the restriction to $J^1(pr_1 \circ \nu^0)$ of the dual morphism of $\flat^0$, that is,
\[
(\flat^0)^*: J^{1}(pr_1 \circ \nu^0) \to V^*(pr_1 \circ \nu^0).
\]
A direct computation proves that
\[
(\flat^0)^*(Z^0) = (-1)^{m+1}\left(i_{(\Lambda^mZ^0)(\chi_{\mbox{vol}}(\gamma^0))}\omega_{{\mathcal M}^0\pi}(\gamma^0)\right)_{|V_{\gamma^0}(pr_1 \circ \nu^0)},
\]
for $Z^0: T_{pr_1(\nu^0(\gamma^0))}M \to T_{\gamma^0}({\mathcal M}^0\pi) \in J^{1}_{\gamma^0}(pr_1 \circ \nu^0)$ and $\gamma^0 \in {\mathcal M}^0\pi$.  Thus, if $(x^{i}, u^\alpha, p_\alpha^{i})$ are local coordinates on ${\mathcal M}^0\pi$ such that
\[
\mbox{vol} = d^mx
\]
then, using (\ref{Local-exp-omega-0}), we deduce that
\[
(\flat^0)^*(x^{i}, u^{\alpha}, p_\alpha^{i}; u^\alpha_j, p^{i}_{\alpha j}) = \big(x^{i}, u^{\alpha}, p_\alpha^{i}; -\sum_i p^{i}_{\alpha i}, u^\alpha_i\big).
\]
This, from (\ref{Local-A}) and under the identification between $\mathbb{P}(\pr_1)$ and $V^*(pr_1 \circ \nu^0)$, implies that $(\flat^0)^* = A$, with $A: J^{1}(pr_1 \circ \nu^0) \to \mathbb{P}(pr_1)$ the affine bundle epimorphism given by (\ref{3.18'}), (\ref{3.15'}) and (\ref{3.15''}).

So, if $\widehat{(\flat^0)^*}: J^{1}(pr_1 \circ \nu^0)/ \operatorname{Ker} A \to V^*(pr_1 \circ \nu^0)$ is the affine bundle isomorphism induced by $(\flat^0)^*: J^{1}(pr_1 \circ \nu^0) \to V^*(pr_1 \circ \nu^0)$ then, under the identification between $\mathbb{P}(pr_1)$ and $V^*(pr_1 \circ \nu^0)$, we conclude that 
\[
\widehat{(\flat^0)^*} = \hat{A}.
\]
We will denote by
\[
\sharp^0: V^*(pr_1 \circ \nu^0) \to J^{1}(pr_1 \circ \nu^0)/\operatorname{Ker} A
\]
the inverse morphism of $\widehat{(\flat^0)^*}: J^{1}(pr_1 \circ \nu^0)/\operatorname{Ker} A \to V^*(pr_1 \circ \nu^0)$.

Now, from (\ref{omega-H-h}) and Definition \ref{Hamiltonian-connection}, we deduce that a connection $^H: {\mathcal M}^0\pi \times_M TM \to H \subseteq T({\mathcal M}^0\pi)$ on the fibration $pr_1 \circ \nu^0: {\mathcal M}^0\pi \to M$ is Hamiltonian if and only if
\[
(-1)^{m+1}\big(i_{\chi_{\mbox{vol}}^H}\omega_{{\mathcal M}^0\pi}\big)_{|V(pr_1 \circ \nu^0)} = d^vH,
\]
where $d^vH$ is the vertical differential of $H$ with respect to the projection $pr_1 \circ \nu^0$.

Moreover, using that $(\flat^0)^* = A$, it follows that the vector subbundle $L$ of ${\mathcal M}(pr_1 \circ \nu^0)$ introduced in Proposition \ref{Definition-L} is
\[
L_{\gamma^0} = {\rm span}(\mbox{vol}(\gamma^0) ) \oplus \flat_0(V_{\gamma^0}(pr_1 \circ \nu^0)), \; \; \mbox{ for } \gamma^0 \in {\mathcal M}^0\pi
\]
In addition, as we know (see {\bf first step} in Section \ref{Currents-Ham-deDonder-Weyl}), we have that
\[
\left(J^{1}(pr_1 \circ \nu^0)/\operatorname{Ker} A\right)^+ = {\rm Aff}\left(J^{1}(pr_1 \circ \nu^0)/\operatorname{Ker} A, \mathbb{R}\right) \simeq L.
\]
On the other hand, under the identification between ${\mathcal M}(pr_1)$ and $\mathbb{R} \times {\mathcal M}^0(pr_1)$, the projection $\mu: {\mathcal M}(pr_1) \to {\mathcal M}^0(pr_1)$ is just the canonical projection on the second factor. Thus, the affine space $\Gamma(\mu)$ is isomorphic to the vector space $C^{\infty}({\mathcal M}^0(pr_1))$ and the linear-affine bracket
\[
\{\cdot, \cdot\}: {\mathcal O} \times \Gamma(\mu) \to \Gamma((pr_1 \circ \nu^0)^*(\Lambda^mT^*M))
\]
given by (\ref{def-bracket}) may be considered as a bracket
\[
\{\cdot, \cdot\}: {\mathcal O} \times C^{\infty}({\mathcal M}^0(pr_1)) \to C^{\infty}({\mathcal M}^0(pr_1))
\]
defined by
\[
\{\alpha^0, H\} = \left\langle d\alpha^0, \sharp^0(d^vH) \right\rangle , \; \; \mbox{ for } \alpha^0 \in {\mathcal O} \mbox{ and } H \in C^{\infty}({\mathcal M}^0(pr_1)).
\]

\subsection{Continuum Mechanics}\label{CM_section}

In this section we develop the formulation of Continuum Mechanics as a Canonical Hamiltonian Field Theory. This covers the case of fluid mechanics and nonlinear elasticity. We shall assume that the reference configuration of the continuum is described by a manifold $B$ of dimension $N$, $N=2,3$ possibly with boundary, and we suppose that the continuum evolves in a $N$ dimensional manifold $Q$ without boundary, the ambient manifold, typically $ \mathbb{R} ^N$. The elements $x\in B$ denote the labels of the material points of the continuum, whereas the elements $u\in Q$ denote the current positions of these material points. The evolution of the continuum is described by a map $\varphi: [0,T]\times B \rightarrow Q$, where $[0,T]$ is the interval of time. Hence, $u=\varphi(t,x)$ describes the position of the material point $x$ at time $t$.
We shall assume that for each $t$ fixed, the map $x\in B \mapsto\varphi(t,x)\in Q$ is a smooth embedding. Boundary conditions will be described in \S\ref{BC_CM}

\subsubsection{Lagrangian and Hamiltonian formulations in continuum mechanics.}

Continuum mechanics is usually written either as a Lagrangian field theory or as an infinite dimensional classical Lagrangian or Hamiltonian system. While the infinite dimensional description is more classical, the field-theoretic description is especially useful for the derivation of multisymplectic integrators for fluid and elasticity (\cite{DeGB2021,DeGB2022,DeGBRa2014,LeMaOrWe2003,MaPaSh1998}).

In the field theoretic Lagrangian description, the map $\varphi$ is interpreted as a section $\phi \in \Gamma(\pi)$ of the trivial fiber bundle $\pi: E=M \times Q \rightarrow M$, $M=[0,T]\times B$, by writing $\phi(t,x)= (t,x, \varphi(t,x))$. The equations of motion are given by the Euler-Lagrange equations for a given Lagrangian density $\mathcal{L}:J^1\pi\rightarrow \Lambda^mT^*M$, $m=N+1$. Since the bundle is trivial, we have $J^1\pi_{(t,x,u)} \simeq T_{(t,x)}^*M \otimes T_{u}Q$. We denote by $(t,x^i, u^\alpha, V^\alpha, F^\alpha_i)$ the local coordinates. Writing locally the Lagrangian density as $\mathcal{L}=\bar{\mathcal{L}}(t,x^i,u^\alpha, V^\alpha, F^\alpha_i)dt\wedge d^Nx$, the Euler-Lagrange equations are given by
\[
\frac{\partial}{\partial t}\frac{\partial \bar{\mathcal{L}}}{\partial V^\alpha} + \frac{\partial}{\partial x^i}\frac{\partial \bar{\mathcal{L}}}{\partial F^\alpha_i}=\frac{\partial \bar{\mathcal{L}}}{\partial u^\alpha}.  
\]

In the infinite dimensional classical Lagrangian description, the map $\varphi$ is interpreted as a curve $\varphi(t)$ in the infinite dimensional manifold $\operatorname{Emb}(B,Q)$ of smooth embeddings of $B$ into $Q$. The equations are given by the (classical) Euler-Lagrange equations for the Lagrangian function $L:T \operatorname{Emb}(B,Q)\rightarrow \mathbb{R}$ defined from $\mathcal{L}$ as
\[
L(\varphi,V)=\int_B i_{\partial_t}\mathcal{L}(x, \varphi(x), V(x), T_x\varphi)=\int_B \bar{\mathcal{L}}(x^i, V^\alpha(x), \varphi^\alpha_{,i}(x))d^Nx,
\]
where we assumed that the Lagrangian density $\mathcal{L}$ does not depend explicitly on the time $t$, and $T_X\varphi:T_XB \rightarrow T_{\varphi(X)} Q$ denotes the tangent map to $\varphi \in \operatorname{Emb}(B,M)$, i.e. locally $T_x\varphi= \varphi^\alpha_{,i} \frac{\partial}{\partial u^\alpha}\otimes dx^i$. When $L$ is hyperregular, to this classical Lagrangian description is formally associated a classical Hamiltonian description with respect to the Hamiltonian $H:T^*\operatorname{Emb}(B,Q)\rightarrow \mathbb{R}$ defined on the (regular) cotangent bundle of $\operatorname{Emb}(B,Q)$. The Hamiltonian is defined by
\[
H(\varphi,M)= \int_B i_{\partial _t} \Big(\frac{\partial \mathcal{L}}{\partial V^\alpha} (x, \varphi(x), V(x),T_x\varphi)V^\alpha - \mathcal{L}(x, \varphi(x), V(x),T_x\varphi)\Big),
\]
where $V\in T_\varphi \operatorname{Emb}(B,Q)$ is such that $i_{\partial _t} \frac{\partial \mathcal{L}}{\partial V}=M\in T^*_\varphi\operatorname{Emb}(B,Q)$. In this case, the associated equations can formally be written $\dot F=\{F,H\}_{\rm can}$ with respect to the canonical Poisson bracket on $T^*\operatorname{Emb}(B,Q)$.

The Hamiltonian formulation that we present below is different from this one, since it is associated to the field theoretic Lagrangian formulation. Roughly speaking, while the canonical Hamiltonian formulation recalled above is based on a Legendre transform with respect to the time direction only, the canonical Hamiltonian field theoretic description that we will describe below is based on a Legendre transform with respect to all the variables in the base manifold $M$.

We warn the reader that the coordinates $x^i$ that were used in the previous sections for the base manifold $M$ are here given by $(t,x^i)$ for $M=[0,T]\times B$. The coordinates $u^\alpha_i$ used earlier on the fiber of $J^1\pi$ are here given by $(V^\alpha, F^\alpha_i)$ and represent the material velocity and the deformation gradient of the continuum.

\subsubsection{Lagrangian density and Legendre transform}

The Lagrangian density of continuum mechanics is defined with the help of given tensor fields on $B$ and $Q$. In order to treat both fluid dynamics and elasticity from a unified perspective, we shall consider here a Riemannian metric $G$ on $B$, two volume forms $\varrho$ and $\varsigma$ on $B$, and a Riemannian metric $g$ on $Q$. Additional tensor fields can be introduced to describe electromagnetic effects or microstructures. The volume forms $\varrho$ and $\varsigma$ are the mass density and the entropy density in the reference configuration and are locally written as $\varrho=\bar\varrho  d^NX$ and $\varsigma=\bar\varsigma  d^NX$. The potential energy density is a bundle map
\[
\mathcal{E}: (T_x^* B\otimes T_uQ) \times \Lambda^NT^*_x B\times \Lambda^NT^*_x B \times S^2 T_x B\times S^2 T_u ^*Q \rightarrow  \Lambda^NT^*_x B,
\]
covering the projection $B\times Q\rightarrow B$. In local coordinates, it reads
\[
\bar{\mathcal{E}}(x^i,u^\alpha,  F^\alpha_i, \bar\varrho, \bar\varsigma , G^{ij},g_{\alpha\beta})d^NX.
\]
This is a general form of potential energy density for continua, including fluid and elasticity, which may describe both internal and stored energies.

The associated Lagrangian density $\mathcal{L}:J^1\pi\rightarrow \Lambda^mT^*M$ is given by the kinetic minus the potential energy, and reads
\begin{equation}\label{Lagrangian_density_CM}
\begin{aligned}
\mathcal{L}(t,x^i, u^\alpha, V^\alpha, F^\alpha_i)=&\frac{1}{2} g_{\alpha\beta}(u) V^\alpha V^\beta \bar\varrho (x) dt\wedge d^N x\\
&- \bar{\mathcal{E}} ( x^i, u^\alpha , F^\alpha_i, \bar\varrho (x),\bar\varsigma (x),G^{ij}(x),g_{\alpha\beta}(u))dt\wedge d^N x
\end{aligned}
\end{equation}
in local coordinates. Note that the Lagrangian is defined with the help of the given tensor fields $\varrho=\bar\varrho d^Nx$, $\varsigma=\bar\varsigma d^Nx$, $G^{-1}= G^{ij} \frac{\partial}{\partial x^i}   \frac{\partial}{\partial x^j}$, and $g=g_{\alpha\beta} dx^\alpha dx^\beta$. We chose to work with the cometric $G^{ij}$ associated to $G_{ij}$, in order to directly get the Finger deformation (or left Cauchy-Green) tensor $b^{\alpha\beta}$, rather than its inverse, later.

The restricted multimomentum bundle for continuum mechanics is given by
\[
{\mathcal M}^0_{(t,x,u)}\pi \simeq T_{(t,x)} M\otimes T^*_uQ  = \operatorname{Lin} (T^*_{(t,x)}M, T^*_uQ )
\]
with coordinates $(t,x^i, u^\alpha, M_\alpha, P^i_\alpha)$. The restricted Legendre transform of the Lagrangian density is
\begin{equation}\label{restricted_Legendre}
Leg_{ \mathcal{L} }: J^1\pi\rightarrow {\mathcal M}^0\pi,\quad (t,x^i,u^\alpha,V^\alpha, F^\alpha_i)\mapsto (t,x^i, u^\alpha, M_\alpha, P^i_\alpha),
\end{equation}
with $M_\alpha$ and $P^i_\alpha$ given by
\begin{equation}\label{momenta}
M_\alpha = g_{\alpha\beta}(u) V^\beta\bar \varrho(x),\quad P^i_\alpha = -\frac{\partial \bar{\mathcal{E}}}{\partial F^\alpha_i},
\end{equation}
with $M_\alpha$ the momentum density (in the Lagrangian description) and $P^i_\alpha$ is the Piola-Kirchoff stress tensor density. Note that the coordinates $(M_ \alpha ,  P_ \alpha ^i)$ on the fiber of $ \mathcal{M} ^0 \pi $ correspond to the coordinates denoted $p_ \alpha ^i$ earlier.

The Eulerian versions of these tensor densities are the Eulerian momentum density $m_\alpha$ and the Cauchy stress tensor density $\sigma^{\alpha\beta}$ given by the Piola transformation
\begin{equation}\label{Eulerian_momenta}
m_\alpha= M_\alpha \operatorname{det}(F)^{-1}\quad\text{and}\quad \sigma^{\alpha\beta}=- F_i^\alpha P_\gamma ^i g^{\beta\gamma}\operatorname{det}(F)^{-1}.
\end{equation}
From the second relation, we have
\begin{equation}\label{P_sigma}
P_\alpha ^i =  - \operatorname{det}(F) ( F ^{-1} )_\gamma ^i  \sigma ^{\gamma\beta} g_{\beta\alpha}.
\end{equation}
Note that the first relation in \eqref{momenta} is always invertible, but the invertibility of the second relation depends in the potential energy density $\mathcal{E}$.
As we shall illustrate below, relation \eqref{P_sigma} is extremely useful to check if the restricted Legendre transform \eqref{restricted_Legendre} is an isomorphism, in which case we say that the Lagrangian density is hyperregular.

\subsubsection{The Hamiltonian density and the linear-affine bracket for Continuum Mechanics}

By assuming that $\mathcal{L}$ is hyperregular, we get the Hamiltonian $H\in C^\infty(\mathcal{M}^0\pi)$
\begin{equation}\label{HD_continuum}
H(t,x ^i ,u ^\alpha , M_\alpha, P_\alpha ^i )=\frac{1}{2} g^{\alpha\beta}(u)M_\alpha M_\beta \frac{1}{\bar \varrho(x)}+\bar{\mathcal{E}}-\frac{\partial \bar{\mathcal{E}}}{\partial F_i ^\alpha }F_i ^\alpha ,
\end{equation}
where $F_i^\alpha $ is expressed in terms of the variables in $\mathcal{M}^{0}\pi$ by inverting the second relation in \eqref{momenta}.

A section of the restricted momentum bundle $\mathcal{M}^0\pi$ is locally given by
\[
s_0(t,x^i )= \left( t, x^i , \varphi^\alpha (t, x^i ), M_\alpha(t,x^i),P_\alpha ^i (t,x^i ) \right)
\]
and, in the hyperregular case, the Euler-Lagrange equation are equivalent to the Hamilton-deDonder-Weyl equations given by
\begin{equation}\label{HdDW} 
\frac{\partial \varphi^\alpha }{\partial t }= \frac{\partial H}{\partial M_\alpha }  , \quad \frac{\partial \varphi^\alpha }{\partial x^i  }= \frac{\partial H}{\partial P_\alpha ^i } ,\quad\frac{\partial M_\alpha}{\partial t} + \frac{\partial P _\alpha ^i }{\partial x^i }= -\frac{\partial H}{\partial u ^\alpha }.
\end{equation} 
These equations admit the canonical linear-affine bracket formulation, that is, a section $(t, x^{i}) \to s^0(t, x^{i}) = (t, x^{i}, u^\alpha(t, x), M_\alpha(t, x), P_\alpha^{i}(t, x))$ is a solution of the previous equations if
\begin{equation}\label{Bracket_CM} 
(s^0)^*(d \alpha^0)  = \{\alpha^0, h \} \circ s^0, \;\;\forall\; \alpha^0 \in \mathcal{O},
\end{equation} 
where the currents $\alpha^0\in \mathcal{O}$ for Continuum Mechanics are of the form
\begin{align*} 
&\alpha^0(t,x ^i, u^\alpha, M_\alpha, P^i_\alpha)\\
&= \alpha^{00}(t,x ^i, u^\alpha, M_\alpha, P^i_\alpha)d^Nx -  \alpha^{0i}(t,x ^i, u^\alpha, M_\alpha, P^i_\alpha)dt \wedge d^{N-1}x_i\\
&= \big(Y^\alpha(t,x,u) M_\alpha + \beta^0(t,x,u) \big) d^Nx  -\big(Y^\alpha(t,x,u) P^i_\alpha + \beta^i(t,x,u) \big) dt \wedge d^{N-1}x_i
\end{align*}
with $d^{N-1}x_i= i_{\frac{\partial}{\partial x^i} } d^Nx$.
The canonical linear-affine bracket $\{\cdot, \cdot\}: \mathcal{O} \times C^\infty( \mathcal{M}^0\pi)\rightarrow C^\infty( \mathcal{M}^0\pi)$ is given by
\[
\begin{array}{l}
\displaystyle\{\alpha^{00} d^Nx -  \alpha^{0i}dt \wedge d^{N-1}x_i, H \} \\[8pt]
\displaystyle= \frac{\partial \alpha^{00} }{\partial t} + \frac{\partial \alpha^{0i} }{\partial x^i}+ \frac{\partial \alpha ^{00}}{\partial u^ \alpha } \frac{\partial H}{\partial M_\alpha }- \frac{1}{m} \frac{\partial H}{\partial u^\alpha }   \frac{\partial \alpha^{00} }{\partial M_ \alpha } + \frac{\partial \alpha ^{0i}}{\partial u^ \alpha } \frac{\partial H}{\partial P^i_ \alpha } - \frac{1}{m} \frac{\partial H}{\partial u^\alpha }  \frac{\partial \alpha^{0i} }{\partial P_ \alpha ^i}.
\end{array}
\]

This formulation assumes that the Legendre transform is invertible. Except in some simple situations, this invertibility is a priori difficult to check. We shall show below how to facilitate the approach by using two symmetries of the potential energy density $ \mathcal{E} $. The first one, the material covariance, is related to the isotropy of the continuum, while the second, the material frame indifference, is a general covariance assumption of continuum theories, see \cite{MaHu1983} and \cite{GBMaRa2012}.

We assume that the potential energy density is of the form
\[
\bar{\mathcal{E}} \big( x^i, u^\alpha, F _i ^\alpha   , G^{ij} , \bar\varrho   , \bar\varsigma   , g _{\alpha\beta} \big)=\operatorname{det}(F)\bar\epsilon\big(u^\alpha, \bar\varrho \operatorname{det}(F)^{-1} ,\bar \varsigma \operatorname{det}(F)^{-1} , F^\alpha_i G^{ij} F_j^\beta, g _{\alpha\beta}\big),
\]
where $\bar\epsilon$ is the potential energy density in the Eulerian description. This assumption is compatible with the assumption of material covariance. Here $\epsilon=\bar\epsilon d^Nu$ is a bundle map
\[
\epsilon : \Lambda^NT^*_u Q \times \Lambda^NT^*_u Q \times S^2 T_xQ \times S^2 T_x^*Q\rightarrow \Lambda^N_u Q
\]
covering the identity on $Q$.
From this expression, we compute the momenta from the second equality in \eqref{momenta} as
\begin{equation}\label{momenta_continuum}
\begin{aligned} 
P_\alpha ^i =\left( \Big(\bar \epsilon   -\frac{\partial\bar \epsilon  }{\partial \bar \rho    } \bar \rho      - \frac{\partial\bar \epsilon  }{\partial \bar s   } \bar s    \Big) (F ^{-1} )_\alpha ^i  + 2\frac{\partial\bar \epsilon  }{\partial b^{\alpha\beta}}F_i ^\gamma G^{ij} F_j ^\beta ( F ^{-1} )_\gamma^i    \right) \operatorname{det}(F),
\end{aligned}
\end{equation}  
where we introduced the notations
\[
\bar \rho   = \bar\varrho\operatorname{det}(F)^{-1},\quad \bar s   =\bar\varsigma\operatorname{det}(F)^{-1},\quad b^{\alpha\beta}=F^\alpha_i G^{ij} F_j^\beta.
\]
These are the local expressions of the mass density and entropy density in Eulerian description, and of the Finger deformation (or left Cauchy-Green) tensor.

The associated  Cauchy stress tensor density $\sigma$, see the second equation in \eqref{Eulerian_momenta}, is
\begin{equation}\label{expression_sigma}
\sigma ^{\alpha\beta} = \Big(\bar \epsilon   -\frac{\partial\bar \epsilon  }{\partial \bar \rho    } \bar \rho      - \frac{\partial\bar \epsilon  }{\partial \bar s   } \bar s    \Big) g^{\alpha\beta} + 2\frac{\partial\bar \epsilon  }{\partial b^{\gamma \delta}}F_i ^\alpha G^{ij} F_j ^\delta g^{\gamma\beta}.
\end{equation}
If in addition $\mathcal{E}$ satisfies the material frame indifference, then
\begin{equation}\label{two_symmetries}
\epsilon(\psi^*\rho, \psi^* s, \psi^* b, \psi^*g)= \psi^*\big(\epsilon(\rho,s,b,g)\big),
\end{equation}
for all diffeomorphisms $\psi$ of $Q$ and we have the Doyle-Ericksen formula
\begin{equation}\label{DE}
\sigma^{\alpha\beta}= 2\frac{\partial\bar \epsilon }{\partial g_{\alpha\beta}}.
\end{equation}

By inserting these relations into \eqref{HD_continuum}, we get the following result which is a step towards a more explicit expression of the Hamiltonian density, because in practice $\epsilon$, rather than $\mathcal{E}$, is given.

\begin{proposition} Assume that the Lagrangian is hyperregular and that $\mathcal{E}$ satisfies the two invariance mentioned above, and consider the associated Eulerian potential energy density $\epsilon$. Then, the Hamiltonian density of continuum mechanics  $H\in C^\infty(\mathcal{M}^0\pi)$ is given by
\begin{equation}\label{Ham_density_CM}
H(t,x^i,u^\alpha, M_\alpha, P_\alpha^i)=\frac{1}{2} g^{\alpha\beta}(u)M_\alpha M_\beta \frac{1}{ \bar\varrho (x)}+\Big(\bar \epsilon  - 2\frac{\partial\bar \epsilon  }{\partial g_{\alpha\beta} }g_{\alpha\beta}(u) \Big)\operatorname{det}(F).
\end{equation}
\end{proposition}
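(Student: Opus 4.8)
The plan is to reduce the statement to a purely algebraic identity for the ``stress part'' of the Hamiltonian density. Starting from the Legendre-transform expression \eqref{HD_continuum}, $H=\tfrac12 g^{\alpha\beta}(u)M_\alpha M_\beta\,\tfrac{1}{\bar\varrho(x)}+\bar{\mathcal E}-\tfrac{\partial\bar{\mathcal E}}{\partial F^\alpha_i}F^\alpha_i$, everything comes down to showing
\[
\bar{\mathcal E}-\frac{\partial\bar{\mathcal E}}{\partial F^\alpha_i}F^\alpha_i=\Big(\bar\epsilon-2\frac{\partial\bar\epsilon}{\partial g_{\alpha\beta}}g_{\alpha\beta}(u)\Big)\operatorname{det}(F),
\]
after which substitution into \eqref{HD_continuum} yields \eqref{Ham_density_CM} verbatim (the kinetic term is untouched).

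First I would rewrite $\tfrac{\partial\bar{\mathcal E}}{\partial F^\alpha_i}F^\alpha_i$ in terms of the Cauchy stress tensor density. By the second relation in \eqref{momenta}, $\tfrac{\partial\bar{\mathcal E}}{\partial F^\alpha_i}=-P^i_\alpha$, so $\tfrac{\partial\bar{\mathcal E}}{\partial F^\alpha_i}F^\alpha_i=-P^i_\alpha F^\alpha_i$. Contracting the definition $\sigma^{\alpha\beta}=-F^\alpha_iP^i_\gamma g^{\beta\gamma}\operatorname{det}(F)^{-1}$ from \eqref{Eulerian_momenta} with $g_{\alpha\beta}$ gives $\sigma^{\alpha\beta}g_{\alpha\beta}=-F^\alpha_iP^i_\alpha\operatorname{det}(F)^{-1}$, hence $-P^i_\alpha F^\alpha_i=\sigma^{\alpha\beta}g_{\alpha\beta}\operatorname{det}(F)$, and therefore
\[
\bar{\mathcal E}-\frac{\partial\bar{\mathcal E}}{\partial F^\alpha_i}F^\alpha_i=\bar{\mathcal E}-\sigma^{\alpha\beta}g_{\alpha\beta}\operatorname{det}(F).
\]

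Then I would invoke the two invariance assumptions. \emph{Material covariance} is exactly the postulated form $\bar{\mathcal E}=\operatorname{det}(F)\,\bar\epsilon(u^\alpha,\bar\rho,\bar s,b^{\gamma\delta},g_{\gamma\delta})$, so the previous display becomes $\big(\bar\epsilon-\sigma^{\alpha\beta}g_{\alpha\beta}\big)\operatorname{det}(F)$; \emph{material frame indifference} \eqref{two_symmetries}, combined with material covariance, yields the Doyle--Ericksen formula \eqref{DE}, $\sigma^{\alpha\beta}=2\,\tfrac{\partial\bar\epsilon}{\partial g_{\alpha\beta}}$, whence $\sigma^{\alpha\beta}g_{\alpha\beta}=2\tfrac{\partial\bar\epsilon}{\partial g_{\alpha\beta}}g_{\alpha\beta}$ and the desired identity. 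Inserting it into \eqref{HD_continuum} gives \eqref{Ham_density_CM}. As an internal check one may instead compute $\tfrac{\partial\bar{\mathcal E}}{\partial F^\alpha_i}F^\alpha_i$ directly by the chain rule through the dependence of $\bar\epsilon$ on $\bar\rho=\bar\varrho\operatorname{det}(F)^{-1}$, $\bar s=\bar\varsigma\operatorname{det}(F)^{-1}$ and $b^{\gamma\delta}=F^\gamma_iG^{ij}F^\delta_j$, using $\partial\operatorname{det}(F)/\partial F^\alpha_i=\operatorname{det}(F)(F^{-1})^i_\alpha$ and $\partial b^{\gamma\delta}/\partial F^\alpha_i\,F^\alpha_i=2b^{\gamma\delta}$, and then compare with the contraction of \eqref{expression_sigma} with $g_{\alpha\beta}$; this recovers both \eqref{momenta_continuum} and the identity above.

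I expect no real obstacle: the statement is a bookkeeping computation. The only points requiring care are keeping track of the exact list of arguments of $\bar{\mathcal E}$ and of $\bar\epsilon$ so that the chain rule in $F^\alpha_i$ (in particular through $\bar\rho$ and $\bar s$) is applied correctly, and making sure the Doyle--Ericksen formula \eqref{DE} is legitimately available under the stated hypotheses — which it is, being the standard consequence of material frame indifference recalled immediately before the proposition. If one wishes to avoid quoting \eqref{DE}, the argument can be run entirely at the level of \eqref{expression_sigma}, but that amounts to re-deriving it.
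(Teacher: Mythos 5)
Your proof is correct and follows essentially the same route as the paper: rewrite $\tfrac{\partial\bar{\mathcal E}}{\partial F^\alpha_i}F^\alpha_i$ as $\sigma^{\alpha\beta}g_{\alpha\beta}\det(F)$ via \eqref{momenta} and \eqref{Eulerian_momenta}, use material covariance to replace $\bar{\mathcal E}$ by $\det(F)\,\bar\epsilon$, and then apply the Doyle--Ericksen formula \eqref{DE} to get the stated stress term. The paper compresses all of this into ``inserting these relations into \eqref{HD_continuum}''; your contraction of the definition of $\sigma$ with $g_{\alpha\beta}$ is a clean way to make that insertion explicit and is consistent with how the paper itself uses $2\,\tfrac{\partial\epsilon}{\partial g}\!:\!g$ in the neo-Hookean example.
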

 
\subsubsection{Boundary conditions}\label{BC_CM}

We briefly describe two mains boundary conditions used in Continuum Mechanics following \S\ref{BC}. These conditions only arise at the spatial part of the boundary of the base manifold $M$, hence the bundle $B_0$ is over $[0,T] \times \partial B \subset \partial M$ only and the boundary condition reads $s( [0,T] \times \partial B) \subseteq B_0$.

For a continuum moving in a fixed domain $B' \subset Q$ diffeomorphic to $B$, we have the boundary condition $ \varphi (t, \partial B)= \partial B'$ on the motion and, in addition, the boundary condition on the Piola-Kirchhoff stress tensor $P$ given by $P_ \alpha^i(t,x) N^\flat_i(x)|_{TB'}=0$, for all $ x \in  \partial B$, with $N$ the normal vector field to $B$ with respect to $G$. This corresponds to zero tangential traction on the boundary, a condition that vanishes for fluids. In this case, the subbundle $B^0 \rightarrow [0,T] \times \partial B$ is given by
\[
B^0(t,x^i)= \{ (t,x^i,u^ \alpha , M_ \alpha , P_ \alpha ^i) \mid u^ \alpha \in \partial B', \;P_ \alpha ^i N_i^\flat  \in (TB')^\circ\}, \quad (t,x^i) \in [0,T] \times \partial B.
\]
In particular, we have $B_E^0 = [0,T] \times \partial B \times \partial B'$ and $B_0 \rightarrow B_E^0$ is a vector bundle.

For a free boundary continuum we take
\[
B^0(t,x^i)= \{ (t,x^i,u^ \alpha , M_ \alpha , P_ \alpha ^i) \mid P_ \alpha ^i N_i^\flat =0\}, \quad (t,x^i) \in [0,T] \times \partial B,
\]
which corresponds to zero traction on the boundary. This reduces to zero pressure at the boundary for fluids. We have $B_E^0 = [0,T] \times \partial B \times Q= E|_{  [0,T] \times \partial B}$ and $B_0 \rightarrow B_E^0$ again is a vector bundle.

\subsubsection{Fluid dynamics}

In this case the energy density $\epsilon$ only depends on the mass density and entropy density $\rho=\bar \rho   d^Nu$ and $s=\bar s   d^Nu$, so the Cauchy stress density is given by
\[
\sigma ^{\alpha\beta} = \Big(\bar \epsilon   -\frac{\partial\bar \epsilon  }{\partial \bar \rho    } \bar \rho      - \frac{\partial\bar \epsilon  }{\partial \bar s   } \bar s    \Big) g^{\alpha\beta}= - p(\bar \rho   , \bar s   ,g_{\alpha\beta})\sqrt{\det g} g^{\alpha\beta},
\]
see \eqref{expression_sigma}, where $p$ is the pressure of the fluid. In this case \eqref{P_sigma} yields
\[
P^i_\alpha = \operatorname{det}(F) (F^{-1})^i_\alpha p \big(\bar\varrho\operatorname{det}(F)^{-1},\bar\varsigma\operatorname{det}(F)^{-1},g_{\alpha\beta}\big) \sqrt{\det g}.
\]
This relation is of the form
\begin{equation}\label{relation_P_F_fluid}
P^i_\alpha = \mathsf{f}( \operatorname{det}(F))  (F^{-1})^i_\alpha,
\end{equation}
for some function $\mathsf{f}$. If the function $x\mapsto {\mathsf{f}(x)^N}{x}$ is invertible on $]0,\infty[$, with inverse $\mathsf{g}$, then relation \eqref{relation_P_F_fluid} is invertible, with inverse
\begin{equation}\label{inverse}
F^\alpha_i = \mathsf{f}(\mathsf{g}^{-1}(\operatorname{det}(P))) (P^{-1})^\alpha_i.
\end{equation}
In this case the Lagrangian density is hyperregular. Note that the function $\mathsf{f}$, and hence the hyperregularity, depends on the state function of the fluid, i.e., the relation $\epsilon=\epsilon (\rho, s, g)$.

Hyperregularity is satisfied for a large class of state equations, including the important case of a perfect gas for which $\epsilon(\rho, s, g)=\epsilon_0 e^{\frac{1}{C_v}\left(\frac{s}{\rho}- \frac{s_0}{\rho_0}\right)}\big(\frac{\rho}{\rho_0\mu(g)}\big)^{\gamma}\mu(g)$, where $\gamma= C_p/C_v$ is the adiabatic index and $\mu(g)$ is the volume form associated to $g$, i.e. $\mu(g)= \sqrt{\det g}d^Nu$. In this case, we compute the pressure as $p\sqrt{\det g}= (\gamma - 1)\bar \epsilon $.

Note that, as it should, $\epsilon$ satisfies \eqref{expression_sigma}. Computing the derivative of $\bar\epsilon$ with respect to the Riemannian metric, we get
\[
\frac{\partial\bar \epsilon }{\partial g_{\alpha\beta}}= \frac{1}{2}  (1- \gamma )\bar\epsilon g^{\alpha\beta},
\]
so one directly checks that the Doyle-Ericksen formula \eqref{DE} is verified.

For fluids, the Hamiltonian density is
\[
H(t,x^i,u^\alpha, M_\alpha, P_\alpha^i)=\Big(\frac{1}{2} g^{\alpha\beta}(u)M_\alpha M_\beta \frac{1}{\bar \varrho (x)}+\Big(\bar\epsilon + N p \sqrt{\det g}\Big)\operatorname{det}(F) \Big) dt\wedge d^Nx,
\]
where $\operatorname{det}(F)$ is found from \eqref{inverse}.
In particular, for the perfect gas, we have $\bar\epsilon + N p \sqrt{\det g}=(1+N(\gamma-1))\bar\epsilon$.

The fluid equations can thus be written in the canonical linear-affine bracket form \eqref{Bracket_CM}.

\subsubsection{Nonlinear elasticity}

 In general, the Hamiltonian density in nonlinear elasticity takes a complicate expression due to the dependence of $ \epsilon $ on the Finger deformation tensor $b$.
For example, for the compressible neo-Hookean material (see \cite{SiTaPi1985}, \cite{BaWe2000}), with $N=3$, the energy density is
\[
\epsilon ( \rho , b, g)= \frac{1}{2} \kappa \left( \ln J \right) ^2 \rho  + \frac{1}{2} \mu \left( J ^{-2/3} \operatorname{Tr}_g(b)-3 \right) \rho  , \qquad J:= \frac{ \mu (g)}{ \mu(b ^\flat)} ,
\]
where $ \kappa $ is the bulk modulus, $ \mu $ is the Lam\'e constant, and $ \mu (b^\flat)$ is the volume form associated to the Riemannian metric $b^\flat$, obtained by lowering the indices of $b$. One observes that \eqref{expression_sigma} is satisfied. 
The Doyle-Ericksen formula yields the expression of the stress tensor density
\[
\sigma = 2 \frac{\partial \epsilon }{\partial g}= \kappa (\ln J ) g^\sharp \rho+ \mu J^{-2/3} \left( b- \frac{1}{3} \operatorname{Tr}_g(b) g^\sharp\right) \rho  . 
\]
We thus get $ 2 \frac{\partial \epsilon }{\partial g}\!:\!g= 3 \kappa \ln J\rho $ which can then be inserted in \eqref{Ham_density_CM} to yield the Hamiltonian density.

We shall illustrate the derivation of the Hamiltonian density by considering the simplified situation $ \epsilon ( \rho ,b, g)= \frac{1}{2} \operatorname{Tr}_g(b) \rho$. In this case $ \sigma = b \rho $, so we get the momenta
\[
P_\alpha ^i =- G^{ij} F_j ^\beta g_{\alpha\beta}\bar\varrho.
\]
Using this and $ \sigma \!: \!g= \operatorname{Tr}_g(b) \rho $, we get the Hamiltonian density
\[
H(t,x^i ,u ^\alpha , M_\alpha, P_\alpha ^i )= \left(\frac{1}{2} g^{\alpha\beta}(u)M_\alpha M_\beta \frac{1}{ \varrho _{\rm loc}(x)}- \frac{1}{2} P_\alpha^iG_{ij}P_\beta^jg_{\alpha\beta} \frac{1}{\varrho _{\rm loc}(x)}\right)dt\wedge d^3x.
\]
The nonlinear elasticity equations can thus be written in the canonical linear-affine bracket form \eqref{Bracket_CM}.

\color{black}

\subsection{Yang-Mills theory}
Yang-Mills theory may be considered as a singular Lagrangian field theory of first order associated with a principal $G$-bundle over an oriented Riemannian (or a Lorentzian manifold) space $M$ (possibly with boundary) of dimension $m$ and where $G$ is a compact Lie group of dimension $n$ (we will follow \cite{IbSp1}).

We will denote by $g$ the metric on $M$. For simplicity, we will assume that the principal bundle is trivial, $g$ is a Riemannian metric and $\partial M = \phi$. 

Under the previous conditions, the configuration bundle of the theory is the vector bundle 
\[
\pi_{M, \frak{g}}: E: =T^*M \otimes \frak{g} \to M
\]
where $\frak{g}$ is the Lie algebra of $G$.

Then, we will proceed as follows. We will introduce a Lagrangian density on the $1$-jet bundle of the fibration $\pi_{M, \frak{g}}: T^*M \otimes \frak{g} \to M$. This Lagrangian density is singular. In fact, the image of the corresponding Legendre transformation is a proper submanifold $\mathcal{M}^1$ of the restricted multimomentum bundle ${\mathcal M}^0\pi_{M, \frak{g}}$. Using the restricted and the extended Legendre transformation, we will construct a constrained Hamiltonian section $h_1: \mathcal{M}^1 \to \mu^{-1}(\mathcal{M}^1) \subseteq {\mathcal M}\pi_{M, \frak{g}}$ of the fibration $\mu_{|\mu^{-1}(\mathcal{M}^1)}: \mu^{-1}(\mathcal{M}^1) \to \mathcal{M}^1$. Now, if we consider an (arbitrary) hamiltonian section $h: {\mathcal M}^0\pi_{M, \frak{g}} \to {\mathcal M}\pi_{M, \frak{g}}$, whose restriction to $\mathcal{M}^1$ coincides with $h_1$, we will obtain a Hamiltonian field theory in such a way that the solutions of the Hamilton-deDonder-Weyl equations for $h$ which are contained in $\mathcal{M}^1$ are just the solutions of the corresponding Yang-Mills theory.

\subsubsection{The Lagrangian formalism}
Note that the sections of the vector bundle $\pi_{M, \frak{g}}: T^*M \otimes \frak{g} \to M$ are the principal connections on the trivial principal bundle $pr_1: M \times G \to M$. As we know, the $1$-jet bundle $J^1 \pi_{M, \frak{g}}$ is an affine bundle over $T^*M \otimes \frak{g}$. The key point is that there is a canonical epimorphism off affine bundles (over the vector bundle projection $\pi _{M, \mathfrak{g} }: T^*M \otimes \mathfrak{g} \rightarrow M$), $F: J^1 \pi_{M, \frak{g}} \rightarrow \Lambda ^2T^*M \otimes \mathfrak{g} $, which is characterized by the condition
\[
F(j^1 \Theta (x))=  {\rm d} \Theta (x)+ [ \Theta(x) , \Theta (x)] , \; \; \mbox{ for } x\in M,
\]
for all principal connections $ \Theta $. In other words, the image by $F$ of the $1$-jet bundle of a principal connection is just the curvature of the connection.

If $(x^{i})$ are local coordinates on $M$ and $\{e_\alpha\}$ is a basis of $\frak{g}$, we have the corresponding local coordinates $(x^{i}, u^\alpha_i)$ on $E$ and $(x^{i}, u^\alpha_i, u^\alpha_{ij})$ on $J^1\pi_{M, \frak{g}}$. Moreover,
\[
F(x^{i}, u^\alpha_i, u^\alpha_{ij}) = \frac{1}{2} F_{kl}^\gamma(x^{i}, u^\alpha_i, u^\alpha_{ij})(dx^k \wedge dx^l) \otimes e_\gamma
\]
with
\begin{equation}\label{curvature_local} 
F_{kl}^\gamma(x^{i}, u^\alpha_i, u^\alpha_{ij}) =  u^\gamma_{lk} - u^\gamma_{kl} + c_{\alpha\beta}^\gamma u^\alpha_k u^\beta_l.
\end{equation} 
Here, $c_{\alpha\beta}^\gamma$ are the structure constants of the Lie algebra $\frak{g}$ with respect to the basis $\{e_\gamma\}$.

Next, we will introduce the Lagrangian density
\[
\mathcal{L}: J^1\pi_{M, \frak{g}} \to \pi_{M, \frak{g}}^*(\Lambda^mT^*M).
\]
First of all, since the manifold $M$ is oriented, the vector bundle $\pi_{M, \frak{g}}^*(\Lambda^mT^*M) \to E$ is the trivial line bundle $E \times \mathbb{R} \to E$. So, the Lagrangian density ${\mathcal L}$ is, in fact, a real $C^\infty$-function $L: 
J^1\pi_{M, \frak{g}} \to \mathbb{R}$.

In addition, we will fix an $Ad$-invariant scalar product $\langle 	\cdot, \cdot \rangle$ on $\frak{g}$ (which is possible, since $G$ is compact). Then, the scalar product on $\frak{g}$ and the Riemannian metric on $M$ induce a bundle metric on the vector bundle 
\[
pr_1: E \times_M \Lambda^2T^*M \otimes \frak{g} \to E.
\]
So, we can consider the real function $L: J^1\pi_{M, \frak{g}} \to \mathbb{R}$ given by
\[
L(z) = \displaystyle \frac{1}{4} \|F(z)\|^2, \; \; \mbox{ for } z \in J^1\pi_{M, \frak{g}},
\]
where the norm is taken with respect to the bundle metric on the vector bundle $
pr_1: E \times_M \Lambda^2T^*M \otimes \frak{g} \to E.$

The local expression of $L$ is
\[
L(x^{i}, u^\alpha_i, u^\alpha_{ij}) = \displaystyle \frac{1}{4} F_{kl}^\gamma(x^{i}, u^\alpha_i, u^\alpha_{ij}) F^{kl}_\gamma(x^{i}, u^\alpha_i, u^\alpha_{ij})
\]
with 
\[
F^{kl}_\gamma = F_{mn}^\beta g^{km}g^{ln}\langle\cdot, \cdot\rangle_{\beta \gamma}
\]
and $(g_{ij})$ the matrix of the coefficients of $g$, $(g^{ij})$ the inverse matrix and 
\[
\langle\cdot, \cdot\rangle_{\beta \gamma} = \langle e_\beta, e_\gamma \rangle.
\]
Thus, the Euler-Lagrange equations for $L$
\[
\displaystyle \frac{\partial}{\partial x^j}\left(\frac{\partial L}{\partial u^\alpha_{ij}}\right) - \displaystyle \frac{\partial L}{\partial u^\alpha_i} = 0, \; \; u^\alpha_{ij} = \displaystyle \frac{\partial u^\alpha_i}{\partial x^j}
\]
are, in this case, the well-known Yang-Mills equations
\begin{equation}\label{Yang-Mills-equation}
\displaystyle \sum_i \frac{\partial F^{ij}_\alpha}{\partial x^{i}} + c_{\alpha\beta}^\gamma u^\beta_i F^{ij}_\gamma =0, \; \; 
u^\alpha_{ij} = \displaystyle \frac{\partial u^\alpha_i}{\partial x^j}.
\end{equation}

\subsubsection{The Legendre transformations and the constrained Hamiltonian formalism}
First of all, we will consider the restricted Legendre transformation associated with $L$ 
\[
leg_L: J^1\pi_{M,\frak{g}} \to \mathcal{M}^0\pi_{M, \frak{g}}.
\]
Note that, since $M$ is oriented, the restricted multimomentum bundle $\mathcal{M}^0\pi_{M, \frak{g}}$ may be identified with the dual bundle $V^*(J^1\pi_{M,\frak{g}})$ of  $V(J^1\pi_{M,\frak{g}})$. So,
\[
\mathcal{M}^0\pi_{M, \frak{g}} \simeq V^*(J^1\pi_{M,\frak{g}}) \simeq E \times_M (TM \otimes TM \otimes \frak{g}^*).
\]
The transformation $leg_L$ is given by
\[
leg_L(z)(v) = \frac{d}{dt}_{|t=0} L(z + tv),
\]
for $z \in J^1_y\pi_{M, \frak{g}}$, $v \in \mathcal{M}_y^0\pi_{M, \frak{g}}$ and $y \in E$. The local expression of $leg_l$ is
\[
leg_L(x^{i}, u^\alpha_i, u^\alpha_{ij}) = (x^{i}, u^\alpha_i, \frac{\partial L}{\partial u^\alpha_{ij}}) = (x^{i}, u^\alpha_i, -F^{kl}_\gamma(x^{i}, u^\alpha_i, u^\alpha_{ij})).
\]
This implies that the image of $leg_L$ is the vector subbundle ${\mathcal M}^1$ (over $E$) of $\mathcal{M}^0\pi_{M, \frak{g}}$
\[
\mathcal{M}^1 \simeq E \times_M (\Lambda^2TM \otimes \frak{g^*}).
\]
Thus, the map $leg_1: J^1\pi_{M,\frak{g}} \to \mathcal{M}^1$ is a submersion with connected fibers and $L$ is almost regular.

On the other hand, we can consider the extended Legendre transformation $Leg_L: J^1\pi_{M,\frak{g}} \to \mathcal{M}\pi_{M, \frak{g}} \simeq  {\rm Aff}(J^1\pi_{M,\frak{g}}, \mathbb{R})$ associated with $L$ defined by
\[
Leg_L(z)(z') = \displaystyle \frac{d}{dt}_{|t=0} L(z + t(z' -z)), \; \; \mbox{ for } z, z' \in J^1_y\pi_{M,\frak{g}} \mbox{ and } y \in E.
\]
The local expression of $Leg_L$ is
\[
Leg_L(x^{i}, u^\alpha_i, u^\alpha_{ij}) = (x^{i}, u^\alpha_i, -E_L(x^{i}, u^\alpha_i, u^\alpha_{ij}), \frac{\partial L}{\partial u^\beta_{kl}}) = (x^{i}, u^\alpha_i, -E_L(x^{i}, u^\alpha_i, u^\alpha_{ij}), -F_{kl}^\beta(x^{i}, u^\alpha_i, u^\alpha_{ij})),
\]
where
\[
E_L(x^{i}, u^\alpha_i, u^\alpha_{ij}) = \displaystyle \frac{1}{4} F_{ij}^\alpha(x^{k}, u^\beta_k, u^\beta_{kl}) F^{ij}_\alpha(x^{k}, u^\beta_k, u^\beta_{kl}) - \displaystyle \frac{1}{2} c_{\alpha\beta}^\gamma u^\alpha_i u^\beta_j F^{ij}_\gamma(x^{k}, u^\beta_k, u^\beta_{kl}).
\]
Note that if $\mu: \mathcal{M}\pi_{M, \frak{g}} \to \mathcal{M}^0\pi_{M, \frak{g}}$ is the canonical projection then the image of $Leg_L$ is a submanifold $\mathcal{M}$ of $\mu^{-1}(\mathcal{M}^1) \subseteq \mathcal{M}\pi_{M, \frak{g}}$ which is diffeomorphic to $\mathcal{M}^1$, via the restriction of $\mu$ to $\mathcal{M}$. The following diagram illustrates the situation
$$
\xymatrix{&&\\&{\mathcal M} \subseteq \mu^{-1}({\mathcal M}^1)\ar@{^(->}[r]^{j_1}\ar[dd]^{\mu_1=\mu_{|\mu^{-1}({\mathcal M}^1)}}&{\mathcal M}\pi_{
M,{\mathfrak g}}\ar[dd]^\mu\\ J^1\ar[ru]_{Leg_1}\pi_{M,{\mathfrak g}}
 \ar@/_-5pc/@{->}[urr]_{Leg_L}  \ar@/_5pc/@{->}[drr]^{leg_L}
	\ar[rd]^{leg_1}&&\\&{\mathcal M}^1 \ar@{^(->}[r]^{i_1}&{\mathcal M}^0\pi_{M,{\mathfrak g}}}$$


The maps $Leg_1: J^1\pi_{M,\frak{g}} \to \mathcal{M}$ and $leg_1: J^1\pi_{M,\frak{g}} \to \mathcal{M}^1$ are surjective submersions and, thus, we have a constrained Hamiltonian field theory. As a consequence (see, for instance, \cite{DeMaMa1}), one may introduce a constrained Hamiltonian section 
\[
h_1: \mathcal{M}^1 \to \mu^{-1}(\mathcal{M}^1)
\]
in such a way that
\[
h_1 \circ leg_1 = Leg_1.
\]
In fact, $h_1= (\mu |_\mathcal{M}) ^{-1}: \mathcal{M} ^1 \rightarrow \mathcal{M} $. In addition, if $(x^{i}, u^\alpha_i, p, p_\alpha^{ij})$ and $(x^{i}, u^\alpha_i, p_\alpha^{ij})$ are the standard local coordinates on  
$\mathcal{M}\pi_{M, \frak{g}}$ and $\mathcal{M}^0\pi_{M, \frak{g}}$, respectively, we can take local coordinates
\[
(x^{i}, u^\alpha_i, p, \pi_\alpha^{ij}) \; \mbox{ and } (x^{i}, u^\alpha_i, \pi_\alpha^{ij})
\]
on $\mu^{-1}(\mathcal{M}^1)$ and $\mathcal{M}^1$, respectively, with
\begin{equation}\label{pi-p}
\frac{1}{2} \pi^{ij}_\alpha = p^{ij}_\alpha, \; \; \mbox{ for } i < j.
\end{equation}
Then,
\[
h_1(x^{i}, u^\alpha_i, \pi_\alpha^{ij}) = (x^{i}, u^\alpha_i, -H_1(x^{i}, u^\alpha_i, \pi_\alpha^{ij}), \pi_\alpha^{ij})
\]
where 
\begin{equation}\label{Expresion-local-H-1}
H_1(x^{i}, u^\alpha_i, \pi_\alpha^{ij}) = \displaystyle \frac{1}{4} p_{ij}^\alpha p^{ij}_\alpha + \displaystyle \frac{1}{2} c_{\alpha\beta}^\gamma u^\alpha_i u^\beta_j p^{ij}_\gamma = \displaystyle \frac{1}{16} \pi_{ij}^\alpha \pi^{ij}_\alpha + \displaystyle \frac{1}{4} c_{\alpha\beta}^\gamma u^\alpha_i u^\beta_j \pi^{ij}_\gamma.
\end{equation}
Note that we are assuming
\[
\pi^{ij}_\alpha = -\pi^{ji}_\alpha \; \; \mbox{ if } i > j \; \; \mbox{ and } \pi^{ii}_\alpha = 0.
\]
In addition,
\[
\pi_{ij}^\alpha = \pi^{kl}_\beta g_{ik}g_{jl} \langle \cdot, \cdot \rangle^{\alpha \beta}.
\]
It is clear that
\[
E_L= H_1  \circ leg_1.
\]
Moreover, on $\mathcal{M}^1$
\begin{equation}\label{Yang-Mills-1}
- F^{ij}_\alpha= \frac{1}{2} \pi^{ij}_\alpha  \circ leg_1.
\end{equation}
So, using (\ref{curvature_local}), (\ref{Yang-Mills-equation}) and (\ref{Yang-Mills-1}), we obtain that
\begin{equation}\label{Yang-Mills-2}
\displaystyle \frac{\partial \pi^{ij}_\alpha}{\partial x^{i}} + c_{\alpha \beta}^{\gamma}u^\beta_{i} \pi_\gamma^{ij} = 0.
\end{equation}
(\ref{Yang-Mills-1}) and (\ref{Yang-Mills-2}) are just the Yang-Mills equations for the Yang-Mills theory in the Hamiltonian side. Thus, Yang-Mills theory may be considered as a constrained (singular) Hamiltonian field theory. 

On the other hand, if $h: \mathcal{M}^0\pi_{M, \frak{g}} \to \mathcal{M}\pi_{M, \frak{g}}$ is  a Hamiltonian section which extends $h_1$ (that is, $h_{|\mathcal{M}^1} = h_1$), then we may consider the corresponding Hamiltonian field theory associated with $h$. Furthermore, using the classical results on  singular Lagrangian field theories (see \cite{DeMaMa1}), if $s^0: U\subseteq M \to \mathcal{M}^0\pi_{M,\frak{g}}$ is a solution of the Hamilton-deDonder-Weyl equations for $h$ which is contained in $\mathcal{M}^1$ then $s^0$ is just a solution of the Yang-Mills equations.

The following diagram illustrates the situation
$$
\xymatrix{&&\\&{\mathcal M} \subseteq \mu^{-1}({\mathcal M}^1)\ar@{^(->}[r]^{j_1}\ar[dd]^{\mu_1=\mu_{|\mu^{-1}({\mathcal M}^1)}}&{\mathcal M}\pi_{
M,{\mathfrak g}}\ar[dd]^\mu\\ J^1\ar[ru]_{Leg_1}\pi_{M,{\mathfrak g}}
	\ar[rd]^{leg_1}\ar[rdd]_{(\pi_{M,{\mathfrak g}})_{0,1}}&&\\&{\mathcal M}^1\ar@<1ex>[uu]^{h_1} \ar@{^(->}[r]^{i_1}&{\mathcal M}^0\pi_{M,{\mathfrak g}}\ar@<1ex>[uu]^{h}\\&M\ar[u]_{s^0}\ar[ur]_{i_1\circ s_0}&}$$

\subsubsection{The Lie algebra of currents and the linear-affine bracket for the extended Hamiltonian field theory}
After the previous subsections, we could apply all the machinery in this paper for the extended Hamiltonian field theory  and, as a consequence, we could deduce results on the Yang-Mills theory. This will be the subject of a future research. Anyway, we will remark a couple of general facts on the Lie algebra of currents, the linear-affine bracket (in Section \ref{suitable_LA_bracket}) and the Yang-Mills equations as constrained Hamilton-deDonder-Weyl equations:
\begin{itemize}
\item
First of all, following the proof of Theorem \ref{Lie-algebra-O}, we have that the space of currents, as a $C^\infty(E)$-module, may be identified with the product $\Gamma(V\pi_{M, \frak{g}}) \times \Gamma(\Lambda^{m-1}_1T^*E)$. But, since the configuration bundle $\pi_{M, \frak{g}}: T^*M \otimes \frak{g} \to M$ is a vector bundle, we have that $\Gamma(V\pi_{M, \frak{g}})$ is generated by vertical lifts of sections of the projection $\pi_{M,\frak{g}}$ (see Appendix \ref{ver-lift-section-vector-bundle}). In fact, if
\[
\theta = \theta_i(x)dx^{i}
\]
is a $1$-form on $M$ and $\xi\in \frak{g}$ then the local expression of the vertical lift of the section $s= \theta \otimes \xi$ is
\[
(\theta \otimes \xi)^{\bf v} (x^{i}, u^\alpha_i) = \theta_i(x) \xi^\alpha \frac{\partial}{\partial u^\alpha_i}.
\]
So, if we chose a local basis of $1$-forms and $(m-1)$-forms on $M$
\[
\{\theta_1, \dots, \theta_m\}, \; \; \{\alpha_1, \dots, \alpha_m\},
\]
respectively, we have a local basis 
\[
\{(\theta_i \otimes e_\gamma)^{\bf v}, \pi_{M,\frak{g}}^*(\alpha_i)\}_{i=1,1\dots,m, \gamma = 1, \dots, n} 
\]
of the space $\Gamma(V\pi_{M, \frak{g}}) \times \Gamma(\Lambda^{m-1}_1T^*E) \simeq \mathcal{O}$. Moreover, following the proof of Theorem \ref{Lie-algebra-O}, we also deduce that the Lie brackets in $\mathcal{O}$ between the previous sections are all zero. Note that if $s_1, s_1$ are sections of the vector bundle $\pi_{M, \frak{g}}: T^*M \otimes \frak{g} \to M$ then
\[
[s_1^{\bf v}, s_2^{\bf v}] = 0, \; \; i_{s_1^{\bf v}}d(\pi_{M,\frak{g}}^*\alpha) = 0,
\]
for $\alpha \in \Gamma(\Lambda^{m-1}T^*M)$.
\item 
Consider the linear-affine bracket
\[
\{\cdot, \cdot\}: \mathcal{O} \times \Gamma(\mu) \to \Gamma((\pi \circ \nu^0)^*(\Lambda^mT^*M))
\]
for the Hamiltonian field theory which extends Yang-Mills theory. We  have that if $\theta = \theta_i(x)dx^{i}$ is a $1$-form on $M$, $\xi = \xi^\alpha e_\alpha \in \frak{g}$, $\beta = \beta^{i}(x) d^{m-1}x_i$ is a $(m-1)$-form on $M$ and
\[
h(x^{i}, u^\alpha_i, p^{ij}_\alpha) = (x^{i}, u^\alpha_i, -H(x^{j}, u^\beta_j, p^{jk}_\beta), p^{ij}_\alpha)
\]
is a section of $\mu: \mathcal{M}\pi_{M, \frak{g}} \to \mathcal{M}^0\pi_{M, \frak{g}}$ then
\[
\begin{array}{rcl}
\{\widehat{(\theta \otimes \xi)^{\bf v}}, h \} & = &\left(  \displaystyle \frac{\partial \theta_i}{\partial x^j}(x) \xi^\alpha p^{ij}_\alpha - \displaystyle \frac{\partial H}{\partial u^\alpha_i}(x^j, u^\beta_j, p^{ij}_\beta)\theta_i(x) \xi^\alpha \right) d^mx \\[8pt]

\{\pi_{M, \frak{g}}^*\beta, h\} & = & \displaystyle \frac{\partial \beta^i}{\partial x^{i}}d^mx.

\end{array}
\]
In particular, if $h$ is an extension of the Yang-Mills Hamiltonian section $h_1: \mathcal{M}^1 \to \mathcal{M}\pi_{M, \frak{g}}$ then, $s^0: U \subseteq M \to \mathcal{M}^0\pi_{M, \frak{g}}$ is a solution of the Yang-Mills equations if and only if
\[
s^0(U) \subseteq \mathcal{M}^1
\]
and 
\[
\begin{array}{rcl}
(s^0)^*(d\widehat{(\theta \otimes \xi)^{\bf v}}) & = & \{\widehat{(\theta \otimes \xi)^{\bf v}}, h \} \circ s^0 \\[8pt]
(s^0)^*(d\beta)  & = & \{\pi_{M, \frak{g}}^*\beta, h\} \circ s^0,
\end{array}
\]
for $\theta$ a $1$-form on $M$, $\xi \in \frak{g}$ and $\beta$ a $(m-1)$-form on $M$.
\end{itemize}


\section{Conclusions and future work}\label{Conclusions-Future}
In this paper, we have developed a completely canonical geometric formulation of Hamiltonian Classical Field Theories of first order which is analogous to the canonical Poisson formulation of time-independent Hamiltonian Mechanics. This formulation is valid for any configuration bundle and is independent of any external structures such as connections or volume forms. We have defined a space of currents and endowed it with a Lie algebra structure, and we have shown that the bracket induces an affine representation of the Lie algebra of  currents on the affine space of Hamiltonian sections. An important difference with the case of time-independent Hamiltonian Mechanics is the linear-affine character of our bracket, which is consistent with the fact that the set of currents and the set of Hamiltonian sections are linear and affine spaces, respectively. We have applied our results to several examples and we have proved their effectiveness.

The results of this paper open some interesting future directions of research: 
\begin{itemize}
\item
Develop appropriate processes of reduction by symmetry for Hamiltonian Classical Field Theories of first order by exploiting the canonical linear-affine bracket formulation proposed in this paper.
\item
Include boundary conditions in the geometric formulation (see \cite{MaVi}) and discuss the relation between the resultant construction and the theory of Peierls brackets \cite{Pe} in the space of solutions (see \cite{FoRo,FoSa}; see also the recent papers \cite{CiDiIbMaScZa1,CiDiIbMaScZa2} and the references therein). 
\item
Discuss a canonical affine formulation of Lagrangian Classical Field Theories of first oder and obtain the equivalence with the Hamiltonian formulation for the case when Lagrangian density is almost regular (this is, for instance, the case of Yang-Mills theories discussed in this paper).
\end{itemize}

\appendix
\section{The $1$-jet bundle associated with a fibration} \label{1-jet-bundle}
Let $\pi: E \to M$ be a fibration, that is, a surjective submersion from $E$ to $M$. Assume that $dim \; M = m$ and $dim \; E = m+n$.

The $1$-jet bundle $J^1\pi$ associated with $\pi$ is the affine bundle over $E$ whose fiber at the point $y \in E$ is
\[
J_y\pi = \{ z: T_{\pi(y)}M \to T_yE \mid z \mbox{ is linear and } T_y\pi \circ z = id \}.
\]
$J^1\pi$ is modelled over the vector bundle 
\[
V(J^1\pi) = \pi^*(T^*M) \otimes V\pi \simeq \cup_{y \in E} {\rm Lin} (T_{\pi(y)}M, V_y\pi),
\]
where $V\pi$ is the vertical bundle of $\pi$.

In fact, if $z \in J^1_y\pi$ and $v: T_{\pi(y)}M \to V_y\pi \in V_y(J^1\pi)$ then one may define, in a natural way, the element $z + v$ of $J^1_y\pi$ as the linear map
\[
(z + v)(u) = z(u) + v(u), \; \; \mbox{ for all } u \in T_{\pi(y)}M.
\]
If $(x^{i}, u^{\alpha})$ are local coordinates on $E$ which are adapted to the fibration $\pi$ then one may consider the corresponding local coordinates on $J^1\pi$. Indeed, if $z \in J^1_y\pi$, $y \in E$ has local coordinates $(x^{i}, u^{\alpha})$ and
\[
z\left(\frac{\partial}{\partial x^{i}}_{|\pi(y)}\right) = \frac{\partial}{\partial x^{i}}_{|y} + u^{\alpha}
_i \frac{\partial}{\partial u^{\alpha}}_{|y}, \; \; \forall i \in \{1, \dots, m\},
\]
then $z$ has local coordinates $(x^{i}, u^{\alpha}, u^{\alpha}_i)$.

Sections of the affine bundle $\pi_{1, 0}: J^1\pi \to E$ may be identified with Ehresmann connections on the fibration $\pi: E \to M$.

We recall that an Ehresmann connection on $\pi: E \to M$ is a vector subbundle $H$ over $E$ of $TE$ satisfying
\[
TE = H \oplus V\pi.
\]
Note that if $H$ is an Ehresmann connection on $\pi: E \to M$, then one may define the horizontal lift associated with $H$ as a vector bundle morphism
\[
^{H}: E \times_M TM \to H \subseteq TE
\]
In fact, if $(y, u) \in E \times T_{\pi(y)}M$ then $(y, u)^H$ is the unique vector in $H_y$ whose projection over $M$ is $u$, that is,
\[
(T_y\pi)((y, u)^H) = u.
\]
It is clear that the horizontal lift induces, in a natural way, a section $s^H$ of the $1$-jet bundle $J^1\pi$. The previous correspondence is one-to-one.
Indeed, if
\[
\left(\frac{\partial}{\partial x^{i}}\right)^H = \frac{\partial}{\partial x^{i}} + H^{\alpha}_i(x, u) \frac{\partial}{\partial u^{\alpha}},
\]
then
\[
s^H(x^{i}, u^{\alpha}) = (x^{i}, u^{\alpha}, H^{\alpha}_i(x, u)).
\]
A (local) section $s: U \subseteq M \to E$ is said to be horizontal with respect to the Ehresmann connection $H$ if
\[
(T_xs)(u) \in H(s(x)), \; \; \forall x \in U, \; \; \forall u \in T_xM.
\]
So, if 
\[
s(x^{i}) = (x^{i}, u^{\alpha}(x)),
\]
then $s$ is a horizontal section of $H$ if and only if it satisfies the following system of partial differential equations
\[
\frac{\partial u^{\alpha}}{\partial x^i} = H^{\alpha}_{i}(x, u(x)), \; \; \forall\; i, \alpha.
\]
The Ehresmann connection $H$ is said to be integrable if the distribution $H$ is completely integrable. In such a case, for every point $y \in E$ there exists a unique horizontal (local) section $s: U \subseteq M \to E$ of $H$ such that $\pi(y) \in U$ and
\[
s(\pi(y)) = y.
\]

\section{The vertical lift of a section of a vector bundle}\label{ver-lift-section-vector-bundle}\label{ver-lift-sec}
In this appendix, we review the definition of the vertical lift of a section of a vector bundle. 

Let $\tau: W \to M$ be a vector bundle and $s: M \to W$ a smooth section of $\tau: W \to M$. Then, one may define the vertical lift $s^{\bf v}$ of $s$ as a vector field on $W$ given by
\[
s^{\bf v}(w) = \frac{d}{dt}_{|t=0}(w + t s(x)), \; \; \,\mbox{ for } w \in W_x \mbox{ and } x \in M.
\]
It is clear that $s^{\bf v}$ is vertical with respect to the vector bundle projection $\tau$.

Next, we will obtain a local expression of $s^{\bf v}$. For this purpose, we consider local coordinates $(x^{i})$ in an open subset $O$ 
of $M$ and a local basis $\{e_a\}$ of $\Gamma(W)$ in $O$. Then, we will denote by $(x^{i}, z^{a})$ the corresponding local coordinates on $W$. Moreover, if
\[
s(x^{i}) = s^{a}(x^{i})e_a(x^{i}), \; \; \mbox{ with } s^{a} \in C^{\infty}(O),
\]
then
\[
s^{\bf v}(x^{i}, z^{a}) = s^{a}(x^{i})\frac{\partial}{\partial z^{a}}_{|(x, z)}.
\]
A particular case of the previous situation is the following one. 

Let $\pi: E \to M$ be a fibration, with $dim M = m$, and $\nu: {\mathcal M}\pi = \Lambda_2^m(T^*E) \to E$ the multimomentum bundle associated with the fibration $\pi: E \to M$. It is a vector bundle over $E$ (see Section \ref{res-ext-multimomentum-bundle}).

Now, if $\Omega \in \Gamma(\Lambda^mT^*M)$ is a $m$-form on $M$ then $\pi^*\Omega$ is a $m$-form on $E$. Moreover, $\pi^*\Omega$ also is a section of the vector bundle $\nu: {\mathcal M}\pi = \Lambda^m_2(T^*E) \to E$. Thus, one may consider the vertical lift $\Omega^{\bf v} := (\pi^*\Omega)^{\bf v}$ of $\pi^*\Omega$ as a $\nu$-vertical vector field on ${\mathcal M}\pi$.
In fact, if $(x^{i}, u^{\alpha}, p, p_\alpha^{i})$ are canonical coordinates on ${\mathcal M}\pi$ as in Section \ref{res-ext-multimomentum-bundle}, then
\begin{equation}\label{vertical-lift-dmx}
(d^mx)^{\bf v} = (dx^1 \wedge \cdots \wedge dx^{m})^{\bf v} = \frac{\partial}{\partial p}.
\end{equation}

\section{Diagram}\label{diagram}

The diagram below illustrates some of the objects used in the paper and their relations. The arrows with label $(\;)^+$ (resp. $(\;) ^*$) indicate that we associate the vector bundle dual to a given affine (resp. vector) bundle.
The four dashed arrows pointing down indicate the associated model vector bundle to a given affine bundle.
The diagram illustrates the isomorphism $\sharp^{\rm aff}$ of affine bundles and three vector bundle isomorphisms naturally associated to it:
\begin{itemize}
\item[\rm (1)] the vector bundle isomorphism $\sharp=-(\sharp^{\rm aff})^+$ obtained by taking (minus) the affine dual to $\sharp^{\rm aff}$;\\
\item[\rm (2)] the vector bundle isomorphism $\sharp^{\rm lin}$ naturally induced by $\sharp^{\rm aff}$ on the associated model vector bundles;\\
\item[\rm (3)] the vector bundle isomorphism $-(\sharp^{\rm lin})^*$ obtained by taking (minus) the dual to $\sharp^{\rm lin}$.
\end{itemize}
Given a Hamiltonian section $h \in \Gamma ( \mu )$, an observable $ \alpha ^0 \in \mathcal{O} $ and a section $ \mathcal{F} _0 \in \Gamma((\pi \circ \nu^0)^*(\Lambda^mT^*M))$ (the space of sections of the vector bundle associated to $ \Gamma ( \mu )$), the diagram also illustrates the sections $ dh$, $ d \alpha ^0$, $d^l \mathcal{F} _0$, $ \mu _0(d \alpha ^0)$ and the corresponding objects obtained by applying sharp operators, namely $\sharp^{\rm aff} dh$, $ \sharp d \alpha ^0$, $\sharp^{\rm lin} d^l \mathcal{F} _0$, $ -( \sharp^{\rm lin}) ^* \mu _0(d \alpha ^0)$. These objects enter into the definition of the three structures $\{\cdot,\cdot\}$, $\{\cdot,\cdot\}_l$, $\{\cdot,\cdot\}_\mathcal{O}$ as 
\begin{align*} 
\{\alpha_0, h\}&= \langle d\alpha^0, \sharp^{\rm aff} dh  \rangle = - \langle\sharp d\alpha^0, dh \rangle \\
\{\alpha_0, \mathcal{F}_0\}_l&= \langle \mu_0d\alpha^0, \sharp^{\rm lin} d^l\mathcal{F}_0  \rangle = - \langle\mathcal{H}_{\alpha^0}, d^l\mathcal{F}_0  \rangle \\
\{\alpha_0, \beta_0\}_ \mathcal{O} &= i_{\mathcal{H}_{\beta^0}} d\alpha^0 =  - i_{\mathcal{H}_{\alpha^0}} d\beta^0 .
\end{align*}
\[
\hspace{-3cm}\begin{xy}
\xymatrix{
& V(\pi\circ \nu)\;\stackrel{{\tilde{\mathcal I}}}{\simeq}\;\ar[dd]^{\mathcal{J}} \widetilde{ \mathbb{P}(\pi)}^+
\ar@/^2.5pc/[rrrdd]^{\bar{\flat}}& & & \mathcal{M}(\pi\circ\nu^0)=J^1(\pi\circ\nu^0)^+&\\
& &  \widetilde{ \mathbb{P}(\pi)}\ar@{~>}[lu]^{(\;)^+}\ar[dd]_<<<<<<<{\tilde\mu}& &  &J^1(\pi\circ\nu^0)
\ar[dd]^p\ar@{~>}[lu]^{(\;)^+}\ar@/_2.5pc/[llldd]_{ \quad \;A}\\
& \frac{V(\pi\circ \nu)}{(\pi \circ \nu)^*(\Lambda^mT^*M)} \;\stackrel{{\mathcal I}}{\simeq}\;\mathbb{P}(\pi)^+\ar@{.>}[ddddd]&  &  &L=\left( \frac{J^1(\pi\circ\nu^0)}{ \operatorname{ker} A}\right) ^+\ar_{\hspace{1cm}\sharp=\flat^{-1}= - (\sharp^{\rm aff})^+}[lll] \ar@{^{(}->}[uu]\ar@{.>}[ddddd]& \\
& &  \mathbb{P}(\pi)\ar@{~>}[lu]_{(\;)^+}\ar_{\hspace{1cm}\sharp^{\rm aff}= (\hat A)^{-1}}[rrr]\ar@{.>}[ddddd]&  & & \frac{J^1(\pi\circ\nu^0)}{ \operatorname{ker} A} \ar@{~>}[lu]_{(\;)^+}\ar@{.>}[ddddd] \\
& & & & &\\
& & & \mathcal{M}^0\pi\ar@{-->}@/^0.5pc/[uul]_{dh}\ar@{-->}@/_0.5pc/[uurr]_{\Gamma_h=\sharp^{\rm aff}dh}\ar@{-->}@/^0.5pc/[lluuu]^{\sharp d\alpha^0}\ar@{-->}@/_0.5pc/[uuur]^<<<<<<<<<<<{ d\alpha^0} \ar@{-->}@/_0.5pc/[ddll]_{\mathcal{H}_{\alpha^0}= - (\sharp^{\rm lin })^*\mu_0 (d\alpha^0)}\ar@{-->}@/_1pc/[dddl]^<<<<<<<<{d^l\mathcal{F}_0}\ar@{-->}@/^0.5pc/[ddr]_{\mu_0(d\alpha^0)}\ar@{-->}@/^1.5pc/[dddrr]^{\sharp^{\rm lin}(d^l\mathcal{F}_0)}& &\\
& & & & &\\
& V(\pi \circ\nu^0)=V(\mathbb{P}(\pi))^*& & & V\left( \frac{J^1(\pi\circ\nu^0)}{ \operatorname{ker}A}\right)^*\ar_{\hspace{1.5cm}- (\sharp^{\rm lin})^*}[lll]&\\
& &V(\mathbb{P}(\pi)) \ar@{~>}[lu]^{(\;)^*}\ar_{\sharp^{\rm lin}= (\flat^{\rm lin})^{-1}}[rrr]& & &V\left( \frac{J^1(\pi\circ\nu^0)}{ \operatorname{ker} A}\right) \ar@{~>}[lu]^{(\;)^*}
}
\end{xy}
\]

\end{document}